\pdfoutput=1
\documentclass[acmsmall, screen, nonacm, natbib=false]{acmart}
\usepackage{preamble}
\usepackage{jon-tikz}
\usepackage{iblock}
\usepackage{macros}
\usepackage{categories}

\addbibresource{refs.bib}
\addbibresource{temp-refs.bib}

\title{Directed univalence in simplicial homotopy type theory}

\author{Daniel Gratzer}
\orcid{0000-0003-1944-0789}
\email{gratzer@cs.au.dk}
\affiliation{
  \institution{Department of Computer Science, Aarhus University}
  \country{Denmark}
}

\author{Jonathan Weinberger}
\orcid{0000-0003-4701-3207}
\email{jweinberger@chapman.edu}
\affiliation{
  \institution{Fletcher Jones Faculty Fellow, Fowler School of Engineering, Schmid
College of Science and Technology, Center of Excellence in Computation, Algebra, and Topology (CECAT),
Chapman University}
  \country{USA}
}

\author{Ulrik Buchholtz}
\orcid{0000-0002-5944-6838}
\email{ulrik.buchholtz@nottingham.ac.uk}
\affiliation{
  \institution{School of Computer Science, University of Nottingham}
  \country{United Kingdom}
}

\date{\today}

\setcopyright{none}

\begin{document}

\begin{abstract}
  Riehl and Shulman's simplicial type theory extends homotopy type theory with a
  directed interval type,
  allowing it to be modeled in simplicial spaces (and simplicial objects in any higher topos).
  The main application is the development of synthetic higher category theory,
  modeling $(\infty,1)$-categories as types satisfying a complete Segal condition.

  We extend simplicial type theory with modalities and new
  reasoning principles to obtain \emph{triangulated type theory} in order to construct
  the $(\infty,1)$-category of spaces $\Space$,
  and from it many other concrete $(\infty,1)$-categories.

  We prove that homomorphisms in $\Space$ correspond to ordinary
  functions of types, \ie{}, that $\Space$ is \emph{directed} univalent.
  From this we can significantly extend the reach of synthetic higher category theory
  with more results and examples,
  including the first complete examples of
  the \emph{structure homomorphism principle},
  a directed version of the structure identity principle known from homotopy type theory.
\end{abstract}
\maketitle

\section{Introduction}
\label{sec:introduction}
Homotopy type theory (\HOTT{}) is a type theory for synthetic $\infty$-groupoid theory;
it can be modeled in, and hence serves as an internal language for,
any Grothendieck $(\infty,1)$-topos~\parencite{shulman:2019}.
It builds on Martin-Löf's dependent type theory by adding Voedvodsky's univalence axiom
and a range of higher inductive types~\parencite{hottbook}.
Martin-Löf's identity types equip every type
with a proof-relevant coherent equivalence relation which is
respected by every construction in type theory~\parencite{lumsdaine:2009}.
\HOTT{} has proven useful as a tool for synthetic homotopy theory
and is well suited for formalization using proof assistants.

Proof assistants are well-tuned to support replacing equal elements by equal elements, where
equality is reified by the intensional identity type within type theory. Accordingly, if two
distinct terms can be identified, they can be swapped out for each other in large proofs without
further effort. In \HOTT, the identity type becomes far richer and, in particular, elements of the
universe become identified whenever they are equivalent. Accordingly, users of proof assistants
based on \HOTT{} can swap out \eg{}, an implementation of the integers well-suited for reasoning
with an equivalent version tuned for efficient computation without additional effort.  This offers
the same convenience to types that function extensionality grants functions.
\textcite{angiuli:sip:2021}, for instance, show that this can be used to internalize some applications
of parametricity but, crucially, without eliminating standard models which do not support the full
apparatus of parametricity.

A type theory for groupoids makes it far easier to manipulate equality, but what about formalization
challenges which are fundamentally asymmetric? For a toy example, consider an algorithm traversing a
list to sum its elements $\Con{sum} : \prn{A : \Con{Monoid}} \to \Con{List}\,A \to A$. Univalence
and one of its important consequences, the structure identity principle, tell us that $\Con{sum}$
must respect monoid isomorphisms. But far more is true: $\Con{sum}$ commutes with all monoid
homomorphisms. To prove this we must \hypertarget{firstReq}{(1)} formulate how a monoid homomorphism
$f : A \to B$ induces a map $\Con{List}\,f : \Con{List}\,A \to \Con{List}\,B$ and
\hypertarget{secondReq}{(2)} show that $\Con{sum} \circ \Con{List}\,f = f \circ \Con{sum}$.  Neither
task follows from univalence as $f$ need not be invertible and univalence handles only symmetric
relations.

\subsection{A type theory for categories}

The above example would be possible in a version of type theory where types encoded not just
groupoids but \emph{categories}: a directed type theory. That is, each type would come equipped with
a notion of homomorphism (along with composition, \etc{}) and each term in the type theory would be
bound to automatically respect homomorphisms \eg{}, be functorial.  Aside from the benefits to
formalization, it is particularly desirable to find a directed version of \HOTT{} where types would
encode $\infty$-categories\footnote{Specifically, $(\infty,1)$-categories: categories whose morphisms form an
  $\infty$-groupoid.}~\parencite{joyal:2008,lurie:2009,cisinski:2019,riehl:2022};
$\infty$-category theory is an important area of mathematics but whose foundations are well-known to
be cumbersome. It is conjectured that directed homotopy theory could serve as the basis for a more
usable and formalizable foundation of this field. Many such theories (both homotopical and not) have
been studied over the years~\parencite{%
  licata:2011,%
  warren:2013,%
  nuyts:2015,%
  north:2018,%
  kavvos:directed:2019,%
  nuyts:2020,%
  weaver:2020,%
  ahrens:2023,%
  kolomatskaia:2023,%
  neumann:2024,
  neumann:2025}.

A key obstruction to this program is that ($\infty$-)categories do not behave well enough to support
a model of type theory where every type is a category. For instance, $\Pi$-types do not always exist
because the category of categories is not locally cartesian closed. Most directed type theories
therefore change how type theory works to \eg{}, allow only certain kinds of $\Pi$-types and
dependence. We will focus on a different approach introduced by \textcite{riehl:2017}: \emph{simplicial
  type theory (\STT{})}. The key insight is to not require that every type is an
$\infty$-category, but instead a more flexible object from which we can carve out genuine
$\infty$-categories using two definable predicates.

\STT{} extends \HOTT{} with a new type to probe the implicit categorical structure each type
possesses: the \emph{directed interval} $\Int$. \textcite{riehl:2017} further equip it with the
structure of a bounded linear order $\prn{\land,\lor,0,1}$. One can then use $\Int$ to access \eg{}, the
morphisms $a$ to $b$ in $A$ by studying ordinary functions within type theory $f : \Int \to A$ such
that $f\prn{0} = a$ and $f\prn{1} = b$.

Early
evidence~\parencite{riehl:2017,riehl:2023,riehl:2025,buchholtz:2019,buchholtz:2023,weinberger:twosided:2024,weinberger:sums:2024,weinberger:chevalley:2024,bardomiano:2024,bardomiano:2025}
suggests that simplicial type theory approaches the desired usable foundations for $\infty$-category
theory. A number of definitions and theorems from classical $\infty$-category theory have been
ported to \STT{} and the proofs are shorter and more conceptual. Even better, Kudasov's
experimental proof assistant \RZK{}~\cite{kudasov:23} for \STT{} has shown that the arguments for
\eg{}, the Yoneda lemma are simple enough to be formalized and machine-checked~\parencite{kudasov:2024}.

\begin{convention}
  For the remainder of this paper, we shall be concerned only with $\infty$-categories and
  constructions upon them. Accordingly, hereafter we largely drop the ``$\infty$-'' prefix and speak
  simply of categories, groupoids, \etc{} except in those few situations where it would cause
  ambiguity.
\end{convention}

\paragraph{Simplicial type theory, a reprise}
A brief description of simplicial type theory is in order.  As mentioned, every type $A$ in \STT{}
has a notion of homomorphism: functions $\Int \to A$. However, in arbitrary types these do not
really behave like homomorphisms \eg{}, they need not compose.

Suppose we are given $f,g : \Int \to A$ such that $f\,1 = g\,0$. A composite $h$ ought to be a
homomorphism such that $h\,0 = f\,0$ and $h\,1 = g\,1$, but not every such $h$ satisfying just these
conditions ought to be a composite. In particular, further data is required to connect $h$ with $f$
and $g$. Classically, all of this is encoded by a 2-simplex $H$ (see the left diagram in
\cref{fig:intro:simplices}). Inside simplicial type theory, we represent such 2-simplices as maps
$\Delta^2 \to A$ where $\Delta^2 = \Compr{\prn{i,j} : \Int \times \Int}{i \ge j}$ (c.f., the shaded
portion of the right-hand diagram of \cref{fig:intro:simplices}).

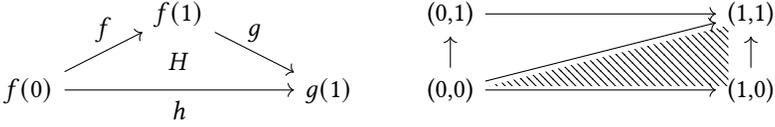
\begin{figure}
  \[
    \begin{tikzpicture}[diagram]
      \node (A) {$f\prn{0}$};
      \node[above right = 1cm and 2cm of A] (B) {$f\prn{1}$};
      \node[right = 4cm of A] (C) {$g\prn{1}$};
      \node[above right = 0.35cm and 2cm of A] (D) {$H$};
      \path[->] (A) edge node[above] {$f$} (B);
      \path[->] (B) edge node[above] {$g$} (C);
      \path[->] (A) edge node[below] {$h$} (C);
    \end{tikzpicture}
    \qquad
    \begin{tikzpicture}[diagram]
      \fill[pattern=north west lines] (0.5,0.05) -- (3.7,0.05)  -- (3.7,0.85) -- (0.5,0.05);
      \node[] (A) at (0,0) {(0,0)};
      \node [right = 4cm of A] (B) {(1,0)};
      \node [above = 1cm of A] (C) {(0,1)};
      \node [above right = 1cm and 4cm of A] (D) {(1,1)};
      \path[->] (A) edge (B);
      \path[->] (A) edge (C);
      \path[->] (C) edge (D);
      \path[->] (B) edge (D);
      \path[->] (A) edge (D);
    \end{tikzpicture}
  \]
  \caption{Illustrations of simplices}
  \label{fig:intro:simplices}
\end{figure}

In particular, a 2-simplex $H : \Delta^2 \to A$ witnesses that $H\prn{-,0}$ and $H\prn{1,-}$ can be
composed to obtain $\lambda i.\,H\prn{i,i}$. It is convenient to isolate the subtype
$\Lambda^2_1 = \Compr{\prn{i,j}}{i = 1 \lor j = 0} \subseteq \Delta^2 \subseteq \Int \times
\Int$. Unfolding, a map $\Lambda^2_1 \to A$ corresponds to a pair of composable arrows $f,g$.
Accordingly, every pair of composable arrows in $A$ admits a unique composite \ie{}, $A$ is
\emph{Segal} if $\prn{\Delta^2 \to A} \Equiv \prn{\Lambda^2_1 \to A}$.

Segal types already possess enough structure to behave like categories \eg{} it follows that
composition is associative and unital just from the Segal condition. Unfortunately, they may suffer
from an excess of data: they come equipped with two notions of sameness. Namely, $a,a' : A$ may be
regarded as the same when $a = a'$ \emph{or} when there is an invertible homomorphism from $a$ to
$a'$. In keeping with our pursuit of the structure homomorphism principle, we shall be interested in
types where these coincide \ie{} where $A \to \Sum{f : A^\Int}\IsIso\prn{f}$ is an equivalence. We
say such a type is \emph{Rezk} if it is Segal and satisfies this condition. An important result of
\textcite{riehl:2017} is that Rezk types adequately model the standard notion of
$\infty$-category~\parencite{rezk:2001}: a theorem proven in \STT{} about a Rezk type describes a valid result for ordinary $\infty$-categories.\footnote{In fact, combined with general results on
  \HOTT{}~\parencite{shulman:2019,weinberger:2022} they model \emph{internal} $\infty$-categories in an
  $\infty$-topos~\parencite{martini:2022,martini:2023,cisinski:2024}.\footnote{For a discussion on \emph{non-standard} models see~\parencite{rasekh:2025}}}

\paragraph{Directed univalence}
However, simplicial type theory is not a panacea for replacing classical $\infty$-category
theory. Presently, it is really only suitable for studying ``formal'' questions and, surprisingly,
it is unknown how to construct a non-trivial closed Rezk type within \STT{}. Crucially,
\STT{} lacks an equivalent to the category of groupoids (the $\infty$-categorical version of the
category of sets). Not only does this mean that \STT{} faces severe limitations on what theorems can
be \emph{stated}, it is presently impossible to exploit directed path types when formalizing.
Returning to our original example with $\Con{sum}$, \STT{} would automatically handle
\hyperlink{firstReq}{(1)} and \hyperlink{secondReq}{(2)} if there was a type of monoids
$\Con{Monoid}$ where directed paths were monoid homomorphisms, but such a definition is presently
out of reach.

Our central contribution is to overcome these challenges by extending \STT{} with new reasoning
principles and constructing a Rezk type $\Space$ whose objects correspond to groupoids (\ie{}, Rezk
types where every homomorphism is invertible) and whose homomorphisms are functions. This last
requirement is termed \emph{directed univalence}:
\begin{definition}
  \label{def:intro:dua}
  A universe $\Space$ is \emph{directed univalent} if $\Int \to \Space$ is isomorphic to
  $\Sum{A\,B : \Space} A \to B$ over $\Space \times \Space$.
\end{definition}

Before discussing our approach, we survey a few consequences of this result. Once $\Space$ is
available, a number of applications of \STT{} snap into focus. For instance, one can isolate
subcategories of $\Space$ such as the category of sets $\Space_{\le 0}$ and the category of
propositions $\Space_{\le -1}$. Using the ordinary constructions of type theory, one can parlay
these into our aforementioned category of monoids:
\[
  \Con{Monoid} =
  \Sum{A : \Space_{\le 0}} \Sum{\epsilon : A} \Sum{\cdot : A \times A \to A}
  \Con{isAssociative}\prn{\cdot} \times \Con{isUnit}\prn{\cdot,\epsilon}
\]
The only difference in this definition from the standard one seen in ordinary type theory is the
replacement of $\Uni$ by $\Space_{\le 0}$. However, with just this change we are able to prove the
following result:
\begin{restatable*}{lemma}{monoidnat}
  If $F,G : \Con{Monoid} \to \Space$ and $\alpha : \prn{A : \Con{Monoid}} \to F\prn{A} \to G\prn{A}$
  then $\alpha$ is natural \ie{} if $f : A \to B$ is a monoid homomorphism, then
  $\alpha\prn{B} \circ F\prn{f} = G\prn{f} \circ \alpha\prn{A}$.
\end{restatable*}
\noindent
In particular, choosing $F = \Con{List}$, $G = \ArrId{}$, and $\alpha = \Con{sum}$ yields our
desired earlier example.

Replacing $\Con{Monoid}$ with $\Con{Ring}$, one could derive a similar theorem to argue that given a
numerical algorithm $f : (R : \Con{Ring}) \to R^n \to R$ then the parity of its output (when applied
to $\mathbb{Z}$) depends on the parity of its inputs, as $f$ commutes with the map
$\mathbb{Z} \to \mathbb{Z}/2$. These are instances of a \emph{directed} version of the structure
identity principle, the structure homomorphism principle (SHP)~\parencite{coquand:sip:2013,hottbook,ahrens:2022,weaver:2020}: if $C$ is a type
of algebraic structures, its homomorphisms coincide with classical morphisms of those
structures. Consequently every term and type using $C$ is therefore automatically bound to be
functorial and respect these classical morphisms. It was observed by \eg{}, \textcite{weaver:2020} that
SHP could be used to ease formalization efforts and we provide the first complete examples
of this and by proving SHP occurs for a wide class of structures.

More broadly, just as \HOTT{} allowed us to internalize parametricity results based on equivalence
relations, \STT{} allows us to internalize parametricity arguments based on naturality. From this,
we can also recover a classic result:
\begin{restatable*}{lemma}{polyid}
  If $f : \prn{A : \Space} \to A \to A$ then $f = \lambda A\,a.\,a$.
\end{restatable*}
We may summarize these results by the slogan ``$\Space$ is a type which must be used
\emph{covariantly}.'' In particular, any type depending on $\Space$ (or types derived from it)
must be functorial in this argument.

Recreating parametricity arguments, however, is far from the only use of $\Space$. Just as we
defined $\Con{Monoid}$, we can define various categories critical for $\infty$-category theory, such
as the category of partial orders, the simplex category, the category of finite sets, \etc{} Using
these, we present the first steps towards formalizing \emph{higher algebra} (one of the main
applications of $\infty$-category theory) within type theory. Higher algebra is most often
encountered by type theorists in the form of the \emph{coherence problem} and, from this point of
view, using $\Space$ we are able to give definitions of infinitely coherent monoids, groups, \etc{}
Fundamentally, having just $\Space$ available throws open the door to defining a wide variety of
derived categories and all the applications this entails.

\subsection{Constructing \texorpdfstring{$\Space$}{S}} In a certain sense, the difficulty with
$\Space$ to \STT{} is not so much in its addition---we could always postulate a type $X : \Uni$
along with terms for the Segal and Rezk axioms, declare it to be $\Space$, and call it a day! The
challenge comes in finding a complete API for $\Space$ within \STT{} that, when established, allows
us to prove all expected results and determines $\Space$ up to a contractible choice of isomorphism.
This is where $\infty$-categories prove substantially more complex than $1$-categories. It no longer
suffices to specify objects and morphisms to define $\Space$, we must also specify the higher
simplices needed for coherent composition. Thus, even if we set aside the distasteful nature of
simply adding axioms to construct $\Space$, we would be left with the task of adding an
\emph{infinite} number of axioms on top of \eg{}, directed univalence to fully specify its behavior.
This is a famous problem of $\infty$-category theory where nearly all constructions must be carried
out indirectly through heavy machinery.

Our main theorem therefore is to construct $\Space$ internally and thereby provide a complete API
for its use. We do this by adapting the methods of \textcite{licata:2017,weaver:2020} to prove one of the most
widely-used results in $\infty$-category theory, the straightening--unstraightening
equivalence~\parencite{lurie:2009,heuts:2016,cisinski:2019,cisinski:2022}, inside of type theory. Roughly, we define
$\Space$ and prove that the type $X \to \Space$ is equivalent to the subtype of $X \to \Uni$ spanned
by \emph{amazingly covariant families}. That is, a map $X \to \Space$ corresponds to a type family
over $X$ which is covariant in $X$ as well as the context \ie{}, \emph{amazingly}
covariant~\parencite{riley:2024}.

We show that all the central properties of $\Space$ follow from this description. For instance, we
are able to show that $\Space$ is closed under the expected operations (limits, colimits, dependent
sums, and certain dependent products) and, most importantly, we prove the directed univalence axiom.

\subsection{Extending simplicial type theory to triangulated type theory}
The central challenge is giving an adequate definition of \emph{amazingly covariant} families: types
$\Gamma \vdash A : X \to \Uni$ which are covariant not only in $X$, but the entire context
$\Gamma$. This second condition, however, cannot be expressed inside of simplicial type
theory. Similar situations have arisen in many contexts within \HOTT{}~%
\parencite{%
  schreiber:2013,%
  schreiber:2014,%
  shulman:2018,%
  myers:2023} and, as in prior work, we address this lack of expressivity by extending simplicial
type theory by a collection of \emph{modalities} to capture amazing covariance.

In fact, even without amazing covariance we are led to modal simplicial type theory or indeed, modal
versions of any type theory seeking to internalize directed univalence. The reason why can be summed
up in a single word: contravariance. It is all well and good to have a type whose use is
automatically covariant, but common operations on the universe (\eg{}, $X \mapsto X \to \Bool$) are simply
not covariant, and some (\eg{} $X \mapsto X \to X$) are neither co- nor contravariant. As it stands,
$\Space$ can only be used covariantly and so we cannot express these important and natural
operations. To rectify this, we extend \STT{} with modalities which allow us to express
\emph{contravariant} dependence on $\Space$ as well as \emph{invariant} dependence. Both of these
modalities have central positions within synthetic category theory: the first sends a category to
its opposite and the second sends a category to its underlying groupoid of objects. While neither
operation can be realized as a function $\Uni \to \Uni$~\parencite{shulman:2018}, both of these
operations can be included as modalities~\parencite{gratzer:mtt-journal:2021}.

Having accepted that some modalities are necessary for simplicial type theory, it is then natural to
ask what other modalities must be added in order to internally define amazing covariance and
$\Space$.  Following \textcite{licata:2018}, we would like to include a modality which behaves like the
right adjoint to $A \mapsto \prn{\Int \to A}$; the so-called amazing right adjoint to $\Int \to
-$. In \opcit{}, the intended model (cubical sets) had such a modality but in the standard model of
simplicial type theory, no such right adjoint exists. Accordingly, we could add such a modality to
simplicial type theory, but we would have no means by which to justify it. In order to address this,
we must also weaken the standard model of simplicial type theory and, with it, the assumed structure
on $\Int$. Rather than postulating a totally ordered $\Int$, we only ask that $\Int$ be a bounded
distributive lattice where $0 \neq 1$. Semantically, this corresponds to shifting from simplicial
spaces---the standard model---to the larger category of cubical\footnote{Technically, we work within
  the category of Dedekind cubical spaces. See \cref{sec:model}.} spaces. Within this new
category, the necessary right adjoint exists and we can justify the addition of the necessary
modality. In order to manipulate these new modalities and relaxed interval, we also axiomatize
several general facts from the cubical spaces model. All told, we work within a version of
\MTT{}~\parencite{gratzer:mtt-journal:2021} (to account for modalities) and with a less structured
interval $\Int$. We term the result \emph{triangulated type theory} \TTT{}.

Within \TTT{}, we isolate \emph{simplicial types}, those which
\emph{believe} the interval to be totally ordered. Simplicial types ``embed'' \STT{} into \TTT{} and
we are eventually interested only in these types (in fact, mostly in simplicial Rezk types).
However, the presence of non-simplicial types is crucial to allow for the constructions needed to
define $\Space$---even though $\Space$ will itself turn out to be simplicial Rezk.

Finally, we note that while \MTT{} enjoys canonicity~\parencite{gratzer:normalization:2022}, adding
axioms (univalence, $\Int$, \etc{}) obstructs computation and so canonicity does not hold for
\TTT{}. Accordingly, \TTT{} is closer to ``book \HOTT{}''~\parencite{hottbook} than cubical type
theory~\parencite{cohen:2017}. We leave it to future work to develop computational versions of
our new axioms and integrate existing computational accounts of univalence in
\MTT{}~\parencite{aagaard:2022}.

\subsection{Contributions}
We contribute \TTT{}, a modal extension of simplicial type theory, and use it to construct a
directed univalent universe of groupoids $\Space$. In so doing, we construct the first non-trivial
examples of categories within simplicial type theory. More specifically:
\begin{itemize}
\item We identify several general and reusable reasoning principles with which to extend \STT{}.
\item We prove that $\Space$ satisfies (directed) univalence, as well as the Segal and Rezk
  conditions.
\item We construct \emph{full subcategories} purely internally and isolate important subcategories
  of $\Space$.
\item We build numerous important classical examples of categories \eg{}, presheaves, spectra,
  partial orders, and other (higher) algebraic categories from $\Space$.
\end{itemize}
Finally, we crystallize a conjectured \emph{structure homomorphism principle} which can be
used to recover various parametricity arguments as well as automatically discharge functoriality
goals and proof obligations. We give the first complete example applications of this principle.

We have endeavored throughout this paper to make most proofs reasonably explicit. This is not
only for the sake of rigor, but because a major contribution of our synthetic approach with both
\STT{} and \TTT{} is the comparative simplicity of the proofs. Crucially, no knowledge of
$\infty$-categories or the semantics of homotopy type theory is required by our key
arguments. Even the most complex arguments in \cref{sec:space} take up only half of page and are
possible to follow to those experienced with (modal) type theory. Ideally, we would substantiate
this claim by formalizing our arguments in a proof assistant, but there is presently no suitably
general implementation of modal type theory.

\begin{acks}
  For interesting and helpful discussions around the material of this work we would like to thank
  Mathieu Anel,
  Steve Awodey,
  Fredrik Bakke,
  Lars Birkedal,
  Tim Campion,
  Evan Cavallo,
  Felix Cherubini,
  Denis-Charles Cisinski,
  Bastiaan Cnossen,
  Jonas Frey,
  Rune Haugseng,
  Sina Hazratpour,
  Andr{\'e} Joyal,
  Dan Licata,
  Louis Martini,
  Hoang Kim Nguyen,
  Nima Rasekh,
  Emily Riehl,
  Maru Sarazola,
  Christian Sattler,
  Michael Shulman,
  Jonathan Sterling,
  Thomas Streicher,
  Chaitanya Leena Subramaniam,
  Paula Verdugo,
  Dominic Verity,
  Florrie Verity,
  Matthew Weaver,
  and Sebastian Wolf.

  Daniel Gratzer was supported in part by a Villum Investigator grant (no. 25804), Center for Basic
  Research in Program Verification (CPV), from the VILLUM Foundation. Ulrik Buchholtz acknowledges the support
  of the Centre for Advanced Study (CAS) at the Norwegian Academy of Science and Letters in Oslo,
  Norway, which funded and hosted the research project Homotopy Type Theory and Univalent Foundations during the academic year 2018/19, during which early stages of this work have been carried out. Jonathan Weinberger is grateful to the US Army Research Office for the support of this work under MURI Grant W911NF-20-1-0082. He also acknowledges the support of CAS and and the hospitality of Bj{\o}rn Ian Dundas and Marc Bezem on the occasion of several guest visits of the HoTT-UF project.
\end{acks}

\section{A primer on simplicial and modal type theory}
\label{sec:stt}

Before diving into the construction of the universe of groupoids, we recall some of the details of
simplicial type theory from \textcite{riehl:2017} and its modal extension. Both simplicial type
theory and the modal type theory we combine it with are extensions of homotopy type theory and so,
while we assume some familiarity with \HOTT{}, we recall some of the basic notions ``book \HOTT'' as
described by the \textcite{hottbook} to fix our chosen notation.

Recall that book \HOTT extends an ordinary type theory with the univalence axiom. For us, this
ordinary will be intensional Martin-L{\"o}f type theory with a hierarchy of universes $\Uni[0] :
\Uni[1] : \dots$ \etc{} We will further assume that these universes are cumulative and closed under
all relevant connectives. Notably, we assume our universes are closed under propositional truncation
and---in one instance---pushouts. We do not require that these higher inductive types satisfy any
particular definitional equalities, as this is presently not supported by the interpretation of
\HOTT{} into an $\infty$-topos~\parencite{lumsdaine:2020,shulman:2019}.

We will follow \textcite{hottbook} and write
$a =_A b$ (or, even more tersely) $a = b$ for the intensional identity type. Moreover, if $p : a =
b$ we shall write $p_* : B\prn{a} \to B\prn{b}$ for the transport function defined by path induction
on $p$ associated with $B : A \to \Uni$ and, on occasion, $p \bullet q$ for the concatenation of two
paths.

Most importantly, we shall assume that each universe $\Uni[i]$ satisfies Voevodsky's univalence
axiom. In particular, if we write $A \Equiv B$ for the subtype of $A \to B$ spanned by equivalences,
we assume that the following canonical map is an equivalence:
\[ \Con{ua} : \prn{A\,B : \Uni[i]} \to \prn{A = B} \to \prn{A \Equiv B} \]
We refer the reader again to \textcite{hottbook} for a thorough discussion of this axiom. Finally,
we recall a few crucial notations from \opcit{} which we shall repeatedly use:
\begin{gather*}
  \IsContr,\IsProp,\IsSet : \Uni \to \Uni;
  \quad \IsContr\,A = \Sum{a : A} \Prod{b : B} a = b,
  \\
  \IsProp\,A = \Prod{a\,b : A} \IsContr\prn{a = b},
  \quad
  \IsSet\,A = \Prod{a\,b : A} \IsProp\prn{a = b}
\end{gather*}

These predicates respectively isolate (1) types which behave like $\Unit$, \ie{} are contractible,
(2) types which behave like propositions, and (3) types which behave like discrete spaces (\ie{}
sets). In homotopical parlance, these are the $(-2)$-, $(-1)$-, and $0$-truncated types. In fact, we can
define each of these predicates as instances of a more general $\Con{hasHLevel} : \Nat \to \Uni \to
\Prop$, but we do not have need for this additional generality. Each of these induce subtypes of the
universe \eg{}, $\Prop = \Sum{A : \Uni} \IsProp\,A$. For instance, we may speak of families of
propositions over $A$ (predicates) using maps $A \to \Prop$.

\subsection{Simplicial type theory and basic category theory}

We now turn to simplicial type theory, an extension of \HOTT{} designed to reason about simplicial
spaces and, through them, $\infty$-categories. The main axiom of simplicial type theory 
asserts the existence of a type which internalizes the representable $\Delta^1$ or,
equivalently, the category with two objects and one non-trivial morphism connecting them: 
\begin{definition}
  Core simplicial type theory \STT{} extends homotopy type theory with the following:
  \begin{enumerate}
  \item A \emph{directed interval} type $\Int : \HSet$
  \item The operations and equations shaping $\Int$ into a bounded total order $\prn{0,1,\le}$.
  \end{enumerate}
\end{definition}

The precise form of this axiom is subject to numerous variations. For instance, in the original
incarnation of simplicial type theory~\parencite{riehl:2017} featured a bespoke judgmental structure for
$\Int$ which enabled additional definitional equalities. In order to keep the system simple in
anticipation of adding various modalities to it presently, we have eschewed this structure so that
$\Int$ is an ordinary type. Moreover, since we shall shortly be interested in a model of simplicial
type theory in \emph{cubical} spaces, we have only required that $\Int$ be a bounded distributive
lattice rather than a linear order.

Using the lattice structure on $\Int$, we can now specify the common simplicial shapes used to model
composition in $\infty$-category theory \eg{} $\Delta^n$:
\[
  \Delta^0 \defeq \Unit
  \quad
  \Delta^{n+1} \defeq
  \Compr{\prn{i_1, \ldots, i_{n+1}} : \Int^{n+1}}{i_1 \ge i_2 \geq \ldots \geq i_{n+1}}
  \quad
  \Lambda^2_1 \defeq \Compr{\prn{i,j} : \Int^2}{i = 1 \lor j = 0}
\]
One can also give general descriptions of the boundaries $\partial \Delta^n$ and the $(n,k)$-horns
$\Lambda_k^n$, for $n \geq 0$ and $0 \leq k \leq n$~\cite[Section~3]{riehl:2017}. We use these to
define categories and related structures.

\begin{definition}
  Given $a,b : A$, the type of \emph{homomorphisms} or \emph{arrows} from $a$ to $b$ is given by
  \[
    \Hom[A]{a}{b} \defeq \Sum{f : \Int \to A} f\,0 = a \times f\,1 = b
  \]
  In other words, $\Hom[A]{a}{b}$ is the (homotopy) fiber of $A^\Int \to A \times A$ over
  $\prn{a,b}$. For convenience, we suppress the forgetful map $\Hom[A]{a}{b} \to \prn{\Int \to A}$
  and so will write $f\prn{i}$ when $f : \hom_A(a,b)$ or similar.
\end{definition}

\begin{notation}
  We write $\GenArr$ for the tautological homomorphism $\Hom{0}{1}$ in $\Int$ induced by $\ArrId{}$.
\end{notation}

We can relativize the notion of homomorphisms to dependent types:
\begin{definition}
  Given $a,b : A$ and $f : \Hom[A]{a}{b}$, for a type family $P : A \to \Uni$, a \emph{dependent
    homomorphism} from $x : P\,a$ and $y : P\,b$ over $f$ is given by
  \[
    \Hom[P]<f>{x}{y}
    \defeq
    \Sum{\varphi : \prn{i : \Int} \to P\prn{f\,i}}
    (\prn{\Proj[2]\,{f}}_*\prn{\varphi\,0} = x)
    \times
    (\prn{\Proj[3]\,{f}}_*\prn{\varphi\,1}= y)
  \]
  Note that we must transport by the identifications $\Proj[2]{f} : f\prn{0} = a$ and
  $\Proj[3]{f} : f\prn{0} = b$ in order to ensure that these equalities are type-correct.%
  \footnote{These transports are precisely what the judgmental extensions of \textcite{riehl:2017}
    aim to avoid. Fortunately they do not occur so frequently as to be a major impediment in this
    paper.}
\end{definition}

With the apparatus of morphisms to hand, we may recall the following definition of
\emph{pre-categories} \ie{} \emph{Segal types} from the introduction:

\begin{definition}
  $A : \Uni$ is \emph{Segal} if the canonical map
  $i : (\Delta^2 \to A) \to (\Lambda_1^2 \to A)$ is an equivalence.
\end{definition}

\begin{notation}
  If $A$ is Segal and $f : \Hom{a}{b}$, $g : \Hom{b}{c}$, we write $g \circ f$ for the map $\Int \to
  A$ given by $\lambda t.\,\prn{i^{-1}\prn{f,g}}\,\prn{t,t}$ \ie{} the long edge of the triangle obtained
  by extending $\prn{f,g} : \Lambda^2_1 \to A$ to $\Delta^2 \to A$. This operation is automatically
  associative and constant functions $\Int \to A$ (identity homomorphisms) are units for $\circ$. A
  major benefit of working in simplicial type theory is that such a composition does exist, even
  though composites are defined only up to a contractible choice.
\end{notation}

\begin{definition}
  We say an arrow $f : \Hom{a}{b}$ in a Segal type $A$ is an \emph{isomorphism} if the following
  type is inhabited: $\IsIso(f) \defeq \Sum{g\,h : \Hom{b}{a}} (g \circ f = \ArrId{a}) \times (f
  \circ h = \ArrId{b})$.
\end{definition}
Note that $\IsIso(f)$ is a proposition and we denote the induced subtype of $\Hom{a}{b}$ by
$a \cong_A b$. With the definition of isomorphism to hand, we can properly define \emph{categories}
and \emph{groupoids}:

\begin{definition}
  If $A$ is Segal, we say $A$ is a \emph{category}/\emph{Rezk-complete} if the following map
  (defined by path induction) is an equivalence: $\IdToIso : \prn{a,b:A} \to \prn{a = b} \to \prn{a \cong b}$
\end{definition}

\begin{remark}
  We note that the proposition $\IsIso\prn{f}$ requires that $f$ have a section and a retraction; a common
  definition of equivalence in \HOTT{}. We shall return to this point in \cref{sec:space} when we
  prove that our directed-univalent universe satisfies the Rezk condition.
\end{remark}

\begin{definition}
  A type $A$ is \emph{groupoid} or a \emph{space} or \emph{$\Int$-null} if
  $\prn{a = b} \to \prn{a \to b}$ is an equivalence.\footnote{The terminology ``$\Int$-null'' stems from
    \textcite{rijke:2020}; it is equivalent to requiring that the constant map $A \to \prn{\Int \to
  A}$ is an equivalence.}
\end{definition}

\begin{lemma}
  A type is a groupoid if and only if it is a category where every arrow is an isomorphism.
\end{lemma}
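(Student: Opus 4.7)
For the easy direction ($\Leftarrow$), suppose $A$ is a category in which every arrow is an isomorphism. Rezk-completeness gives $\IdToIso : (a = b) \to (a \cong b)$ as an equivalence, and the assumption makes the forgetful map $(a \cong b) \to (a \to b)$ an equivalence since $\IsIso$ is a proposition that always holds. Composing yields $(a = b) \Equiv (a \to b)$, which is precisely the groupoid condition.

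For the converse ($\Rightarrow$), suppose $A$ is $\Int$-null. I would first record that $\Int$-nullity is closed under function types: the chain $(A^X)^{\Int} \Equiv A^{X \times \Int} \Equiv (A^{\Int})^X \Equiv A^X$ shows $A^X$ is $\Int$-null for every $X$. The main work is to establish that $A$ is Segal, and I would prove the sharper statement that $A^{\Delta^2} \Equiv A^{\Lambda^2_1} \Equiv A$ via the constant map. For $\Delta^2$, the meet operation provides a deformation $H(t,(i,j)) = (t \wedge i,\, t \wedge j)$ which retracts $\Delta^2$ onto the vertex $(0,0)$; monotonicity of $\wedge$ keeps $H$ within $\Delta^2$. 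Given $f : \Delta^2 \to A$, the map $t \mapsto f \circ H(t,-)$ is a path in $A^{\Delta^2}$ from $\lambda p.\,f(0,0)$ to $f$, and since $A^{\Delta^2}$ is $\Int$-null, this path is constant, so $f$ itself is. For $\Lambda^2_1$, I would recognize the horn as the pushout of the two edges $\{(1,-)\}$ and $\{(-,0)\}$ along their shared vertex $(1,0)$, yielding $A^{\Lambda^2_1} \Equiv A^{\Int} \times_A A^{\Int} \Equiv A \times_A A \Equiv A$. The Segal restriction is then an equivalence between two copies of $A$.

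The Rezk condition and the fact that every arrow is an isomorphism both follow from the equivalence $(a = b) \Equiv (a \to b)$ supplied by groupoidality. An arrow $f : a \to b$ corresponds to a path $p : a = b$, whose inverse transports back to a candidate inverse $f^{-1} : b \to a$; after verifying that Segal composition corresponds under this equivalence to path concatenation, we obtain $f \circ f^{-1} = \ArrId{b}$ and dually, so every arrow is an isomorphism. This makes the inclusion $(a \cong b) \hookrightarrow (a \to b)$ an equivalence, and composing $(a = b) \Equiv (a \to b) \Equiv (a \cong b)$ (with a check that the composite agrees with $\IdToIso$) gives Rezk-completeness. The main obstacle is the collapse $A^{\Delta^2} \Equiv A$: the deformation must avoid negation on $\Int$ (absent in the bounded distributive lattice structure), but meet is monotone in both arguments, so $(t \wedge i, t \wedge j)$ stays in $\Delta^2$. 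The pushout description of $\Lambda^2_1$ is routine but relies on the subtype decomposition of the horn matching its expected homotopical pushout structure.
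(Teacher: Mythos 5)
The paper states this lemma without proof, so I can only assess your argument on its own merits.

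Your proof is correct in its essentials and its overall architecture is sound. The $(\Leftarrow)$ direction is clean. For $(\Rightarrow)$, the closure of $\Int$-nullity under exponentials is right, the $\wedge$-deformation $H(t,(i,j)) = (t \wedge i, t \wedge j)$ does retract $\Delta^2$ onto $(0,0)$ within $\Delta^2$ (monotonicity of meet in a lattice is all you need, and this is available in the weaker lattice setting of the paper as well), and the conclusion that $A^{\Delta^2} \Equiv A$ via the constant map because $A^{\Delta^2}$ is $\Int$-null is correct. The pushout presentation of $\Lambda^2_1$ as $\Int \sqcup_{\ObjTerm{}} \Int$ along the shared vertex $(1,0)$ is the right picture; it does rely on the HoTT fact that a union of two embeddings is the pushout over their intersection, which deserves a citation or a one-line proof but is standard.

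The one place I would tighten is the Rezk step. You propose constructing an explicit inverse from the transported path and then ``verifying that Segal composition corresponds under this equivalence to path concatenation.'' That verification is itself a small lemma and you leave it unargued. A more economical route avoids it entirely: once $A$ is Segal, $\IdToArr : (a = b) \to (a \to b)$ factors as $\iota \circ \IdToIso$ (check at $\Refl$ by path induction, where both sides give $\ArrId{a}$). Since $\IdToArr$ is an equivalence by groupoidality and $\iota : (a \cong b) \hookrightarrow (a \to b)$ is an embedding, $\iota$ is surjective and hence an equivalence, which gives ``every arrow is an iso,'' and then $\IdToIso = \iota^{-1} \circ \IdToArr$ is an equivalence, which gives Rezk-completeness. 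This dispenses with any comparison of composition laws. Alternatively, for ``every arrow is an iso'' alone, path induction on the preimage of $f$ under $\IdToArr$ reduces directly to $\IsIso(\ArrId{a})$. Either variant sidesteps the gap you flag; with one of them substituted in, the proof is complete.
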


Intuitively, a type is Rezk when it satisfies a kind of univalence condition: isomorphism is
identity. In the intended model of \TTT{}, they correspond to complete Segal spaces, in turn, model
$(\infty,1)$-categories. Op.~cit.~further show that maps between Segal types are
automatically \emph{functors} \ie{} they preserve composition and identities.

\subsection{Multimodal type theory}
\label{sec:stt:mtt}

As mentioned in \cref{sec:introduction}, we must extend type theory with various \emph{modalities}
in order to define $\Space$. We shall do this by ``rebasing'' simplicial type theory atop
\MTT{}~\parencite{gratzer:mtt-journal:2021}, a general framework for modal type theory. In
particular, we shall take our base type theory to not just be intensional Martin-L{\"o}f type
theory, but a slightly richer theory which includes various modalities alongside the ordinary
constructors of dependent type theory. Since \MTT{} is already a complex type theory without any of
the additional axioms of homotopy or simplicial type theory, we give a brief overview of the theory
now. We refer the reader to \textcite{gratzer:mtt-journal:2021} or \textcite[Chapter 6]{gratzer:phd}
for a more thorough introduction to the theory. We will also explain \MTT{} as it is used in this
paper: in the same informal style that we will use type theory generally. Accordingly, we will not
focus overmuch on the substitution calculus of \MTT{} or other aspects of its metatheory which need
not concern us presently. An exception to this general pattern is the description of the semantics
of \TTT{}, but we will briefly recall the model theory of \MTT{} at that point.

We must immediately note that \MTT{} is not, properly speaking, a type theory. Rather, \MTT{} is a
\emph{framework} for modal type theories: a user picks a mode theory---a 2-category describing their
modalities---and \MTT{} produces a type theory for working with this collection of modalities. For
this exposition, let us fix $\Mode$ an arbitrary 2-category where we think of objects (\emph{modes})
$m,n$ as type theories connected by the 1-cells (\emph{modalities}) $\mu,\nu$. The 2-cells
$\alpha,\beta$ encode transformations between modalities enabling us to control \eg{}, whether $\mu$
is a comonad. In fact, for our particular use case we shall have at most one 2-cell between any pair
of modalities and exactly one mode. For simplicity, we shall assume the same to be true of $\Mode$
and we write $\mu \le \nu$ when there exists a (necessarily unique) 2-cell from $\mu$ to $\nu$.
Instantiating \MTT{} with $\Mode$ yields a type theory which includes a modal type for each $\mu$ in
the mode theory and, furthermore, these modal types are \emph{2-functorial}.

The basic modification \MTT{} makes to the type theory is to change the form of variables in the
context. A context is no longer simply a telescope of bindings $x : A$. Instead, each declaration is
annotated by a pair of modalities $\DeclVar{x}{\mu/\nu}{A}$. The annotation $\mu/\nu$ signifies that
$x$ was constructed under the $\mu$ modality and, presently, we are working to
construct an element of the $\nu$ modality.

\begin{notation}
  In a declaration $\DeclVar{x}{\mu/\nu}{A}$ we shall often omit $\mu$ or $\nu$ if they are the
  identity \eg{}, $\DeclVar{x}{\mu}{A}$ or $y : B$ rather than $\DeclVar{x}{\mu/\ArrId{}}$ or
  $\DeclVar{y}{\ArrId{}/\ArrId{}}{B}$.
\end{notation}

Both halves of the annotation $\mu/\nu$ restrict how variables are used to prevent terms from illegally escaping or entering modalities and, roughly, we are
allowed to use a variable when they cancel.

\begin{mathpar}
  \inferrule{
    \mu \le \nu
    \\
    \DeclVar{x}{\mu/\nu}{A} \in \Gamma
  }{
    \IsTm[\Gamma]{x}{A}
  }
  \and
  \inferrule{
    \IsTm[\Gamma/\mu]{a}{A}
    \\
    \IsTm[\Gamma, \DeclVar{x}{\mu/\ArrId{}}{A}]{b\prn{x}}{B\prn{x}}
  }{
    \IsTm[\Gamma]{\Sb{b}{a/x}}{\Sb{B}{b/x}}
  }
\end{mathpar}

In the above, $\Gamma/\mu$ denotes the context with the same variables as $\Gamma$ but where
$\DeclVar{x}{\nu/\nu_0}{A}$ is replaced by $\DeclVar{x}{\nu/\prn{\nu_0 \circ \mu}}{A}$. Note we have
presented only the relevant and simpler substitution rule allowing us to discharge an assumption
with the ``denominator'' of an annotation is the identity. Note also that \MTT{} does not alter the
actual definition of substitution from type theory---one merely proves after the fact that ordinary
substitution does not result in terms using inaccessible variables.

These annotations are also used to introduce the modal types associated with each $\mu$.
For instance, suppose we have a modality $\mu$, and we intend to form the modal type
$\Modify[\mu]{A}$. This is well-formed in context $\Gamma$ just when $A$ is well-formed in the
context $\Gamma/\mu$. Similarly, we can form an element of the modal type
$\MkMod[\mu]{a} : \Modify[\mu]{A}$ in context $\Gamma$ just when $a : A$ in the context
$\Gamma/\mu$:
\begin{mathpar}
  \inferrule{
    \IsTy[\Gamma/\mu]{A}
  }{
    \IsTy[\Gamma]{\Modify[\mu]{A}}
  }
  \and
  \inferrule{
    \IsTm[\Gamma/\mu]{a}{A}
  }{
    \IsTm{\MkMod[\mu]{a}}{\Modify[\mu]{A}}
  }
\end{mathpar}

The elimination rule for $\Modify[\mu]{-}$ papers over the difference between
$\DeclVar{a}{\nu\circ\mu/\ArrId{}}{A}$ and
$\DeclVar{a'}{\nu/\ArrId{}}{\Modify[\mu]{A}}$:
\begin{mathparpagebreakable}
  \inferrule{
    \IsTy[\Gamma/\nu\circ\mu]{A}
    \\
    \IsTy[\Gamma, \DeclVar{y}{\nu/\ArrId{}}{\Modify[\mu]{A}}]{B\prn{y}}
    \\\\
    \IsTm[\Gamma, \DeclVar{x}{\nu\circ\mu/\ArrId{}}{A}]{b\prn{x}}{\Sb{B}{\MkMod[\mu]{x}/y}}
    \\
    \IsTm[\Gamma/\nu]{a}{\Modify[\mu]{A}}
  }{
    \IsTm{\LetMod<\nu>[\mu]{a}[x]{b\prn{x}}}{\Sb{B}{a/y}}
  }
  \and
  \LetMod<\nu>[\mu]{\MkMod{a_0}}[x]{b\prn{x}} = \Sb{b}{a_0/x}
\end{mathparpagebreakable}

In particular, if we are attempting to construct a term using a variable $x$ of type $\Modify{A}$,
we may reduce to the case where $x = \MkMod{x_0}$ for a fresh variable $\DeclVar{x_0}{\mu}{A}$. More
technically, this amounts to a certain canonical map being weakly orthogonal to all types. We shall
revisit this perspective in the discussion of semantics.

Already, these rules are sufficient to prove the following facts that we shall use ubiquitously:
\begin{lemma}
  If $\IsTy[\Gamma/\nu\circ\mu]{A}$ then
  $\Modify[\nu]{\Modify[\mu]{A}} \Equiv \Modify[\nu\circ\mu]{A}$ and if\,$\IsTy{B}$ then
  $\Modify[\ArrId{}]{B} \Equiv B$.
\end{lemma}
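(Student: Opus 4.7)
The plan is to construct explicit terms in each direction using only the introduction, elimination, and formation rules stated for $\Modify[\mu]{-}$, then reduce inverse laws to the $\beta$ and $\eta$ equations for modal types.

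For the first equivalence, I would define the forward map
$f : \Modify[\nu]{\Modify[\mu]{A}} \to \Modify[\nu\circ\mu]{A}$ by nested elimination: given $x$, first apply $\LetMod<\ArrId{}>[\nu]{x}$ to bind $y : \Modify[\mu]{A}$ under annotation $\nu/\ArrId{}$, then apply $\LetMod<\nu>[\mu]{y}$ to bind $z : A$ under annotation $\nu\circ\mu/\ArrId{}$, and finally return $\MkMod[\nu\circ\mu]{z}$, which is well-formed because the ambient context is exactly the restriction along $\nu\circ\mu$. The backward map $g : \Modify[\nu\circ\mu]{A} \to \Modify[\nu]{\Modify[\mu]{A}}$ is a single elimination: eliminate $x$ to bind $y : A$ under $\nu\circ\mu/\ArrId{}$ and return $\MkMod[\nu]{\MkMod[\mu]{y}}$. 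The only subtle point is checking that this nested constructor is well-typed: forming the outer $\MkMod[\nu]{-}$ shifts the denominator of $y$'s annotation to $\nu$, and forming the inner $\MkMod[\mu]{-}$ shifts it further to $\nu\circ\mu$, which cancels the numerator so that $y$ is usable.

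For the second equivalence I would simply set $f(x) \defeq \LetMod[\ArrId{}]{x}[y]{y}$ and $g(y) \defeq \MkMod[\ArrId{}]{y}$, which is well-formed because $\Gamma/\ArrId{} = \Gamma$. The round-trip identities follow directly from $\beta$/$\eta$ for the identity modality.

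In both cases, the inverse laws $f \circ g = \ArrId{}$ and $g \circ f = \ArrId{}$ reduce to the modal $\beta$ rule, which allows $\LetMod[\mu]{\MkMod[\mu]{a}}[x]{b}$ to be simplified to $b[a/x]$, and the modal $\eta$ rule, which identifies any $x : \Modify[\mu]{A}$ with $\LetMod[\mu]{x}[y]{\MkMod[\mu]{y}}$. With these available, the verification is routine unfolding. The main obstacle is purely bookkeeping: carefully tracking how the denominators of modal annotations accumulate as one nests under $\MkMod[\mu]{-}$ and $\LetMod[\mu]{-}$, so that every intermediate term is well-typed and the variable rule applies at the leaves. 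Once the annotation discipline is set up correctly, the construction itself is short.
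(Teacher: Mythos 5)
Your proposal is correct and is the standard derivation of this equivalence from the MTT rules; the paper itself does not spell out a proof, presenting the lemma as an immediate consequence of the formation, introduction, and elimination rules it has just stated. The maps you construct are the right ones, and your annotation-tracking for both the nested $\MkMod$ and the nested $\LetMod$ is accurate: restricting by $\nu$ (resp.\ $\nu\circ\mu$) turns a $\nu/\ArrId{}$-annotated (resp.\ $\nu\circ\mu/\ArrId{}$-annotated) variable into one usable by the variable rule.

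One small imprecision worth naming: MTT's modal types are \emph{weak}, so there is no judgmental $\eta$ rule. What you call ``the modal $\eta$ rule'' --- the identification of $x$ with $\LetMod[\mu]{x}[y]{\MkMod[\mu]{y}}$ --- is a propositional uniqueness principle, itself \emph{derived} from the dependent eliminator (plus judgmental $\beta$). Concretely, proving $g\circ f = \ArrId{}$ requires a nested induction: first eliminate $x : \Modify[\nu]{\Modify[\mu]{A}}$ to reduce to $x = \MkMod[\nu]{a}$ with $\DeclVar{a}{\nu/\ArrId{}}{\Modify[\mu]{A}}$, then eliminate $a$ (legal because in $\Gamma/\nu$ its annotation becomes $\nu/\nu$) to reduce to $a = \MkMod[\mu]{a_0}$, after which both round-trips collapse by $\beta$ to reflexivity. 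Since $\Equiv$ in \HOTT{} only asks for propositional inverse laws, this is exactly what is needed, and your proof goes through once phrased this way.
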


We have already seen that $\mu \le \nu$ allows us to access variables under $\mu/\nu$. This,
combined with the elimination rule for modal types, allows us to produce a function introduces a
function $\Modify[\mu]{-} \to \Modify[\nu]{-}$. In order to make this well-formed, we note the
following admissible principle in \MTT{}, akin to the admissibility of weakening in ordinary type
theory:
\begin{lemma}
  If $\IsTm[\Gamma/\nu]{a}{A}$ and $\mu \le \nu$ then $\IsTm[\Gamma/\mu]{a}{A}$.
\end{lemma}

\begin{lemma}
  If $\Mor[\alpha]{\mu}{\nu}$ and $\IsTy[\Gamma/\mu]{A}$ then there is a map $\Coe^{\mu \le \nu} :
  \Modify[\mu]{A} \to \Modify[\nu]{A}$.
\end{lemma}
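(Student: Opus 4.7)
The plan is to construct $\Coe^\alpha$ directly by combining the modal eliminator for $\Modify[\mu]{-}$ with the $-^\alpha$ substitution operator induced by $\alpha$. Concretely, I would set
\[
  \Coe^\alpha \defeq \lambda z.\,\LetMod<\ArrId{}>[\mu]{z}[y]{\MkMod[\nu]{y^\alpha}}.
\]
Given $z : \Modify[\mu]{A}$ in $\Gamma$, the eliminator binds a fresh $\DeclVar{y}{\mu/\ArrId{}}{A}$ and reduces the task to producing an element of $\Modify[\nu]{A^\alpha}$ in the extended context. Introducing with $\MkMod[\nu]{-}$ then reduces this further to exhibiting a term of $A^\alpha$ in the $\nu$-shifted sub-context, where $y$ now carries annotation $\mu/\nu$. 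At this point the $-^\alpha$ operator applies: it ships $y$ from the implicit $/\mu$-sub-context to the $/\nu$-sub-context, delivering $y^\alpha : A^\alpha$ with matching $\nu/\nu$ annotation and closing the goal.

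The only real work is bookkeeping. One must track how the annotation on $y$ evolves under the successive context shifts---$/\ArrId{}$ from the eliminator's outer modality and $/\nu$ from the $\MkMod[\nu]{-}$ introduction---and check that the $-^\alpha$ operator, which carries terms from a $/\mu$-sub-context to a $/\nu$-sub-context, is applicable at exactly the right stage. I expect this annotation-juggling to be the only obstacle, and a mild one at that: no hypothesis on $\alpha$ beyond its existence is required, and the construction is entirely mechanical once the annotations are unfolded. In particular, there is no need to invoke any 2-cell equations or naturality laws, nor to dualize $\alpha$, since the direction of $-^\alpha$ matches the direction of the desired map.
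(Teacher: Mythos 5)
Your construction is correct and is exactly the standard one: the paper states this lemma without proof as a known fact of \MTT{}, and the intended argument is precisely to eliminate $\Modify[\mu]{-}$, apply $-^\alpha$ to the bound variable (whose annotation becomes $\mu/\nu$ under the $\nu$-shift introduced by $\MkMod[\nu]{-}$), and reintroduce with $\MkMod[\nu]{-}$. The annotation bookkeeping you flag is indeed the only content, and it checks out.
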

\begin{proof}
  Though this is an elementary result, we give a proof to highlight the process of working in
  \MTT{}. Suppose we are given $x : \Modify[\mu]{A}$, we must construct a term of type
  $\Modify[\nu]{A}$. Using the elimination principle for $\Modify{A}$, we may fix
  $\DeclVar{x_0}{\mu}{A}$ and assume that $x = \MkMod{x_0}$. Next, using the introduction rule for
  $\Modify[\nu]{A}$, it suffices to construct an element of $A$, though we must update the
  annotations on $x$ and $x_0$ to $\DeclVar{x_0}{\mu/\nu}{A}$ and
  $\DeclVar{x}{\ArrId/\nu}{\Modify{A}}$. Finally, we may use the variable rule to access $x_0 : A$
  as required. All told then, the full term is given as follows:
  \[
    \Coe^{\mu \le \nu} = \Lam[x]{\LetMod{x}[x_0]{\MkMod[\nu]{x_0}}}
  \]
  This pattern of binding a variable of type $\Modify{A}$ and immediately pattern-matching upon it
  is exceptionally common. Accordingly, we adopt the following ``pattern-matching'' style notation
  for convenience:
  \[
    \Coe^{\mu \le \nu}\prn{\MkMod{x_0}} = \MkMod[\nu]{x_0} \qedhere
  \]
\end{proof}

\begin{notation}
  \label{not:modfun}
We shall also have occasion to use the convenience feature of modalized dependent products
$\prn{\DeclVar{a}{\mu}{A}} \to B\prn{a}$ which abstract over $\DeclVar{a}{\mu}{A}$ directly rather
than $\DeclVar{a}{\ArrId{}}{\Modify{A}}$ to allow us to avoid immediately pattern-matching on $a$.
In particular, $\prn{\DeclVar{a}{\mu}{A}} \to B\prn{a}$ is equivalent to
$\prn{\DeclVar{a}{\ArrId{}}{\Modify{A}}} \to \prn{\LetMod{a}[a_0]{B\prn{a_0}}}$. We shall most often
use this when working informally to type theory. For instance, if we write ``given
$\DeclVar{a}{\mu}{A}$, there exists an element of $B\prn{a}$'' this should be interpreted as
denoting the type $\prn{\DeclVar{a}{\mu}{A}} \to B\prn{a}$.
\end{notation}

\section{Triangulated type theory}
\label{sec:ttt}

As already mentioned, part of our strategy is to replicate the argument of \textcite{weaver:2020} in \STT{} combined with
\MTT{} and build a directed univalent universe of groupoids. However, there is a fundamental problem
with this approach: \textcite{weaver:2020} rely on a particular modality (the right adjoint to $\Int
\to -$) in their construction, and the standard model of simplicial type theory in $\PSH{\SIMP}$
simply does not have an analog to this modality.  Thus, we need more than a combination of \STT{}
and \MTT{}, we need a new system which admits a model where $\Int \to -$ is a left adjoint.

To motivate our solution, let us further consider the intended model of \STT{} in simplicial spaces
for a moment. Up to the complexity needed to model homotopy type theory, these are simplicial sets
\ie{}, presheaves on $\SIMP$, the category of finite, inhabited total linear orders. The interval
$\Int$ is realized by $\Yo{\brk{1}}$. Our problem then amounts to the fact that exponentiation by
$\Yo{\brk{1}}$ does not have a right adjoint.\footnote{This is easiest to check by observing that it
does not commute with pushouts.} However, there is a category closely related to $\SIMP$ which has
also received a great deal of attention by type theorists interested in cubical type theory: the
category of (Dedekind) cubes $\CUBE$, the full subcategory of partial orders spanned by $\brc{0 \le
1}^n$ for all $n$. The category $\CUBE$ enjoys two properties which make it interesting for simplicial type
theorists: there is a fully faithful functor $\PSH{\SIMP} \to \PSH{\CUBE}$ which preserves the
interval and admits both left and right
adjoints~\parencite{sattler:2019,kapulkin:2020,streicher:2021}, and within $\PSH{\CUBE}$
exponentiating by the interval is a left adjoint.

Accordingly, we introduce a relaxation of simplicial type theory intended to capture (the
homotopical version of) $\PSH{\CUBE}$. Within this type theory, we can recover simplicial type
theory by studying those types which are in the image of the aforementioned embedding alongside the
amazing right adjoint necessary for constructing our sought-after universe~\parencite{licata:2018}.

Concretely, we work within a version of \MTT{} instantiated with several modalities, further
extended by a bounded distributive lattice $\Int : \HSet$ which serves as our weakened version of
the interval, and a handful of axioms. Notably, we no longer assume that $\Int$ is totally ordered
and instead ask for it to be a bounded distributive lattice; this is our central deviation from
simplicial type theory.

In this section, we introduce \emph{triangulated type theory} as an extension of \MTT{}. For
precision, we give a complete listing of the axioms we require (including univalence, the existence
of an interval, \etc{}). Finally, we explore a few elementary consequences of this axiom and produce
our first non-trivial examples of categories.

\subsection{The definition of triangulated type theory}
\label{sec:ttt:axioms}

We begin by describing the particular instantiation of \MTT{} needed for \TTT{}.

\subsubsection{The mode theory}
As mentioned in \cref{sec:stt:mtt}, \MTT{} must be instantiated by a mode theory. In our case, we
shall require only one mode $m$ which we shall think of as cubical spaces $\PSH[\SSET]{\CUBE}$. We
shall then add the following modalities
\begin{itemize}
\item A pair of modalities $\GM,\SM$ internalizing the global sections comonad and its right adjoint.
\item A modality $\OM$ internalizing the ``opposite.''
\end{itemize}

Intuitively, $\Modify[\OM]{X}$ is a type with the same points as $X$, but with all the higher cubes
reversed; if there was a line joining $x_0$ to $x_1$ in $X$, then $\Modify[\OM]{X}$ will have a line
joining $x_1$ to $x_0$ instead.

On the other hand, $\Modify[\GM]{-}$ deletes all (higher) cells from a type, leaving only the
underlying groupoid of points (its \emph{groupoid core}). We shall use this modality to define the
core of a category and, more generally, use it to isolate discrete categories. Owing to this second
point, prior work has often referred to $\GM$ as the \emph{discrete}
modality~\parencite{shulman:2018,myers:2023}. Its right adjoint, $\Modify[\SM]{-}$, is slightly less
intuitive. Operationally, it deletes all higher cells from a type and then adds in a unique (higher)
cell between every collection of points. We caution the reader that while $\Modify[\GM]{X}$ is
always a groupoid in the sense of \cref{sec:stt}, it is often the case that $\Modify[\SM]{X}$ is
not a category even if $X$ was originally a category. This is not unexpected: $\Modify[\GM]{-}$
models the core functor sending a category to its underlying groupoid, but this operation when
restricted to categories has no right adjoint. It is only in a bigger category such as cubical or simplicial spaces
that $\Modify[\SM]{-}$ exists.

We require a number of equations and inequalities to force these modalities to behave as expected.
In particular, we require the following 2-cells and equations on modalities:
\begin{mathpar}
  \GM \circ \GM = \GM \circ \SM = \GM \circ \OM = \OM \circ \GM  = \GM
  \and
  \SM \circ \SM = \SM \circ \GM = \SM \circ \OM = \OM \circ \SM = \SM
  \\
  \OM \circ \OM = \ArrId{}
  \and
  \GM \le \ArrId{}
  \and
  \ArrId{} \le \SM
\end{mathpar}

We refer to the 2-category theory generated by these constraints as $\TTTMode$.

\subsubsection{The interval}
As mentioned previously, we require an interval in order to capture the simplicial (or, in our case,
cubical) structure.

\begin{axiom}[The interval]
  \label{ax:int}
  There exists a bounded distributive lattice $\prn{\Int : \HSet,\land,\lor,0,1}$.
\end{axiom}

Our next axiom controls the behavior of the opposite modality on $\Int$:
\begin{axiom}[Opposite of $\Int$]
  \label{ax:op-of-int-is-int}
  There is an equivalence $\neg : \Modify[\OM]{\Int} \to \Int$ which swaps $0$ for $1$ and $\lor$ for $\land$.
\end{axiom}
\begin{notation}
  In various places, it will be convenient to treat $\neg$ as a function $\MFn[\OM]{\Int}{\Int}$ to
  avoid spuriously introducing $\MkMod[\OM]{-}$. The two types $\MFn[\OM]{\Int}{\Int}$ and
  $\Modify[\OM]{\Int} \to \Int$ are canonically equivalent and so this causes no ambiguity, see~\Cref{not:modfun}.
\end{notation}

With $\Int$, we are now able to postulate an amazing right adjoint operation to $\Int \to -$. As was
noted in \textcite{licata:2018}, this operation cannot be defined as a map $\Uni \to \Uni$.
Accordingly, we restrict its action to global elements using $\GM$.

\begin{axiom}[$\Int$ is tiny]
  \label{ax:int-is-left-adjoint}
  The following proposition holds:
  \[
    \prn{\DeclVar{A}{\GM}{\Uni}} \to
    \Sum{\DeclVar{A_\Int}{\GM}{\Uni}} \Sum{\DeclVar{\epsilon}{\GM}{\prn{A_\Int}^\Int \to A}}
    \Prod{\DeclVar{B}{\GM}{\Uni}} \IsEquiv\prn{\Modify[\GM]{B \to A_\Int} \to \Modify[\GM]{B^\Int \to A}}
  \]
  In other words, we require that for each element $\DeclVar{A}{\GM}{\Uni}$, there exists a type
  $A_\Int$ which represents the functor $\prn{-}^\Int \to A$ \ie{}, a right adjoint to $\Int \to -$.
\end{axiom}

\subsubsection{The simplicial monad}

Before moving on to the list of additional axioms that form \TTT{}, we must take a moment to discuss
an additional construct: the simplicial monad. As motivation, while we have already noted that the
interval is not totally ordered, there is a large number of types which ``act as though it
is.'' The simplicial monad isolates and classifies these types.

More precisely, a type is \emph{simplicial} if it satisfies the following predicate:
\[
  \IsSimp\prn{A} =
  \prn{i\,j : \Int} \to
  \IsEquiv\prn{\lambda a\,z.\,a : A \to \prn{i \le j \lor j \le i \to A}}
\]
If a type $A$ satisfies $\IsSimp$, this acts as a license to totally order elements of the interval
whenever we are constructing an element of $A$. Furthermore, as the name suggests, simplicial types
are those which come from simplicial rather than cubical sets (see \cref{sec:model}).

\begin{proposition}[\textcite{rijke:2020}]
  There is a monad $\prn{\Simp : \Uni \to \Uni,\eta,\mu}$ such that:
  \begin{itemize}
  \item For every $A : \Uni$, $\IsSimp\prn{\Simp A}$ holds.
  \item If $B$ is simplicial, then $\eta^* : \prn{\Simp A \to B} \to \prn{A \to B}$ is an
    equivalence.
  \item $\Simp$ commutes with dependent sums and the identity type.
  \end{itemize}
\end{proposition}

We refer to $\Simp$ as the \emph{simplicial monad}\footnote{The notation $\Simp$ is chosen
  deliberately: simplicial types are those which believe the square $\Int \times \Int$ (along with
  all hypercubes) comes from gluing together a pair of triangles
  $\Delta^2 \Pushout{\Int} \Delta^2$.} and write $\Uni[\Simp]$ for the subtype
$\Sum{A : \Uni} \IsSimp\prn{A}$.

\begin{convention}
  We reserve the words ``category'' and ``groupoid'' for types which are simplicial in addition to
  satisfying the Segal/Rezk conditions from \STT{}. Accordingly, \eg{} \emph{category} signifies a
  type which is simplicial, Segal, and Rezk complete.
\end{convention}

\subsubsection{Additional axioms}

Finally, we require a handful of additional axioms which either improve the behavior of modalities
generally or form a more tight correspondence between our system and our intended model. We offer
some intuition for each axiom and note that each is validated by the intended model described in
\cref{sec:model}.

Our first two axioms are general and common assumptions in univalent modal type theory.
First, we record the univalence axiom here as previously discussed in \cref{sec:stt}:
\begin{axiom}[Univalence]
  \label{ax:univalence}
  We assume that each universe $\Uni[i]$ is univalent.
\end{axiom}

Next, we assume that each modality $\Modify[\mu]{-}$ commutes with identity types.
\begin{axiom}[Crisp induction]
  \label{ax:crisp-id-induction}
  For every $\mu$, the canonical map $\MkMod{a} = \MkMod{b} \to \Modify{a = b}$ is an
  equivalence.
\end{axiom}

\begin{remark}
  It is open whether \cref{ax:univalence} implies \cref{ax:crisp-id-induction}. However, in all sensible models,
  \cref{ax:crisp-id-induction} does hold and its failure to do so is more indicative of the poor
  behavior of the intensional identity type than anything else.
\end{remark}

After these fairly general reasoning principles, we now have a sequence of more
simplicial-specific axioms. The first of these links the global sections modality to the
interval. In particular, it states that the global sections of a type always form a groupoid.
\begin{axiom}[$\Int$ detects discreteness]
  \label{ax:discrete-iff-crisp}
  If $\DeclVar{A}{\GM}{\Uni}$ then $\Modify[\GM]{A} \to A$ is an equivalence ($A$ is
  \emph{discrete}) if and only if $A \to \prn{\Int \to A}$ is an equivalence ($A$ is $\Int$-null).
\end{axiom}

The next axiom states that the global points of $\Int$ itself are just $0$ and $1$ and that $0 \neq 1$:
\begin{axiom}[Global points of $\Int$]
  \label{ax:global-points}
  The canonical map $\Bool \to \Int$ is injective and $\Bool \Equiv \Modify[\GM]{\Int}$.
\end{axiom}

In our intended model, various properties can be proven by ``testing'' them at the representable
presheaves $\Yo{\brc{0 \le 1}^n}$. We include a version of this idea as an axiom in our
theory. Namely, we assert that maps between global types can be tested for invertibility at
$\Int^n$:
\begin{axiom}[Cubes separate]
  \label{ax:cubes-separate}
  A map $\DeclVar{f}{\GM}{A \to B}$ is an equivalence if and only if the following holds:
  \[
    \prn{\DeclVar{n}{\GM}{\Nat}} \to \IsEquiv\prn{f_* : \Modify[\GM]{\Int^n \to A} \to \Modify[\GM]{\Int^n \to B}}
  \]
\end{axiom}
\noindent
This follows from another possible axiom, \emph{cubes detect continuity},
following~\textcite{myers:2023}. Note that if $A$ and $B$ are simplicial, one can derive a version of
\cref{ax:cubes-separate} which replaces $\Int^n$ with $\Delta^n$.

It is relatively easy to characterize maps out of $\Simp A$ as they are closely related to maps out
of $A$ itself. It is much harder, however, to characterize $X \to \Simp A$. Our next axiom states
that in certain favorable cases these, too, coincide with the corresponding situation for $A$:
\begin{axiom}[Simplicial stability]
  \label{ax:simplicial-stability}
  If $\DeclVar{A}{\GM}{\Uni}$ then the following map is an equivalence for all $\DeclVar{n}{\GM}{\Nat}$:
  \[
    \eta_* : \Modify[\GM]{\Delta^n \to A} \to \Modify[\GM]{\Delta^n \to \Simp A}
  \]
\end{axiom}

Finally, while simplicial type theory allows us to prove many interesting facts about maps out of
the interval, it is far more difficult to prove properties about $X \to \Int$. In order to balance
the scales, we follow \textcite{cherubini:2023} and add a duality
axiom~\parencite{kock:2014,blechschmidt:2023} characterizing these maps in certain special cases.
Prior to stating this principle, we require the following definition:
\begin{definition}
  A map $\Int \to A$ of bounded distributive lattices is a \emph{finitely presented (fp)
  $\Int$-algebra} if it is merely equivalent to the canonical map $\Int \to \Int\brk{x_1, \dots,
  x_n}/\gl{t_1 = s_1, \dots, t_m = s_m}$ for some $n,m$.
\end{definition}

The definition of a homomorphism of bounded distributive lattices (a map which commutes with
$0,1,\land,\lor$) extends to a notion of homomorphism between fp $\Int$-algebras $\Hom[\Int]{A}{B}$
by further requiring the underlying map to commute with the maps $\Int \to A$ and $\Int \to B$.

\begin{axiom}[Duality]
  \label{ax:sqc}
  Given an fp $\Int$-algebra $f : \Int \to A$ the following map is an equivalence:
  \[
    \lambda a\,g.\,g\prn{a} : A \to \prn{{\Hom[\Int]{A}{\Int}} \to \Int}
  \]
\end{axiom}

\begin{definition}
  Triangulated type theory \TTT{} is \MTT{} with mode theory $\TTTMode$ extended by
  \cref{%
    ax:int,%
    ax:int-is-left-adjoint,%
    ax:op-of-int-is-int,%
    ax:univalence,%
    ax:crisp-id-induction,%
    ax:discrete-iff-crisp,%
    ax:global-points,%
    ax:cubes-separate,%
    ax:simplicial-stability,%
    ax:sqc}.
\end{definition}

\subsection{Duality and \texorpdfstring{$\Delta^n$}{Delta\{n\}}}

\cref{ax:sqc} has a number of remarkable consequences for $\Int$. While these are not specific to
directed univalent universes, they allow us to construct the first non-trivial categories inside
\TTT{}. We begin with the following result---independently proven by
\textcite{pugh:2025}---reminiscent of various principles from synthetic domain theory.

\begin{lemma}[Phoa's principle]
  \label{lem:ttt:phoa}
  Evaluation at $0,1$ is an embedding $\prn{\Int \to \Int} \to \Int \times \Int$ with image
  $\Delta^2$.
\end{lemma}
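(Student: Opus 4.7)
My plan is to apply \cref{ax:sqc} (Duality) to the simplest non-trivial fp $\Int$-algebra: the free one on one generator, $A = \Int\brk{x}$ (with no relations). Since $\Int$-algebra homomorphisms out of a free algebra are classified by where they send the generator, there is a natural equivalence $\Hom[\Int]{\Int\brk{x}}{\Int} \Equiv \Int$ given by $g \mapsto g(x)$. Substituting into \cref{ax:sqc} gives
\[
  \Int\brk{x} \Equiv \prn{\Hom[\Int]{\Int\brk{x}}{\Int} \to \Int} \Equiv \prn{\Int \to \Int},
\]
and unfolding the constructions shows this equivalence sends a polynomial $p$ to the function $i \mapsto p\brk{x := i}$. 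In particular the composite $\Int\brk{x} \to \prn{\Int \to \Int} \to \Int \times \Int$, where the second map is evaluation at $0$ and $1$, is already automatically an embedding.

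The remaining task is to identify $\Int\brk{x}$ concretely with $\Delta^2$ so that evaluation at $\brc{0,1}$ lands precisely there. Using repeated applications of distributivity, every element of $\Int\brk{x}$ admits a representation of the form $\prn{x \land a} \lor b$ for some $a,b : \Int$; moreover the absorption identity $\prn{x \land a} \lor b = \prn{x \land \prn{a \lor b}} \lor b$ lets us further require $a \ge b$. Uniqueness then falls out of evaluation: a polynomial in this normal form takes the value $b$ at $x=0$ and the value $a \lor b = a$ at $x=1$, so $a$ and $b$ are uniquely determined. This yields an equivalence $\Int\brk{x} \Equiv \Compr{\prn{a,b} : \Int \times \Int}{a \ge b} = \Delta^2$, and tracing through the equivalences shows the function $f$ corresponding to $\prn{a,b}$ satisfies $f\prn{1} = a$ and $f\prn{0} = b$, confirming that the image of $f \mapsto \prn{f\prn{1},f\prn{0}}$ is exactly $\Delta^2$ in the paper's convention.

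The main obstacle is the purely lattice-theoretic normal-form argument for $\Int\brk{x}$. Care is required because \TTT{} does \emph{not} assume $\Int$ is a chain---unlike in core \STT{}---so the usual manipulations available for a totally ordered interval must be replaced by the distributive law together with the absorption identity above. Once this bookkeeping is in place, both the embedding property and the identification of the image with $\Delta^2$ follow directly from \cref{ax:sqc} and the concrete description of the free fp $\Int$-algebra on one generator.
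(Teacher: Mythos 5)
Your proof is correct and follows essentially the same route as the paper's: apply \cref{ax:sqc} to the free fp $\Int$-algebra $\Int\brk{x}$ to obtain $\Int\brk{x} \Equiv \prn{\Int \to \Int}$, then establish a normal form $\prn{x \land a} \lor b$ with $a \ge b$ (the paper writes $p\prn{0} \lor \prn{x \land p\prn{1}}$, the same up to commutativity, and notes monotonicity via an induction where you extract it directly from the normal form) to identify the image with $\Delta^2$. One small slip: the sentence asserting the composite $\Int\brk{x} \to \prn{\Int \to \Int} \to \Int \times \Int$ ``is already automatically an embedding'' overstates the case---precomposing with an equivalence does not by itself make endpoint evaluation an embedding, and this is precisely what your normal-form argument is needed to establish---but since the next paragraph supplies exactly that argument, this is a presentational infelicity rather than a genuine gap.
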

\begin{proof}
  We first will argue via \cref{ax:sqc} that $\Int\brk{x}$ is equivalent to $\Int \to \Int$ via the
  evaluation map. To see this, let us note that $\Int \to \Int\brk{x}$ is an $\Int$-algebra by
  definition, and $\Hom[\Int]{\Int\brk{x}}{\Int} \Equiv \Int$. Accordingly, by \cref{ax:sqc}, the
  map $\Con{eval} : \Int\brk{x} \to \prn{\Int \to \Int}$ is an equivalence.

  By the 2-for-3 principle of equivalences, it then suffices to show that evaluating a polynomial at
  $0$ and $1$ induces an embedding $\Int\brk{x} \to \Int \times \Int$ whose image is $\Delta^2$. An
  inductive argument allows us to conclude that $\Con{eval}\prn{p,-}$ is a monotone map from
  $\Int \to \Int$ and so evaluation of polynomials at endpoints factors through $\Delta^2$. We
  therefore are reduced to showing that this map is an equivalence.  To see this, we observe that
  any polynomial in one variable can be placed in the following normal form:
  $p = \Con{eval}\prn{p,0} \lor x \land \Con{eval}\prn{p,1}$ whereby the conclusion is immediate.
\end{proof}

\begin{notation}
  In light of the equivalence used in the proof of Phoa's principle, we will no longer distinguish
  between polynomials in one variable $\Int\brk{x}$ and functions $\Int \to \Int$.
\end{notation}

\begin{lemma}[Generalized Phoa's principle]
  \label{lem:ttt:generalized-phoa}
  \leavevmode
  \begin{itemize}
  \item The evaluation map from $\Int^n \to \Int$ to monotone maps $\Bool^n \to \Int$ is an
    equivalence.
  \item The evaluation map from $\Delta^n \to \Int$ to monotone maps $\brk{0 \le \dots \le n} \to \Int$ is an
    equivalence.
  \end{itemize}
  In the above, we have regarded $\Bool$ as a 2-element partial order $\False \le \True$.
\end{lemma}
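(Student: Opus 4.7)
The plan is to generalize \cref{lem:ttt:phoa} by applying \cref{ax:sqc} to multi-variable polynomial lattices and a quotient thereof, then matching these polynomials with monotone data via disjunctive normal form.

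\textbf{Cube case.} Let $A_n \defeq \Int\brk{x_1, \dots, x_n}$, an fp $\Int$-algebra. Since a lattice homomorphism out of the free algebra on $n$ generators is determined by the images of those generators, $\Hom[\Int]{A_n}{\Int} \Equiv \Int^n$, so \cref{ax:sqc} immediately supplies an equivalence $A_n \Equiv \prn{\Int^n \to \Int}$ realized by evaluation. It remains to identify $A_n$ with monotone maps $\Bool^n \to \Int$ via restriction along $\Bool \hookrightarrow \Int$. Here I would use disjunctive normal form: distributivity rewrites every $p \in A_n$ as $\bigvee_{S \subseteq \brc{1,\dots,n}} c_S \wedge \bigwedge_{i \in S} x_i$ with $c_S \in \Int$, and evaluation at $v \in \Bool^n$ (identified with the subset $\brc{i \mid v_i = 1}$) collapses to $\bigvee_{S \subseteq v} c_S$, manifestly monotone in $v$. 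Conversely, a monotone $m : \Bool^n \to \Int$ yields the polynomial $\bigvee_S m\prn{S} \wedge \bigwedge_{i \in S} x_i$, and the two constructions are mutually inverse.

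\textbf{Simplex case.} Let $B_n \defeq A_n / \gl{x_i \wedge x_{i+1} = x_{i+1}}_{1 \le i < n}$, the quotient forcing $x_1 \ge \dots \ge x_n$; this is still fp. Its $\Int$-homomorphisms pick out descending chains in $\Int$, so $\Hom[\Int]{B_n}{\Int} \Equiv \Delta^n$ and \cref{ax:sqc} gives $B_n \Equiv \prn{\Delta^n \to \Int}$. The imposed relations collapse $\bigwedge_{i \in S} x_i$ to $x_{\max S}$, reducing the normal form to $p = d_0 \vee \bigvee_{k=1}^n d_k \wedge x_k$. Evaluating at the vertex $v_j \defeq \prn{\underbrace{1,\ldots,1}_{j},\, 0, \ldots, 0}$ yields $\bigvee_{k \le j} d_k$, visibly monotone in $j$. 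Conversely, a monotone chain $a_0 \le \dots \le a_n$ is realized by the polynomial $a_0 \vee \bigvee_k a_k \wedge x_k$; a short calculation using $a_j \ge a_{j+1} \ge \dots \ge a_n$ on points of $\Delta^n$ shows both constructions describe the same element of $B_n$.

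\textbf{Main obstacle.} The most delicate points are bookkeeping around the disjunctive normal form---which is not canonical, as different coefficient families can describe the same polynomial---and verifying that restriction to the vertex chain really inverts the polynomial construction. Both hinge on the distributivity of $\Int$ and the absence of negation in the lattice signature: without negation, no ``complementary'' data is discarded when passing from a lattice polynomial to its vertex data, and the monotone-map parametrization becomes faithful.
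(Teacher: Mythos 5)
Your proof is correct, but it takes a different route from the paper's. The paper disposes of this lemma in one line: induction on $n$ together with repeated application of \cref{lem:ttt:phoa}, i.e., one curries $\Int^{n+1} \to \Int$ as $\Int \to \prn{\Int^n \to \Int}$, applies the one-variable Phoa principle, and invokes the inductive hypothesis (and similarly for $\Delta^{n+1}$ as a subobject of $\Int \times \Delta^n$). You instead bypass the induction entirely and rerun the \emph{proof} of \cref{lem:ttt:phoa} in $n$ variables: apply \cref{ax:sqc} directly to $\Int\brk{x_1,\dots,x_n}$ (resp.\ its quotient presenting $\Delta^n$), compute $\Hom[\Int]{-}{\Int}$ to be $\Int^n$ (resp.\ $\Delta^n$), and then match polynomials against vertex data via a disjunctive normal form $\bigvee_S c_S \wedge \bigwedge_{i\in S} x_i$, generalizing the paper's one-variable normal form $p = p(0) \lor x \land p(1)$. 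Your handling of the non-canonicity of the normal form is adequate: the round-trip computation (absorption of the terms $c_T \wedge \bigwedge_{i \in S} x_i$ for $T \subsetneq S$) shows that any polynomial equals the one reconstructed from its vertex data, which yields injectivity without choosing a canonical representative. What your approach buys is an explicit normal form for elements of $\Int^n \to \Int$ and $\Delta^n \to \Int$ (the latter collapsing to $d_0 \lor \bigvee_k d_k \land x_k$), which is genuinely useful downstream---the paper's subsequent arguments (e.g., that $\Int$ is simplicial, and \cref{cor:ttt:delta-is-category}) lean on exactly this kind of description. What the paper's route buys is brevity, since the duality-plus-normal-form work has already been done once in the base case.
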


Both claims follow from induction on $n$ and repeated application of Phoa's principle.

\begin{remark}
  The particular cube category used in our intended model of \TTT{} is equivalent by Birkhoff
  duality to the category of \emph{flat} finite bounded distributive
  lattices~\parencite{spitters:2016}. \cref{lem:ttt:generalized-phoa} is a manifestation of this
  fact.
\end{remark}

\begin{lemma}
  $\Int$ is simplicial.
\end{lemma}
\begin{proof}
  To show that $\Int \to \prn{\prn{i\le j\lor j\le i} \to \Int}$ is an equivalence, it suffices,
  by~\cref{ax:cubes-separate}, to consider $\DeclVar{f,g}{\GM}{\Int^n \to \Int}$ and show that the
  following is an equivalence:
  \[
    \Modify[\GM]{\Int^n \to \Int} \to
    \Modify[\GM]{
      \Compr{\vec x:\Int^n}{f(\vec x) \le g(\vec x) \lor g(\vec x) \le f(\vec x)} \to \Int
    }
  \]
  Using \cref{lem:ttt:generalized-phoa}, we can extend an element of the codomain to a total
  function $\Int^n \to \Int$ provided we can specify its behavior on $\vec x : \Bool^n$. The
  proposition $f(\vec x)\le g(\vec x) \lor g(\vec x)\le f(\vec x)$ holds for all $\vec x : \Bool^n$,
  and so such an extension always exists and is necessarily unique.
\end{proof}
\begin{remark}
  A more elegant proof of the above was recently provided by \textcite{williams:2025}.
\end{remark}

\begin{corollary}
  \label{cor:ttt:delta-is-category}
  $\Delta^n$ is a category.
\end{corollary}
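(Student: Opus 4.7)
My plan is to verify the three defining conditions of a category---simpliciality, the Segal condition, and Rezk-completeness---on $\Delta^n$ in turn, by bootstrapping from the corresponding facts for $\Int$.

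First I would check that $\Int$ itself is a category. Simpliciality has just been established. For the Segal condition, \cref{lem:ttt:generalized-phoa} identifies both $\Delta^2 \to \Int$ and $\Lambda^2_1 \to \Int$ with the same data---a monotone chain of three elements in $\Int$ indexed by $[0 \le 1 \le 2]$---the first by direct application, the second because, by Phoa (\cref{lem:ttt:phoa}), a pair of composable arrows in $\Int$ amounts exactly to such a chain; hence the restriction map is an equivalence. For Rezk-completeness, any iso $a \cong b$ in $\Int$ yields inequalities both ways by Phoa, so $a = b$ by antisymmetry of the lattice order. Each condition is preserved by finite products (simpliciality since $\Simp$ commutes with dependent sums; Segal and Rezk by direct computation on mapping-space equivalences), so $\Int^n$ is also a category.

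Second I would transfer each property across the propositional sub-inclusion $\Delta^n \hookrightarrow \Int^n$, cut out by the predicate $P(x) = \bigwedge_j (x_j \ge x_{j+1})$, a conjunction of identities in the hSet $\Int$. Simpliciality of $\Sum{x : \Int^n} P(x)$ follows because $\Simp$ commutes with dependent sums and identity types and $\Int$ is simplicial. Rezk-completeness is automatic: an iso in $\Delta^n$ coincides with an iso in $\Int^n$ between points of $\Delta^n$ (as $P$ is a proposition), hence identity in $\Int^n$, hence identity in $\Delta^n$. The one delicate point is the Segal condition: one must show that the Segal filler $\tilde f : \Delta^2 \to \Int^n$ of a composable pair lying in $\Delta^n$ itself lands in $\Delta^n$.

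For this last step I would appeal to \cref{lem:ttt:generalized-phoa} once more: $\tilde f$ is determined from its three vertex values by a single $\land,\lor$-polynomial in $(i,j)$ applied coordinatewise. Since $\land$ and $\lor$ are monotone in each argument in any distributive lattice, the pointwise inequalities $\tilde f_j \ge \tilde f_{j+1}$ at the three vertices---which is precisely the hypothesis that the horn lies in $\Delta^n$---propagate to every point of $\Delta^2$, so $\tilde f$ stays in $\Delta^n$. The main obstacle is this monotonicity bookkeeping; an appealing alternative that bypasses the explicit Phoa formula is to apply generalized Phoa directly to identify both $\Delta^2 \to \Delta^n$ and $\Lambda^2_1 \to \Delta^n$ with monotone maps from the three-element chain into the poset $[0 \le \ldots \le n]$, rendering the Segal equivalence tautological.
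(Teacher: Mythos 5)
Your proof is correct and is ultimately built on the same two workhorses the paper uses---Phoa's principle and its generalization (\cref{lem:ttt:phoa,lem:ttt:generalized-phoa})---but it is organized quite differently, so the comparison is worth spelling out. The paper attacks $\Delta^n$ directly: Rezk-completeness is dispatched in one line (there are no non-identity isomorphisms in $\Delta^n$, which is the same antisymmetry observation you make), simpliciality is obtained by exhibiting $\Delta^n$ as a retract of $\Int^n$, and the Segal condition is proven by using generalized Phoa to encode a $\Lambda^2_1$-horn landing in $\Delta^n$ as $n$ coordinatewise $3$-chains in $\Int$, from which the unique filler is assembled. You instead stage the argument ($\Int$, then $\Int^n$ by closure under products, then $\Delta^n$ by transfer along the propositional subtype inclusion), which is a longer but perfectly valid path, and for simpliciality you substitute the retract argument with closure of $\Simp$ under dependent sums and identity types applied to $\Delta^n = \Sum{x : \Int^n} P(x)$---also sound, using that $\Simp$ is lex. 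Where you genuinely add value is in isolating and justifying what you call the ``delicate point'': that the Segal filler reconstructed from vertex data by the Phoa normal form lands in $\Delta^n$, not merely $\Int^n$, because $\land$ and $\lor$ are monotone and hence propagate the coordinatewise decreasing condition from the vertices to every point. The paper leaves this step implicit when it asserts that the assembled $3$-chains ``are $2$-simplices in $\Delta^n$,'' so you have supplied a legitimate missing detail rather than detected an error. One small imprecision in your closing alternative: $\Delta^2 \to \Delta^n$ is not the discrete set of monotone maps $[3] \to [n{+}1]$; what generalized Phoa actually yields is an identification with $3$-chains valued in $\Delta^n$ (equivalently, $n$-tuples of $3$-chains in $\Int$ satisfying the decreasing condition at each of the three stages), and it is matching that description with the one for $\Lambda^2_1 \to \Delta^n$ that renders the Segal map tautological---the structure is still $\Int$-enriched, not discrete.
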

\begin{proof}
  Since there are no nontrivial invertible morphisms in $\Delta^n$, it is trivially Rezk-complete and, as a
  retract of $\Int^n$, it is simplicial. Therefore, it suffices to show that $\Delta^n$ is Segal.

  To this end, let us consider $\Lambda^2_1 \to \Delta^n$. This is equivalent to a pair of maps
  $f,g : \Int \to \Delta^n$ such that $f\prn{1} = g\prn{0}$. Next, by the Phoa principle
  $f,g : \Int \to \Delta^n$ are fully determined by $n$-tuples of pairs \eg{},
  $\prn{\Proj[k]\prn{f\prn{0}} \le \Proj[k]\prn{f\prn{1}}}_{k \le n}$. In total then, we are given
  $n$-many 3-tuples:
  \[
    \prn{\Proj[k]\prn{f\prn{0}} \le \Proj[k]\prn{f\prn{1}} = \Proj[k]\prn{g\prn{0}} \le \Proj[k]\prn{g\prn{1}}}_{k \le n}
  \]
  By \cref{lem:ttt:generalized-phoa}, these are 2-simplices in $\Delta^n$ and so
  every horn has a unique extension as required.
\end{proof}

We note that \cref{cor:ttt:delta-is-category} is already a significant step forward for \STT{}: it
is the first result constructing an explicit example of a non-discrete category within the system.

\subsection{Reasoning with modalities in \texorpdfstring{\TTT{}}{Triangulated Type Theory}}

A number of useful results in \TTT{} are immediate corollaries of standard results from \MTT{}
combined with one of the axioms. We record some of the most important results in this section for
future use and to give a flavor for how modalities can be used to enhance simplicial reasoning.

By general results about adjoint modalities from \MTT{}~\parencite{gratzer:mtt-journal:2021}, we obtain
the following:
\begin{lemma}
  \label{lem:ttt:transpose}
  \leavevmode
  \begin{itemize}
  \item If $\DeclVar{A}{\ArrId{}}{\Uni}$, $\DeclVar{B}{\SM}{\Uni}$ there is an equivalence
    $\Modify[\SM]{\Modify[\GM]{A} \to B} \Equiv \prn{A \to \Modify[\SM]{B}}$.
  \item If $\DeclVar{A}{\ArrId{}}{\Uni}$, $\DeclVar{B}{\OM}{\Uni}$ there is an equivalence
    $\Modify[\OM]{\Modify[\OM]{A} \to B} \Equiv \prn{A \to \Modify[\OM]{B}}$.
 \end{itemize}
  There are also dependent versions where \eg{},
  $\DeclVar{B}{\SM}{\MFn[\GM]{A}{\Uni}}$.
\end{lemma}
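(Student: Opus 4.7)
The plan is to derive all three equivalences as instances of a single general principle: for any adjoint pair of modalities $\mu \Adjoint \nu$ in the mode theory, \citet{gratzer:mtt-journal:2021} establishes a natural transposition equivalence $\Modify[\nu]{\Modify[\mu]{A} \to B} \Equiv A \to \Modify[\nu]{B}$ internally. Each of our three cases supplies such an adjunction: $\PM \Adjoint \AM$ and $\GM \Adjoint \SM$ are given explicitly by the unit/counit generators of the mode theory, and $\OM \Adjoint \OM$ arises from the equation $\OM \circ \OM = \ArrId{}$ by taking the identity 2-cells for the unit and counit, which trivially satisfy the triangle identities. So the first step is simply to observe these three adjunctions.

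For the second and third bullets, I would then instantiate the MTT transposition equivalence with the respective modality. The bookkeeping of modal annotations is what makes the statements look nonuniform. In the third bullet, $A$ is declared under the identity modality but must appear under $\Modify[\GM]{-}$ inside $\Modify[\SM]{-}$; coercing it across the lock is exactly what the 2-cell $\eta_{\GM \Adjoint \SM} : \ArrId{} \to \SM\circ\GM$ accomplishes, yielding the occurrence of $A^\eta$. The second bullet is cleaner because $\OM$ is self-adjoint with identity unit, so no nontrivial coercion appears—one obtains the equivalence directly.

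For the first bullet, the analogous transposition from $\PM \Adjoint \AM$ yields $\Modify[\AM]{\Modify[\PM]{A^\eta} \to B} \Equiv A \to \Modify[\AM]{B}$. I then rewrite the left-hand side by unfolding $\Modify[\PM]{A^\eta}$ as the path type $\prn{i : \Int} \to A^\eta \cdot i$ via \cref{lem:ttt:rho-is-path}, which identifies the $\PM$-modal type with a function out of $\Int$. The dependent versions follow by carrying a type family $B$ through the same transposition; the MTT transposition is natural enough in $B$ that no new ingredients are required.

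The only real obstacle is the modal bookkeeping: getting every type placed at the correct lock and inserting the right $-^\eta$ and $-^\epsilon$ coercions so that the general MTT equivalence applies on the nose. Once this is tracked carefully, each bullet is a direct application of the MTT adjoint transpose plus, in the first case, an unfolding step supplied by \cref{lem:ttt:rho-is-path}.
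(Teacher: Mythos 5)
Your proposal is correct and matches the paper's approach: the paper itself gives no explicit proof, simply invoking ``general results about adjoint modalities from \MTT{}'' from \citet{gratzer:mtt-journal:2021}, and your account — instantiate the \MTT{} transposition equivalence at $\PM\Adjoint\AM$, $\OM\Adjoint\OM$ (self-adjoint via $\OM\circ\OM=\ArrId{}$ with identity unit/counit), and $\GM\Adjoint\SM$, tracking the $-^\eta$ coercions, then unfold $\Modify[\PM]{-}$ via \cref{lem:ttt:rho-is-path} to obtain the $\prn{i:\Int}\to A^\eta\cdot i$ form in the first bullet — is exactly the intended derivation.
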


We record two useful consequences of the transposition principle for $\OM \Adjoint \OM$ and $\GM \Adjoint \SM$:
\begin{lemma}[\textcite{gratzer:phd}]
  \label{lem:ttt:o-cocont}
  $\Modify[\OM]{-}$ commutes with colimits.
\end{lemma}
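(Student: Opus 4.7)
The strategy is the textbook one: a functor with a right adjoint preserves colimits. The transposition principle \cref{lem:ttt:transpose} (second bullet) gives
\[
  \Modify[\OM]{\Modify[\OM]{A} \to B} \Equiv \prn{A \to \Modify[\OM]{B}},
\]
which, because $\OM \circ \OM = \ArrId{}$, is exactly the hom-equivalence exhibiting $\Modify[\OM]{-}$ as both a left and a right adjoint to itself in the modal sense of \MTT{}. (Indeed, substituting $\Modify[\OM]{A}$ for $A$ and collapsing $\OM \circ \OM$ rewrites this as $\Modify[\OM]{A \to B} \Equiv \prn{\Modify[\OM]{A} \to \Modify[\OM]{B}}$, displaying $\Modify[\OM]{-}$ as a self-inverse autoequivalence.)

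To derive the preservation statement, I would work directly from the universal property of a colimit in \HOTT{}. Given a diagram $D$ with colimit $C$, a map $\Modify[\OM]{C} \to B$ corresponds via self-adjointness to a map $C \to \Modify[\OM]{B}$ (wrapped in an outer $\Modify[\OM]{-}$), then by the universal property of $C$ to a cocone on $D$ with apex $\Modify[\OM]{B}$, and finally by pointwise self-adjointness to a cocone on $\Modify[\OM]{D}$ with apex $B$. Naturality in $B$ identifies $\Modify[\OM]{C}$ with the colimit of $\Modify[\OM]{D}$.

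The main obstacle is modal bookkeeping: each invocation of \cref{lem:ttt:transpose} demands the appropriate lock on the ambient context, and the diagram $D$ must be presented so that $\Modify[\OM]{D_i}$ is well-formed (typically by declaring $D$ crisp, observing that $\GM \circ \OM = \GM$ absorbs the $\OM$-lock). Tracking the outer $\Modify[\OM]{-}$ produced by each transposition step and collapsing it via $\OM \circ \OM = \ArrId{}$ is routine but must be done carefully. If the direct universal-property approach becomes awkward, one can instead reduce to the generating colimit shapes in \HOTT{} (pushouts, coequalizers, and coproducts) and verify preservation for each individually, since in each case the argument shrinks to a single application of self-adjointness.
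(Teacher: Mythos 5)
Your approach matches the paper's intent exactly: the paper introduces this lemma as one of two "useful consequences of the transposition principle for $\OM \Adjoint \OM$" (\cref{lem:ttt:transpose}), which is precisely the self-adjointness you invoke before applying the standard "left adjoints preserve colimits" argument. The paper does not spell out the modal bookkeeping (it cites \citet{gratzer:phd} for the details), but your account of where the care is needed — crisp presentation of the diagram so that $\Modify[\OM]{D_i}$ is well-formed, collapsing iterated $\OM$-locks via $\OM\circ\OM=\ArrId{}$, and optionally reducing to generating colimit shapes — is the right picture of what that omitted proof must do.
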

\begin{lemma}
  \label{lem:ttt:sharp-codiscrete}
  Evaluation at endpoints $\prn{\Int \to \Modify[\SM]{A}} \to \prn{\Bool \to \Modify[\SM]{A}}$ is an
  equivalence.
\end{lemma}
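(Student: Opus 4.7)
The plan is to apply the transposition principle for the adjunction $\GM \Adjoint \SM$ (third bullet of \cref{lem:ttt:transpose}) to each side of the claimed equivalence and observe that both sides transpose to the same type. Taking the $\ArrId{}$-parameter of the lemma to be $\Int$ and its $\SM$-parameter to be the original $A$, we obtain $\prn{\Int \to \Modify[\SM]{A}} \Equiv \Modify[\SM]{\Modify[\GM]{\Int} \to A}$, and by \cref{ax:global-points} the right-hand side reduces to $\Modify[\SM]{\Bool \to A}$. Repeating the same transposition with $\Bool$ in place of $\Int$ yields $\prn{\Bool \to \Modify[\SM]{A}} \Equiv \Modify[\SM]{\Modify[\GM]{\Bool} \to A}$, and since $\Bool \defeq \Unit + \Unit$ is discrete (coproducts of discrete types are discrete), $\Modify[\GM]{\Bool} \Equiv \Bool$, so this side also reduces to $\Modify[\SM]{\Bool \to A}$.

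Composing these chains of equivalences delivers an equivalence $\prn{\Int \to \Modify[\SM]{A}} \Equiv \prn{\Bool \to \Modify[\SM]{A}}$. The remaining task is to verify that this composite agrees with evaluation at endpoints, \ie{} precomposition with the canonical inclusion $\iota : \Bool \hookrightarrow \Int$. This is a naturality check: the transposition of \cref{lem:ttt:transpose} is natural in its $\ArrId{}$-parameter, and after applying $\Modify[\GM]{-}$ to $\iota$ and invoking \cref{ax:global-points}, the map $\iota$ is carried to the identity of $\Bool$. Hence the induced comparison between the two copies of $\Modify[\SM]{\Bool \to A}$ is the identity, and the composite equivalence is precisely restriction along $\iota$.

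The main obstacle is the bookkeeping surrounding the modal annotations and the $\eta$-shifts present in the formal statement of \cref{lem:ttt:transpose}: one must check that the comparison maps $\Modify[\GM]{\Int} \to \Bool$ and $\Modify[\GM]{\Bool} \to \Bool$ supplied by \cref{ax:global-points} (and discreteness of $\Bool$) are those appearing in the two transposition isomorphisms, and that precomposition with $\iota$ fits into the naturality square for the adjunction. With this in place, the construction is essentially formal.
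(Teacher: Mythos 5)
Your proof is correct and takes exactly the route the paper intends: the lemma is explicitly advertised as a consequence of the transposition principle for $\GM \Adjoint \SM$, and you transpose both sides to $\Modify[\SM]{\Bool \to A}$ using $\Modify[\GM]{\Int} \Equiv \Bool$ (\cref{ax:global-points}) and $\Modify[\GM]{\Bool} \Equiv \Bool$ (idempotence of $\GM$ gives this directly from the same axiom, so you don't need the separate fact about coproducts of discrete types). The final naturality check — that $\Modify[\GM]{\iota}$ becomes the identity on $\Bool$ under the two identifications, so the composite equivalence really is restriction along $\Bool \hookrightarrow \Int$ — is the right thing to verify and goes through as you describe.
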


A similar result is available for $\Int \to -$ in light of \cref{ax:int-is-left-adjoint}:
\begin{lemma}
  \label{lem:ttt:amazing-transpose}
  There is a unique map $\DeclVar{{-}_\Int}{\GM}{\MFn[\GM]{\Uni}{\Uni}}$ such that the
  following bijection holds:
  \[
    \Prod{\DeclVar{A,B}{\GM}{\Uni}} \Modify[\GM]{A^\Int \to B} \Equiv \Modify[\GM]{A \to B_\Int}
  \]
\end{lemma}

Consequently, ${-}_\Int$ preserves limits. Notably, ${\ObjTerm{}}_\Int = \ObjTerm{}$ and ${-}_\Int$
commutes with taking fibers. These two facts imply that $\prn{\Sum{A : \Uni} A}_\Int \to \Uni_\Int$
has small fibers and induces a dependent version of this operation ${-}_\Int : \Uni_\Int \to \Uni$.

In more detail, there is a canonical functorial action of $\prn{-}_\Int$ which induces a map
$\prn{\Sum{A : \Uni[i]} A}_\Int \to \prn{\Uni[i]}_\Int$. Viewed as a family of types, we therefore obtain a
map $\prn{\Uni[i]}_\Int \to \Uni[i + 1]$. We can argue that this map actually factors through
$\Uni[i]$ rather than just $\Uni[i + 1]$ as follows. First, by univalence the map $\Uni[i] \to
\Uni[i + 1]$ is an embedding, so a factorization of $\prn{\Uni[i]}_\Int \to \Uni[i + 1]$ is unique
when it exists. Using \cref{ax:cubes-separate}, it suffices to check that such a factorization
exists after restricting $\prn{\Uni[i]}_\Int \to \Uni[i + 1]$ along some map
$\DeclVar{A}{\GM}{\Int^n \to \prn{\Uni[i]}_\Int}$. Using the universal property of $\prn{-}_\Int$, we
may factor $A$ through a map $\tilde{A}_\Int$ where
$\DeclVar{\tilde{A}}{\GM}{\prn{\Int^n}_\Int \to \prn{\Uni[i]}_\Int}$. Since $\prn{-}_\Int$ preserves
fibers, it suffices to show that the pullback $P$ in the following is $\Uni[i]$-small:
\[
  \begin{tikzpicture}[diagram]
    \node[pullback] (P) {$P$};
    \node[pullback, right = 3cm of P] (XInt) {$\prn{\Sum{f : \prn{\Int^n}^\Int} \tilde{A}\prn{f}}_\Int$};
    \node[below = of P] (Intn) {$\Int^n$};
    \node[right = 3cm of Intn] (IntnInt) {$\prn{\prn{\Int^n}^\Int}_\Int$};
    \node[right = 3cm of IntnInt] (U) {$\prn{\Uni[i]}_\Int$};
    \node[right = 3cm of XInt] (UdotInt) {$\prn{\Sum{A : \Uni[i]} A}_\Int$};
    \path[->] (P) edge (XInt);
    \path[->] (P) edge (Intn);
    \path[->] (Intn) edge (IntnInt);
    \path[->] (IntnInt) edge (U);
    \path[->] (XInt) edge (UdotInt);
    \path[->] (UdotInt) edge (U);
    \path[->] (XInt) edge (IntnInt);
  \end{tikzpicture}
\]
Finally, we note that since the ordinary version of $\prn{-}_\Int$ preserves $\Uni[i]$ small
types, $P$ is the pullback of $\Uni[i]$-small types and is therefore itself $\Uni[i]$-small.

To give an example of how these reasoning principles can be used, we show how they can be used to
enhance our stock of simplicial types.

\begin{lemma}
  Given $\DeclVar{A}{\OM}{\Uni}$, if $\Modify[\OM]{\IsSimp\prn{A}}$ then $\IsSimp\prn{\Modify[\OM]{A}}$.
\end{lemma}
\begin{proof}
  Fix $i,j : \Int$ such that we must show
  $\Modify[\OM]{A} \to \prn{i \le j \lor j \le i \to \Modify[\OM]{A}}$ is an equivalence. Using
  \cref{lem:ttt:transpose}, the codomain is equivalent to $\Modify[\OM]{\Modify[\OM]{i \le j \lor j \le i} \to A}$.
  By \cref{ax:op-of-int-is-int,lem:ttt:o-cocont}, $\Modify[\OM]{i \le j \lor j \le i}$ is
  $\neg \MkMod[\OM]{i} \ge \neg \MkMod[\OM]{j} \lor \neg \MkMod[\OM]{j} \ge \neg \MkMod[\OM]{i}$ and the
  conclusion follows immediately from our assumption $\Modify[\OM]{\IsSimp\prn{A}}$.
\end{proof}

\begin{lemma}
  \label{thm:ttt:global-disc-is-simplicial}
  If $\DeclVar{A}{\GM}{\Uni}$ is discrete then $A$ is simplicial.
\end{lemma}
\begin{proof}
  Assume $A$ is discrete, \ie, $A \to A^\Int$ is an equivalence.  Since cubes
  separate by~\cref{ax:cubes-separate}, it suffices to show for all polynomials
  $p,q : \Int\brk{\vec x}$ in $n$ variables $\vec x$ that the map $A \to (\varphi(\vec x) \to A)$ is
  an equivalence, where $\varphi(\vec x) := p(\vec x)\le q(\vec x) \lor q(\vec x)\le p(\vec x)$.

  In turn, it suffices to give an $\Int$-homotopy $h$ connecting the constant map at $0$ to the
  identity on $\varphi(\vec x)$, for each $\vec x:\Int^n$. We notice that the straight-line homotopy
  $h(\vec x,t) = \vec x\land t$ from $0$ to $\vec x$ works: We have to show for each $\vec x$ with
  $\varphi(\vec x)$ that $\varphi(\vec x\land t)$ holds, for each $t$.  But notice that $\varphi(0)$
  is true, as any pair of constants among $0,1$ are comparable.  By \cref{lem:ttt:phoa},
  $\varphi(\vec x\land t)$ then holds for all $t$.
\end{proof}
\noindent
Using the adjunction $\GM \Adjoint \SM$, we can prove that \eg{},
$\Nat \Equiv \Modify[\GM]{\Nat}$~\parencite{gratzer:phd}. Accordingly, by \cref{ax:discrete-iff-crisp}:
\begin{corollary}
  $\Nat$ and $\Bool$ are both simplicial and $\Int$-null \ie{} groupoids.
\end{corollary}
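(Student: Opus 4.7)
The plan is to invoke the immediately preceding theorem \cref{thm:ttt:global-disc-is-simplicial} together with \cref{ax:discrete-iff-crisp}, once we have verified that $\Nat$ and $\Bool$ can be regarded as crisp types satisfying $\Nat \Equiv \Modify[\GM]{\Nat}$ and $\Bool \Equiv \Modify[\GM]{\Bool}$. The cited result from \citet{gratzer:phd} takes care of this: since $\Nat$ and $\Bool$ are the inductive types built from a finite signature with no $\Int$-dependence, the unit of the $\GM \Adjoint \SM$ adjunction at these types is an equivalence. In particular, both types are (crisply) discrete in the sense required by \cref{ax:discrete-iff-crisp}.

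Given this, the argument splits into two pieces. First, applying \cref{ax:discrete-iff-crisp} in the $\Rightarrow$ direction yields that $\Nat$ and $\Bool$ are $\Int$-null, which is precisely the paper's definition of being a groupoid in the underlying sense. Second, applying \cref{thm:ttt:global-disc-is-simplicial} yields that both are simplicial. Combining the two conclusions with the convention that a \emph{groupoid} means a type which is simplicial and $\Int$-null (equivalently, a simplicial Segal Rezk type in which every arrow is invertible), we conclude that $\Nat$ and $\Bool$ are groupoids.

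The only mildly delicate point is the bookkeeping around crispness: strictly speaking, \cref{thm:ttt:global-disc-is-simplicial} and \cref{ax:discrete-iff-crisp} are stated for $\DeclVar{A}{\GM}{\Uni}$, so we must observe that the usual inductive types are automatically available in any modal context, and in particular in a context locked by $\GM$. This is a standard feature of the formulation of inductive types in \MTT{} and is what the reference to \citet{gratzer:phd} encapsulates. I do not expect any real obstacle here; the main content of the corollary is that, once one knows $\Nat$ and $\Bool$ are crisply discrete, the previous theorem upgrades discreteness to simpliciality and \cref{ax:discrete-iff-crisp} upgrades it to $\Int$-nullity.
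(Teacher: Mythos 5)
Your proposal is correct and follows exactly the route the paper takes: establish $\Nat \Equiv \Modify[\GM]{\Nat}$ (and likewise for $\Bool$) via the $\GM \Adjoint \SM$ adjunction as in \citet{gratzer:phd}, then obtain $\Int$-nullity from \cref{ax:discrete-iff-crisp} and simpliciality from \cref{thm:ttt:global-disc-is-simplicial}. The remark about crisp availability of the inductive types is a fair bookkeeping observation that the paper leaves implicit.
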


\begin{remark}
  The result analogous to~\Cref{thm:ttt:global-disc-is-simplicial} for Rezk-complete Segal types
  does not hold, falsifying a conjecture of \textcite{weaver:2020}. In particular, $\Delta^2
  \Pushout{\Int} \Delta^2$ can be shown to be Rezk-complete and Segal, but is not simplicial. This
  same example shows that the requirement that $A$ be annotated with $\GM$ is necessary: as a family
  over $\Int \times \Int$ the type $\Delta^2 \Pushout{\Int} \Delta^2$ is fiberwise a
  proposition---explicitly, it is $\lambda i\,j.\,i \le j \lor j \le i$---and therefore it is
  fiberwise $\Int$-null. If we could apply \cref{thm:ttt:global-disc-is-simplicial} without the
  $\GM$-annotation we could conclude that each fiber $i \le j \lor j \le i$ was simplicial. Combined
  with the fact that $\Int \times \Int$ is simplicial, this leads again to the false conclusion that
  $\Delta^2 \Pushout{\Int} \Delta^2$ is simplicial.
\end{remark}

\section{The cubical spaces model}
\label{sec:model}

\TTT{} is intended to be an internal language for cubical sets \ie{} $\PSH{\CUBE}$ (or rather its
$\infty$-categorical enhancement). In order to make this precise, we construct a model of \TTT{} in
which types are realized as (families of) $\infty$-presheaves over the Dedekind cube category.
Immediately, we must contend with the fact that syntax is 1-categorical and models of
syntax~\parencite{gratzer:mtt-journal:2021} are also inherently 1-categorical. To overcome this
mismatch, we interpret \TTT{} into a model category which presents the appropriate presheaf
$\infty$-category~\parencite{shulman:elegant:2015,kapulkin:2021,shulman:2019}.

For \TTT{}, this model category will be the \emph{injective model structure on simplicial presheaves
on Dedekind cubes} $\PSH[\SSET]{\CUBE}$. That is, types in \TTT{} are interpreted as certain
families of presheaves over $\CUBE$ valued in $\SSET = \PSH{\SIMP}$. It is helpful to view the
simplicial sets layer as ``mixing in'' homotopy theory with ordinary presheaves over $\CUBE$.

The construction of the intended model of \TTT{} is largely an exercise in combining off-the-shelf
results about models of \HOTT{} and models of \MTT{}. In particular, \textcite{shulman:2019} shows
that \HOTT{} admits a model in $\PSH[\SSET]{\CUBE}$ and results of \textcite{shulman:2023} and
\textcite{gratzer:phd} show that this model refines to a model of \MTT{}. One then directly verifies
that this model validates the additional axioms required by \TTT{}. However, for the sake of
completeness we give some of the details of this process. In particular, in \cref{sec:model:def} we
spell out the definition of a model of \MTT{}. In \cref{sec:model:constr} we show how to apply a
theorem of \textcite{shulman:2023} to construct a model of \MTT{} and verify that it extends
appropriately to the desired model of \TTT{}.

\begin{remark}
  Models of \TTT{} specifically and type theory generally contain quite a lot of data to account for
  each connective. We will focus primarily on the specific modal connectives in \TTT{} in this
  section, as---just as in the syntax of \TTT{} itself---the other connectives are treated in a
  totally standard manner.
\end{remark}

\subsection{Models of \texorpdfstring{\MTT{}}{MTT} and \texorpdfstring{\TTT{}}{TTT}}
\label{sec:model:def}

In this subsection, we briefly recall the main aspects of the model theory of \MTT{}. We shall not
directly use these definitions in our construction of the cubical spaces model of \TTT{} and instead
will rely on a general \emph{coherence} result of Shulman. Accordingly, these
definitions are recalled only to make the following discussion more concrete.

When originally introduced by \textcite{gratzer:mtt-journal:2021}, \MTT{} was presented as a certain
\emph{generalized algebraic theory} (GAT). Consequently, the general theory of GATs ensures that
there is a category of models of \MTT{} (parameterized by the chosen mode theory) which refines the
corresponding category of models of ordinary Martin-L{\"o}f type theory~\parencite{dybjer:1996}.
Let us therefore begin by recalling the definition of a model of dependent type theory:
\begin{definition}[\textcite{awodey:2018}]
  A category with families (CwF) $\prn{\CC,\tau}$ consists of a category $\CC$ along with a morphism
  $\Mor[\tau]{U^\bullet}{U}$ in $\PSH{\CC}$ is equipped with the following chosen
  data:\footnote{The fact that this is chosen data rather than a mere existence property is a quirk
    of generalized algebraic theories. More refined recent approaches such as those given by
    \textcite{uemura:phd} do not have this deficiency.}
  \begin{itemize}
    \item $\CC$ has a chosen terminal object,
    \item $\tau$ is locally representable \ie{}, the fiber $\Yo{c} \times_{U} U^\bullet$ has a
      chosen representation $\Yo{c'}$ for each morphism $\Mor{\Yo{c}}{U}$.
  \end{itemize}
\end{definition}

This structure is a model of dependent type theory with no connectives, only basic operations like
context extension, variables, and substitutions. We refer the reader to \textcite{awodey:2018} for a
careful exposition of how connectives may be integrated into this definition. Since this aspect of
CwFs is carried through unchanged through the remainder of our discussion, we ignore it here.

A model of \MTT{} elaborates on this structure by linking together CwFs via functors (which
intuitively model $\Gamma/\mu$). We begin with the structure necessary to model \MTT{} without any
connectives.
\begin{definition}
  A model of \MTT{} without any connectives consists of a strict 2-functor
  $\Mor[F]{\Coop{\Mode}}{\CAT}$ along with a choice of morphism $\Mor[\tau_m]{U^\bullet_m}{U_m}$ in
  $\PSH{F\prn{m}}$ for each $m : \Mode$. We require the following additional data:
  \begin{itemize}
    \item A chosen terminal object for each $F\prn{m}$,
    \item A choice of local representability structure on $F\prn{\mu}^*\prn{\tau_m}$ for each
      $\Mor[\mu]{m}{n}$.
  \end{itemize}
\end{definition}

Note that $F\prn{\ArrId{}}^*\prn{\tau_m} = \tau_m$ so each $\prn{F\prn{m},\tau_m}$ is a CwF. The
additional requirement that each $F\prn{\mu}^*\prn{\tau_m}$ be locally representable is used to model
\emph{annotated} variables in a context \ie{} $\Gamma, \DeclVar{x}{\mu}{A}$.
Other connectives are integrated into this definition without change. For instance, the inclusion of
$\Sum{}$-types is accounted for by requiring that each $\tau_m$ is closed under $\Sum{}$-types. This
process is entirely mechanical and their interpretation in the intended model of \TTT{} is
unsurprising, accordingly we refer the reader to \textcite{gratzer:mtt-journal:2021} for further
information.

The central novel connective of \MTT{}---the modal types---is slightly more complex, and so we
briefly touch on their definition. We refer the reader to \textcite{gratzer:mtt-journal:2021} or
\textcite{gratzer:phd} for a careful discussion of how the following definition relates to the
syntax introduced in \cref{sec:ttt}.

\begin{definition}
  A model of \MTT{} without any connectives
  $\prn{\Mor[F]{\Coop{\Mode}}{\CAT}, \prn{\tau_m}_{m : \Mode}}$ supports modal types when equipped
  with the following data for each $\Mor[\mu]{n}{m}$:
  \begin{itemize}
    \item A commuting square $\Mor[\alpha]{F\prn{\mu}^*\prn{\tau_n}}{\tau_m}$ in $\PSH{F\prn{m}}$.
    \item Writing
      $\Mor[m]{F\prn{\mu}^*{\prn{U^\bullet_n}}}{F\prn{\mu}^*{\prn{U^\bullet_n}} \times_{U_m} U^\bullet_m}$
      for the gap-map over $F\prn{\mu}^*U_n$ induced by $\alpha$, we require a \emph{stable weak orthogonality
      structure}~\parencite{awodey:2018} $s : m \Orth \prn{F\prn{\mu}^*U_n}^*\prn{\tau_m}$ in
      $\SLICE{\PSH{F\prn{m}}}{F\prn{\mu}^*U_n}$.
  \end{itemize}
  Roughly, the first point encodes the introduction and formation rules of the $\mu$ modal type and
  the second encodes the elimination rule and its attendant equality.
\end{definition}

We will not spend much time with this definition because, by a result of \textcite{shulman:2023},
one can construct a model of \MTT{} with all connectives from far more recognizable data. In
particular, \opcit{} adapts the local universes coherence construction of \textcite{lumsdaine:2015}
which promotes type-theoretic fibration categories to models.

\begin{definition}
  A type-theoretic fibration category consists of a category $\mathcal{C}$ with all finite limits
  with a chosen class of morphisms $\mathcal{F}$ referred to as fibrations which satisfy the
  following closure conditions:
  \begin{itemize}
    \item $\mathcal{F}$ is closed under identity and composition.
    \item $\mathcal{F}$ is closed under pullback along arbitrary maps.
    \item $\mathcal{F}$ is closed under pushforwards; if $f \in \mathcal{F}$ then the right adjoint
      to pullback along $f$ sends fibrations to fibrations.
    \item Every morphism $f$ admits a factorization $f' \circ i$ such that $f'$ is a fibration and
      $i$ is weakly left orthogonal to fibrations and this factorization is stable under pullbacks.
  \end{itemize}
  We refer to maps weakly left orthogonal to fibrations as \emph{anodyne}.
\end{definition}

\begin{definition}
  A type-theoretic fibration category has a universe if there is a map $\Mor[\tau]{U^\bullet}{U}$
  such that the following conditions hold:
  \begin{itemize}
    \item Both $\tau$ and $\Mor{U}{\ObjTerm{}}$ are fibrations.
    \item Fibrations arising from pulling back $\tau$ ($U$-small fibrations) satisfy all but the
      last closure condition for fibrations.
    \item If $\Mor[f]{X}{Y}$ is a $U$-small fibration, there is a stable factorization of
      $\Mor[\Delta_f]{X}{X \times_Y X}$ as $\Delta'_f \circ i$ where $i$ is weakly orthogonal to
      fibrations and $\Delta'_f$ is a $U$-small fibration.
  \end{itemize}
  A hierarchy of universes is given by a collection of universes $\Mor[\tau_i]{U^\bullet_i}{U_i}$
  such that each $\tau_i$ and $\Mor{U_i}{\ObjTerm{}}$ are $U_{i + 1}$-fibrations.
\end{definition}

\begin{proposition}[\textcite{shulman:2023}, \textcite{gratzer:phd}]
  \label{thm:model:coherence}
  A model of \MTT{} with mode theory
  $\mathcal{M}$ can be constructed from the following pieces of data:
  \begin{itemize}
    \item A pseudofunctor $\Mor[F]{\mathcal{M}}{\CAT}$ such that each $F\prn{\mu}$ has a right
      adjoint $G_\mu$,

    \item For each $m : \Mode$, a choice of arrows making $F\prn{m}$ into a type-theoretic fibration
      category with a hierarchy of universes.

    \item For each $\Mor[\mu]{n}{m}$ and fibration $\Mor[f]{X}{Y}$ in $F\prn{n}$, there is a chosen
      stable factorization $G_\mu\prn{f} = m\brk{\mu}_f \circ i\brk{\mu}_f$ such that
      $\Mor[m\brk{\mu}_f]{X'}{G_\mu\prn{Y}}$ is a fibration and
      $\Mor[i\brk{\mu}_f]{G_\mu\prn{X}}{X'}$ is anodyne.
      Additionally, we require that if  $\Mor[\nu]{m}{o}$ and $\Mor[g]{Z}{G_\nu\prn{G_\mu\prn{Y}}}$,
      the map
      $\Mor[g^*\prn{G_\nu\prn{i\brk{\mu}_f}}]{%
        Z \times_{G_\nu\prn{G_\mu\prn{Y}}} G_\nu\prn{G_\mu\prn{X}}
      }{%
        Z \times_{G_\nu\prn{G_\mu\prn{Y}}} G_\nu\prn{X'}
      }$ is anodyne.
  \end{itemize}
\end{proposition}

In this model, contexts and types at mode $m$ are modeled by objects and fibrations of $F\prn{m}$.
Consequently, our goal is to apply \cref{thm:model:coherence} to the category of cubical spaces and
to show that the resulting model of \MTT{} satisfies the axioms required of \TTT{}.

\subsection{Constructing the cubical spaces model}
\label{sec:model:constr}

First, we require the following result fundamental result from \textcite{shulman:2019}:

\begin{definition}
  Let $\CUBE$ be the category of Dedekind cubes \ie{}, the full subcategory of $\CAT$ generated by
  finite products of the category $\brk{1} = \brc{0 \to 1}$.
\end{definition}

\begin{proposition}[\textcite{shulman:2019}]
  $\PSH[\SSET]{\CUBE}$ and the collection of injective fibrations forms a type-theoretic fibration
  category with a hierarchy of univalent universes.
\end{proposition}

We will now use this type-theoretic fibration category as the basic input for
\cref{thm:model:coherence}. In particular, we consider the functor $\Mor[F]{\TTTMode}{\CAT}$ which
sends the unique object $m$ to $\PSH[\SSET]{\CUBE}$ and interprets the 1-cells as follows:
\[
  F\prn{\GM}\prn{X} = \brk{n} \mapsto X\prn{\brk{0}}
  \qquad
  F\prn{\OM}\prn{X} = \brk{n} \mapsto X\prn{\Op{\brk{n}}}
  \qquad
  F\prn{\SM}\prn{X} = \brk{n} \mapsto X\prn{\brk{0}}^n
\]
Here we have written $\Op{-}$ for the unique functor $\Mor{\CUBE}{\CUBE}$ which sends $\brk{n}$ to
$\brk{n}$ but exchanges $0$ for $1$. We may directly check that all of the required equalities are
satisfied and it is also clear that $F\prn{\GM} \Adjoint F\prn{\SM}$. It remains to define $F$ on
the generating inequalities $\GM \le \ArrId{}$ and $\ArrId{} \le \SM$. We realize these by the
unique morphism in $\Hom[\CUBE]{\brk{n}}{\brk{0}}$ (the counit of the aforementioned
adjunction) and its transpose (the unit).

In order to show that this functor is well-defined, we must show that it sends all 2-cells
$\Mor[\alpha]{\mu}{\nu}$ to the same natural transformations
$\Mor[F\prn{\alpha}]{F\prn{\mu}}{F\prn{\nu}}$. For instance, we must argue that the two 2-cells
$\Mor{\GM \circ \GM}{\GM}$ induced by $\GM \le \ArrId{}$ and whiskering on either side are sent to
the same natural transformation. A priori, this is far from obvious: the presence of various
equalities between modalities in $\TTTMode$ allows for non-obvious 2-cells. In this particular
model, however, our task is far easier. The fact that $\CUBE$, $\SSET$, and $\SET$ all have trivial
centers ensures that $\PSH[\SSET]{\CUBE}$ has a trivial center and we prove the following:
\begin{lemma}
  For each $\Mor[\mu,\nu]{m}{m}$ in $\TTTMode$, there is at most one natural transformation
  $\Mor{F\prn{\mu}}{F\prn{\nu}}$.
\end{lemma}
\begin{proof}
  First, we note that the equations for $\TTTMode$ ensure that we need only consider
  $\mu,\nu \in \brc{\OM,\SM,\GM,\ArrId{}}$. Further, by adjointness, we may ignore the cases where
  $\nu \in \brc{\SM,\OM}$. We check the remaining cases directly.

  For instance, if $\mu = \nu = \ArrId{}$, then we note that a natural transformation
  $\Mor{F\prn{\mu}}{F\prn{\nu}}$ is determined by its behavior at $\Yo{\brk{1}}$; both sides
  preserve products, tensoring by simplicial sets, and colimits and all cubical spaces are generated
  from $\Yo{\brk{1}}$ under these operations. However, there is only one morphism
  $\Mor{\brk{1}}{\brk{1}}$ which preserves endpoints: the identity map. The same reasoning rules out
  the possibility of any natural transformations with $\mu = \ArrId{},\nu = \OM$ or $\mu = \ArrId{},
  \nu = \GM$. It also guarantees that there is exactly one natural transformation
  $\Mor{F\prn{\ArrId{}}}{F\prn{\SM}}$.

  The case where $\mu = \GM$ is similar, so we focus on the remaining case where $\mu = \SM$. This
  case is slightly more complex, since $F\prn{\SM}$ does not preserve colimits and so we can
  immediately reduce to checking behavior at $\Yo{\brk{1}}$. However, we can immediately rule out
  natural transformations $\Mor{F\prn{\SM}}{F\prn{\ArrId{}}}$, $\Mor{F\prn{\SM}}{F\prn{\OM}}$, or
  $\Mor{F\prn{\SM}}{F\prn{\GM}}$ by considering the behavior of such morphisms at $\Yo{\brk{1}}$.
  To show that there is exactly one morphism $\Mor{F\prn{\SM}}{F\prn{\SM}}$, we note that it
  suffices to show that there is exactly one natural transformation
  $\Mor[\alpha]{\brk{0}^*}{\brk{0}^*}$ as functors from $\Mor{\PSH[\SSET]{\CUBE}}{\SSET}$. Now these
  functors again preserve limits, colimits, and tensoring by simplicial sets and so we reduce to
  considering the behavior of $\alpha$ on $\brk{0}$ and $\brk{1}$, where we see it must once more be
  the identity.
\end{proof}

With this well-defined functor $\Mor[F]{\TTTMode}{\CAT}$, we immediately check that each 1-cell in
$\TTTMode$ is sent to a right adjoint and so there is a conjugate functor
$\Mor[\bar{F}]{\Coop{\TTTMode}}{\CAT}$ and it is to this we apply \cref{thm:model:coherence}. The
first requirement of this theorem (that each 1-cell is sent by $\bar{F}$ to a right adjoint) is
automatic in this case. The previously cited result of Shulman already shows that $F\prn{m}$ has an
appropriate choice of type-theoretic model structure. It remains only to handle the third point,
which governs the interpretation of modalities into this putative model. For three of the four
modalities we must interpret, this is trivial:

\begin{lemma}
  For $\mu \in \brc{\SM,\OM,\ArrId{}}$, $F\prn{\mu}$ is a right Quillen functor for both the
  injective and projective model structures.
\end{lemma}

In particular, given a fibration $\Mor[f]{X}{Y}$ in $\bar{F}\prn{m}$, we may trivially factorize \eg{},
$\Mor{F\prn{\SM}\prn{X}}{F\prn{\SM}\prn{Y}}$ into an anodyne map (trivial injective cofibration)
followed by an (injective) fibration, simply by taking the identity map followed by
$F\prn{\SM}\prn{f}$. This factorization satisfies all the desiderata of the third point of
\cref{thm:model:coherence} trivially. Unfortunately, $F\prn{\GM}$ is not a right Quillen functor
for the injective model structure and a more elaborate approach required in this case.

\begin{lemma}
  Given an injective fibration $\Mor[p]{X}{Y}$, there is a factorization of $F\prn{\GM}\prn{f}$ into
  a trivial cofibration $i$ followed by an injective fibration $f$ such that (1) this factorization
  is stable under pullback and (2) if $\Mor[\nu]{m}{m}$ in $\TTTMode$ and
  $\Mor[z]{Z}{F\prn{\nu}\prn{F\prn{\mu}\prn{Y}}}$ then $z^*\prn{F\prn{\nu}\prn{i}}$ is a trivial
  cofibration.
\end{lemma}

\begin{proof}
  We begin by noting that while $p' = F\prn{\GM}\prn{p}$ is not necessarily an injective fibration,
  it is a \emph{projective} fibration. Indeed, $p$ is a projective fibration so that, by definition,
  $p\prn{\brk{0}}$ is a fibration and so $p'$ is a levelwise (\ie{}, projective) fibration. We may
  now apply the cobar construction detailed by \textcite[Definition 8.17]{shulman:2019} to obtain a
  stable factorization of $p'$ into a trivial cofibration $i$ followed by an injective fibration
  $f$, as required.

  For the second condition, we note that $p'$ and $f$ are both projective fibrations and,
  consequently, so too are $F\prn{\nu}\prn{p'}$ and $F\prn{\nu}\prn{f}$. Moreover, since 
  injective trivial cofibrations are precisely a levelwise trivial cofibration, we may also conclude
  that $F\prn{\nu}\prn{i}$ is a injective trivial cofibration---note that trivial cofibrations are
  closed under cartesian products in $\SSET$ for the case where $\nu = \SM$. Finally, since
  pullbacks are computed levelwise and the model structure on $\SSET$ is right proper, we see that
  $z^*\prn{F\prn{\nu}\prn{i}}$ remains a trivial cofibration for all
  $\Mor[z]{Z}{F\prn{\nu}\prn{F\prn{\mu}\prn{Y}}}$.
\end{proof}

\begin{lemma}
  There is a model of \MTT{} with mode theory $\TTTMode$ in $\PSH[\SSET]{\CUBE}$ which
  interprets modalities using the functors described by $F$
\end{lemma}

Finally, we must show that this model of \MTT{} validates the axioms necessary for \TTT{}.
\cref{ax:univalence} is an immediate consequence of \textcite{shulman:2019}. Notably,
\textcite{gratzer:phd} shows that \cref{ax:crisp-id-induction} holds for $\SM$, $\OM$, and
$\ArrId{}$---since they are \emph{dependent right adjoints}---and that it holds for $\GM$ since it
is an internal left adjoint.

\cref{ax:int,ax:op-of-int-is-int,ax:global-points,ax:sqc} are all statements about
sets---$0$-truncated types---and so hold in the above model if and only if they hold in the ordinary
interpretation of type theory in $\PSH{\CUBE}$. That is, for these axioms we may ignore
the simplicial dimension of $\PSH[\SSET]{\CUBE}$. In this case, each of these except
\cref{ax:sqc} is a routine verification.\footnote{See also \parencite{myers:2025} for a general topos-theoretic account.} The duality axiom, finally, follows from a result of
\textcite[Theorem 4.11]{blechschmidt:2023} combined with the following lemma:
\begin{lemma}
  There is a canonical geometric embedding $\Mor{\PSH{\CUBE}}{\SET\brk{\mathrm{DLat}}}$ of cubical
  sets into the classifying topos of distributive lattices. Under this embedding, $\Yo{\brk{1}}$
  in $\PSH{\CUBE}$ is sent to the generic bounded distributive lattice.
\end{lemma}

The remaining axioms
(\cref{ax:int-is-left-adjoint,ax:discrete-iff-crisp,ax:cubes-separate,ax:simplicial-stability}) are
verified by straightforward (if tedious) categorical arguments. In each case, we must show
that a certain function in type theory is an equivalence. After unfolding to the model, this amounts
to showing that a certain morphism is a weak equivalence. In each case, these statements admit
natural $\infty$-categorical interpretations (\eg{}, \cref{ax:int-is-left-adjoint} postulates the
existence of an amazing right adjoint to $\prn{-}^\Int$), but a model-categorical argument is
necessary to connect this fact to the interpretation of \TTT{} in $\PSH[\SSET]{\CUBE}$. Since these
arguments are nearly all the same, we choose to focus on the most complex:
\cref{ax:int-is-left-adjoint}.

\begin{lemma}
  The following type is inhabited in the model of \MTT{} in $\PSH[\SSET]{\CUBE}$:
  \[
    \prn{\DeclVar{A}{\GM}{\Uni}} \to
    \Sum{\DeclVar{A_\Int}{\GM}{\Uni}} \Sum{\DeclVar{\epsilon}{\GM}{\prn{A_\Int}^\Int \to A}}
    \Prod{\DeclVar{B}{\GM}{\Uni}} \IsEquiv\prn{\Modify[\GM]{B \to A_\Int} \to \Modify[\GM]{B^\Int \to A}}
  \]
\end{lemma}
\begin{proof}
  For concision, let us denote the codomain of the above function by $\Con{isRepr}\prn{A}$.
  First, we note that we may prove directly in \MTT{} that $\Con{isRepr}\prn{A}$ is a proposition,
  so this type is inhabited if and only if the following fibration is a trivial fibration in our
  model:
  \[
    \Mor{\Interp{\Sum{\MkMod[\GM]{A} : \Modify[\GM]{\Uni}} \Con{isRepr}\prn{A}}}{\Interp{\Modify[\GM]{\Uni}}}
  \]

  To prove this, we may restrict our attention to showing that the restriction of this fibration
  along the trivial cofibration
  $\Mor{\Con{Const}\prn{\Interp{\Uni}\prn{\brk{0}}}}{\Interp{\Modify[\GM]{\Uni}}}$ is trivial. Here $\Con{Const}\prn{X}$ denotes the constant cubical space with value $X$.

  It suffices to check that this morphism is a trivial fibration fiberwise. We therefore need to
  show that the restriction of this map along
  $\Mor[A]{\Yo{\brk{n}}}{\Con{Const}\prn{\Interp{\Uni}\prn{\brk{0}}}}$ is a trivial fibration. However, 
  $A$ factors through $\Yo{\brk{0}}$. Consequently, we may restrict our attention to the case where
  $n = 0$.

  Accordingly, we may fix $X \in \Interp{\Uni}\prn{\brk{0}}$ \ie{} an injectively fibrant presheaf
  $X$ and we must show that $\Interp{\Con{isRepr}\prn{A}}_{A \mapsto X}$ is inhabited. We now unfold
  $\Con{isRepr}$ to see that it suffices to construct another injectively fibrant presheaf $Y$ such
  that for all fibrant $Z$, there is a natural weak equivalence
  $\Mor{\Hom{Z^{\Yo{\brk{1}}}}{X}}{\Hom{Z}{X'}}$.

  Let us now note that $\prn{-}^{\Yo{\brk{1}}} \cong \Pre{\prn{- \times \brk{1}}}$. This is a left adjoint
  which preserves injective cofibrations and therefore has a right adjoint $\prn{-}_{\Yo{\brk{1}}}$
  which preserves injective fibrations. We choose $X' = \prn{X}_{\Yo{\brk{1}}}$ and the rest
  follows.
\end{proof}

\begin{theorem}
  \TTT{} has a model in $\PSH[\SSET]{\CUBE}$ where types are injective fibrations and modalities are
  interpreted as described above.
\end{theorem}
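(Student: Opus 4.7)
The plan is to leverage the model of \MTT{} with mode theory $\Mode$ in $\PSH[\SSET]{\CUBE}$ established immediately above and verify that each of the ten axioms of \TTT{} is validated in this model. My broad strategy is that each axiom either has a direct homotopical proof in $\PSH[\SSET]{\CUBE}$ using standard facts about the injective model structure or reduces to a statement in the underlying 1-topos $\PSH{\CUBE}$, using that the functors interpreting the modalities are all Quillen and restrict suitably from simplicial presheaves to ordinary presheaves, and can then be proved by direct computation on representables.

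First I would discharge the interval-related axioms. Interpret $\Int$ as the constant simplicial presheaf on $\Yo{\brc{0 \le 1}}$, with its bounded distributive lattice structure inherited from the lattice $\brc{0 \le 1}$; this gives \cref{ax:int}. \cref{ax:int-is-left-adjoint} reduces, upon unfolding, to the identification of the $\PM$-lock functor $(-)^{\Yo{\brc{0 \le 1}}}$ with context extension by $\Int$—both are the same endofunctor on the category of contexts. \cref{ax:op-of-int-is-int} follows from tracing the action of the involution on $\CUBE$ on the representable $\Yo{\brc{0 \le 1}}$, observing that it swaps the endpoints and exchanges $\land$ and $\lor$.

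Next I would verify the generic type-theoretic and simplicial-flavored axioms. Univalence (\cref{ax:univalence}) is already validated by the Shulman--Cisinski model. Crisp induction (\cref{ax:crisp-id-induction}) follows because each functor interpreting a modality is a right Quillen functor and therefore preserves the path-object construction up to the canonical comparison. Global points (\cref{ax:global-points}) reduces to the 1-categorical computation $\Hom_{\CUBE}(\ast, \brc{0 \le 1}) = \brc{0, 1}$. Cubes separate (\cref{ax:cubes-separate}) is a density argument, using that the cubes $\brc{0 \le 1}^n$ generate $\PSH[\SSET]{\CUBE}$ under homotopy colimits and that fibrations are detected by lifting against representables. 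The axioms \cref{ax:discrete-iff-crisp,ax:simplicial-stability} both fall to direct calculations relating global sections, the simplicial reflection, and maps out of representables.

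The main obstacle is \cref{ax:sqc} (duality/SQC). The corresponding internal statement for ordinary presheaves $\PSH{\CUBE}$ is established by \citet{blechschmidt:2023}. The plan is to promote this result to $\PSH[\SSET]{\CUBE}$ by showing that the internal hom $\Hom[\Int]{A}{\Int}$ computed in simplicial presheaves agrees with its 1-categorical counterpart whenever $A$ is an fp $\Int$-algebra. This coincidence holds because fp $\Int$-algebras are built from $\Int$ by finitely many 1-categorical (co)limits, each of which is preserved by the fully faithful embedding $\PSH{\CUBE} \hookrightarrow \PSH[\SSET]{\CUBE}$; the relevant algebras are discrete simplicial presheaves and the relevant homs coincide with their underlying 1-categorical homs. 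With this coincidence in hand, Blechschmidt's argument transports verbatim to yield the required equivalence of types.
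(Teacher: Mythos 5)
Your proposal follows essentially the same approach as the paper: take the established model of \MTT{} in $\PSH[\SSET]{\CUBE}$ and verify the ten axioms one by one, with most reducing to direct computations about the underlying $1$-topos $\PSH{\CUBE}$, and with \cref{ax:sqc} deferred to \citet{blechschmidt:2023}. One small slip: your justification of crisp induction asserts that every modality-interpreting functor is right Quillen, but the paper explicitly notes that the global sections functor is \emph{not} right Quillen for the injective model structure and is instead handled by combining the projective model structure with Shulman's cobar construction; this does not change the overall strategy but the blanket ``right Quillen'' argument would need to be replaced by that more careful analysis for $\GM$.
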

Crucially, within this model simplicial types are precisely those belonging to the subtopos
$\PSH[\SSET]{\SIMP}$~\parencite{streicher:2021}. Consequently, the adequacy result from
\textcite{riehl:2017} applies and we conclude that this model shows that any fact proven about
categories and groupoids inside of \TTT{} is a valid proof for the standard definition of
$\infty$-categories.
\begin{theorem}
  Categories in \TTT{} adequately model $\infty$-categories.
\end{theorem}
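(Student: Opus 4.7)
The plan is to reduce the claim to the adequacy result of \citet{riehl:2017} for the simplicial spaces model by showing that categories in \TTT{}, under the model of the previous theorem, land in the reflective subcategory of simplicial spaces $\PSH[\SSET]{\SIMP} \hookrightarrow \PSH[\SSET]{\CUBE}$, and that there they are precisely (complete) Segal spaces. Since Riehl--Shulman already establish that complete Segal spaces model $(\infty,1)$-categories, this yields the conclusion.

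First, I would pin down the semantics of $\IsSimp$. The functor $\PSH[\SSET]{\SIMP} \hookrightarrow \PSH[\SSET]{\CUBE}$ induced by the standard embedding $\SIMP \hookrightarrow \CUBE$ is fully faithful with both adjoints; it exhibits $\PSH[\SSET]{\SIMP}$ as a reflective subcategory, and the local objects are exactly those satisfying a descent condition with respect to covers of the form $\Int \times \Int = \Delta^2 \cup_{\Int} \Delta^2$ and its higher-dimensional analogues. The internal predicate $\IsSimp$ was defined precisely to say that a type believes this gluing decomposition, so under the interpretation, $\Uni[\Simp]$ classifies injective fibrations in $\PSH[\SSET]{\SIMP}$ and the simplicial monad $\Simp$ interprets the reflection.

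Next, I would observe that the Segal and Rezk conditions from \STT{}, when transported across this embedding, coincide with the Segal and completeness conditions on simplicial spaces: the internal shapes $\Delta^n$ and $\Lambda^2_1$ correspond to the usual simplicial objects (by \cref{lem:ttt:generalized-phoa} and \cref{cor:ttt:delta-is-category}, which identify $\Delta^n$ with the representable simplices), and the Segal/Rezk conditions are formulated by the same equivalences as in \citet{riehl:2017}. Consequently, a category in the sense of our convention (simplicial, Segal, Rezk) interprets as an injective-fibrant complete Segal space in $\PSH[\SSET]{\SIMP}$.

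The final step is to invoke the Riehl--Shulman adequacy theorem: the injective-fibrant complete Segal spaces present the $\infty$-category $\Cat{(\infty,1)}$ of $(\infty,1)$-categories, and arguments carried out in \STT{} against this model are sound. The main obstacle is the first step: justifying that the reflective subcategory $\PSH[\SSET]{\SIMP} \hookrightarrow \PSH[\SSET]{\CUBE}$ is indeed the one cut out internally by $\IsSimp$, and that the interpretation of simplicial shapes matches. Once this is in hand, the rest is a citation of \citet{riehl:2017} together with the preceding theorem.
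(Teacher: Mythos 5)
Your proposal is correct and matches the paper's argument: the paper likewise observes that in the cubical-spaces model, simplicial types are exactly those living in the subtopos $\PSH[\SSET]{\SIMP}$, so that categories in \TTT{} interpret as complete Segal spaces there, and then cites the adequacy result of \citet{riehl:2017}. Your version is somewhat more explicit about why $\IsSimp$ cuts out the reflective subcategory and why the simplicial shapes match, which is useful elaboration but not a different route.
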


\section{Covariant and amazingly covariant families}
\label{sec:covariance}

In \cref{sec:stt}, we saw how groupoids were defined internally as those types satisfying
$\Con{isGroupoid}\,A = \IsEquiv\prn{A \to A^\Int}$. We might hope this induces a
directed univalent universe of groupoids directly, by considering
$\Uni_{\Con{grp}} = \Sum{A : \Uni} \Con{isGroupoid}\,A$. However, this is far from our desired
universe. Most glaringly, while $F : A \to \Uni_{\Con{grp}}$ is a family of groupoids over $A$, this
family is not required to respect the category structure of $A$ in any way. In fact, one may show
that a map $F : \Int \to \Uni_{\Con{grp}}$ is akin to an unstructured relation between $F\prn{0}$
and $F\prn{1}$ and nothing like the function required for directed univalence.

\begin{example}
  By assumption, $\Int$ is a set and so $f = \lambda i.\,i = 0$ is a function $\Int \to \Prop$. Since
  each proposition is a groupoid, this ensures that $f$ factors through $\Uni_{\Con{grp}}$ despite
  the fact that there can be no function from $f\prn{0} \to f\prn{1}$.
\end{example}

In order to rectify this and define $\Space$, we shall require a theory of families of groupoids
where a morphism $f : \Hom{a}{a'}$ in $A$ induces a functor of groupoids $F\prn{a} \to F\prn{a'}$.
\textcite{riehl:2017} termed these \emph{covariant} families and they are further studied by
\textcite{buchholtz:2023}. As mentioned in the introduction, we shall also require a modal version
of covariant families $F : A \to \Uni$ which are covariant not only in $A$ but also in the entire
context.

\subsection{Covariant families and transport}
\label{sec:covariance:covariant}

We begin by recalling the definition of a covariant family from \textcite{riehl:2017}.

\begin{definition}
  A family $A : X \to \Uni$ is \emph{covariant} if the following proposition holds:
  \[
    \IsCovFam\prn{A} =
    \Prod{x : \Int \to X}
    \Prod{a_0 : A\prn{x\,0}}
    \IsContr\prn{
      \Sum{a_1 : A\prn{x\,1}} \Hom[A]<x>{a_0}{a_1}
    }
  \]
\end{definition}

\begin{convention}
  While not strictly necessary, we will assume that the base of a covariant family $A$ is a Segal
  type unless explicitly noted otherwise.
\end{convention}

We recall a few facts about covariant families (also due to \textcite{riehl:2017}).

\begin{lemma}
  If $A : X \to \Uni$ is covariant and $f : Y \to X$ then $A \circ f$ is also covariant.
\end{lemma}
\begin{lemma}\label{lem:covariance:transport}
  Given $\phi : \IsCovFam\prn{A : X \to \Uni}$ and $f : \Hom{x_0}{x_1}$ then there is an induced
  transport map $\Coe_{A \circ f} : A\prn{x_0} \to A\prn{x_1}$. Moreover, transport maps respect
  composition and identities.
\end{lemma}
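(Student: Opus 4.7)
The plan is to repeatedly unpack the contractibility encoded in $\IsCovFam$. For the definition of transport, set
\[
  A\prn{f}\prn{a_0} \defeq \Proj[1]\prn{\Proj[1]\prn{\phi\,f\,a_0}},
  \qquad
  \tilde f \defeq \Proj[2]\prn{\Proj[1]\prn{\phi\,f\,a_0}},
\]
so that $\tilde f : a_0 \to_f A\prn{f}\prn{a_0}$ is the canonical dependent lift of $f$ at $a_0$. For the identity law, I would observe that $\lambda i.\,a_0$ assembles into a dependent morphism $a_0 \to_{\ArrId{x}} a_0$, with the endpoint conditions holding trivially because the underlying path of $\ArrId{x}$ is constant. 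Then $\prn{a_0,\lambda i.\,a_0}$ inhabits the contractible type $\Sum{a : A\prn{x}}\prn{a_0 \to_{\ArrId{x}} a}$ picked out by $\phi\,\ArrId{x}\,a_0$, forcing $A\prn{\ArrId{x}}\prn{a_0} = a_0$.

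For composition, given $f : x_0 \to x_1$ and $g : x_1 \to x_2$, set $a_1 \defeq A\prn{f}\prn{a_0}$ and $a_2 \defeq A\prn{g}\prn{a_1}$, with canonical lifts $\tilde f$ and $\tilde g$. Contractibility of $\Sum{a : A\prn{x_2}}\prn{a_0 \to_{g \circ f} a}$ reduces the goal $A\prn{g \circ f}\prn{a_0} = a_2$ to producing any dependent morphism $\tilde h : a_0 \to_{g \circ f} a_2$.

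The main obstacle is the construction of $\tilde h$, since this requires a higher-dimensional form of covariance than the one-dimensional lifting built into $\IsCovFam$. My approach would be to use the Segal structure of $X$ to obtain a 2-simplex $\sigma : \Delta^2 \to X$ restricting to $\prn{f,g}$ on $\Lambda^2_1$ and to $g \circ f$ on the diagonal, and then to lift $\sigma$ to a dependent 2-simplex in $A$ starting at $a_0$; its diagonal is the required $\tilde h$. The cleanest way to obtain this lift is to separately establish that the projection $\Proj[1] : \tilde A \to X$ is a left fibration, so that $\tilde A$ is Segal and $\Proj[1]$ is a functor. Under this identification, $\tilde g \circ \tilde f$ makes sense in $\tilde A$, is a morphism $\prn{x_0,a_0} \to \prn{x_2,a_2}$ projecting to $g \circ f$, and unpacks to precisely the sought dependent morphism, whence $A\prn{g \circ f}\prn{a_0} = A\prn{g}\prn{A\prn{f}\prn{a_0}}$ by uniqueness.
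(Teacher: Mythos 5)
Your definition of $A(f)$ is exactly the one the paper gives, and your verification of the identity law via the constant dependent lift and contractibility is the standard one. Since the paper's ``Proof Sketch'' stops at the definition and defers the rest to the reader, there is little to compare directly; your filling-in matches what is intended. One point worth flagging: you correctly identify the only genuine subtlety, namely that the composition law is \emph{not} immediate from the one-dimensional lifting that $\IsCovFam$ provides, but requires promoting it to a two-dimensional lift over the composition 2-simplex in $X$. Your proposed route---show that the projection $\tilde A \to X$ is a covariant (left) fibration, deduce that $\tilde A$ is Segal, compose $\tilde f$ and $\tilde g$ in $\tilde A$, and then use that the projection is a functor together with uniqueness of lifts---is precisely the argument of \citet{riehl:2017} for covariant families over Segal bases (where the paper's standing convention that the base is Segal is in force). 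So the proposal is correct and, modulo citing the fact that $\tilde A$ inherits the Segal property, complete.
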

\begin{proof}[Proof Sketch]
  One defines the transport map $\Coe_{A \circ f}\prn{a_0} = \Proj[1]{\prn{\phi\,f\,a_0}}$.
  We leave it to the reader to check that this has the appropriate type and that the expected
  identities are satisfied.
\end{proof}

\begin{lemma}
  \label{lem:covariance:covariant-implies-groupoid} Given $\phi : \IsCovFam\prn{A : X \to \Uni}$ and
  $x : X$, the fiber $A\prn{x}$ is a groupoid.
\end{lemma}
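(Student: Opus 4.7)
The plan is to apply the covariance hypothesis to the constant path $c_x \defeq \lambda i.\,x : \Int \to X$ and reduce the statement to the elementary principle that any map between contractible types is an equivalence.

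First, I would verify that for $c_x$ the type of dependent homomorphisms $a_0 \to_{c_x} a_1$ coincides (up to equivalence) with the ordinary homomorphism type $\hom_{A(x)}(a_0, a_1)$. Since $A \circ c_x$ is constant at $A(x)$, the transports $\prn{\Proj[2]{c_x}}_*$ and $\prn{\Proj[3]{c_x}}_*$ appearing in the definition of the dependent hom are along reflexivity paths, so the type unfolds to pairs $\prn{\phi : \Int \to A(x)}$ equipped with identifications $\phi(0) = a_0$ and $\phi(1) = a_1$. Unfolding $\IsCovFam(A)$ at $c_x$ and an arbitrary $a_0 : A(x)$ then yields that $\Sum{a_1 : A(x)} \hom_{A(x)}(a_0, a_1)$ is contractible.

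Next, I would pair this with the contractibility of the based path space $\Sum{a_1 : A(x)} (a_0 = a_1)$ (singleton induction). There is a canonical fiberwise map over $A(x)$ from based paths to homomorphisms, sending $\Con{refl}$ to the constant homomorphism $\lambda i.\,a_0$. On total spaces, this is a map between two contractible types, hence automatically an equivalence; by the standard equivalence between fiberwise and total equivalences, $(a_0 = a_1) \to (a_0 \to a_1)$ is an equivalence at every $a_1$. Since $a_0 : A(x)$ was arbitrary, this is exactly the required statement that $A(x)$ is $\Int$-null, and therefore a groupoid in the sense of the paper.

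I anticipate no serious obstacle. The only mild care needed is the identification in the first step: one has to inspect the transports built into the definition of $\hom^A_{c_x}$ and check that they vanish to reflexivity because $A \circ c_x$ is literally constant. Everything else is routine manipulation of contractible total spaces.
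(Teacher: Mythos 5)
Your proof is correct. The paper does not prove this lemma itself (it is recalled from Riehl and Shulman), and your argument---specializing covariance to the constant path at $x$, collapsing the dependent hom over that path to the ordinary hom type in $A(x)$, and then comparing the two contractible total spaces to conclude the fiberwise map from identifications to arrows is an equivalence---is precisely the proof given in the cited source.
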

\begin{proof}
  Since covariant families stable under base-change, $A\prn{x}$ is a covariant family $\ObjTerm{}
  \to \Uni$. Unfolding definitions, we conclude that the following holds:
  \[
    \prn{a : A\prn{x}} \to \IsContr\prn{\Sum{f : \Int \to A\prn{x}} f\prn{0} = a}
  \]
  This is equivalent to the proposition that
  $\Con{eval}\prn{0} : \prn{\Int \to A\prn{x}} \to A\prn{x}$ is an equivalence. By 3-for-2, this
  implies that $\Con{const} : A\prn{x} \to A\prn{x}^\Int$ is an equivalence, as
  $\Con{eval}\prn{0} \circ \Con{const} = \ArrId{}$.
\end{proof}

It is often helpful to rephrase covariant families in terms of orthogonality conditions:
\begin{definition}
  Given a type family $A : X \to \Uni$, we shall write $\tilde{A}$ for the \emph{total type}
  $\Sum{x : X} A\prn{x}$.
\end{definition}
\begin{lemma}[\protect{\cite[Theorem 8.5]{riehl:2017}}]
  A family $A : X \to \Uni$ is covariant if and only if the projection map $\tilde{A} \to X$ is
  right orthogonal to $\EmbMor{\brc{0}}{\Int}$ \ie{}, if $\prn{\tilde{A}}^\Int \to
  \prn{\tilde{A}}^{\brc{0}} \times_{X^{\brc{0}}} X^\Int$ is an equivalence.
\end{lemma}
This formulation also makes plainer the fact that covariant families are the simplicial type theory
analogue of left fibrations in ordinary $\infty$-category theory~\parencite{joyal:2008,lurie:2009}.

Finally, using the characterization of covariance as an orthogonality condition, we are able to
prove the following:
\begin{lemma}
  \label{lem:covariance:flat-covariance-implies-simplicial}
  If $\DeclVar{A}{\GM}{X \to \Uni}$ is covariant then $A$ is simplicial \ie{}, it factors through
  $\Uni[\Simp] \to \Uni$.
\end{lemma}
\begin{proof}
  By \cref{ax:cubes-separate}, it suffices to show the following:
  \[
    \IsEquiv\prn{
      \Modify[\GM]{\Int^n \to \Sum{x : X} \IsSimp\prn{A\prn{x}}}
      \to
      \Modify[\GM]{\Int^n \to X}
    }
  \]
  In particular, we may assume that $X = \Int^n$ by restricting $A$ and so we hereafter also
  assume that $X$ is simplicial.

  With this in mind, $A$ factors through $\Uni[\Simp]$ if and only if $\Sum{x : X} A\prn{x}$ is
  simplicial. Next, we observe that $\Sum{x : X} A\prn{x}$ is simplicial if and only if the
  projection map $\Sum{x : X} A\prn{x} \to X$ is right orthogonal to
  $\prn{\Sum{i,j : \Int} i \le j \lor j \le i} \to \Int \times \Int$.

  Since $A$ and $X$ are both $\GM$-annotated, we may use \cref{ax:cubes-separate} again to reduce to
  showing that $\Sum{x : X} A\prn{x} \to X$ is $\GM$-orthogonal to
  $\Int^n \times \prn{\Sum{i,j : \Int} i \le j \lor j \le i} \to \Int^{n + 2}$.
  We will now argue that this map is in the left class generated by the inclusion $\brc{0} \to
  \Int^n$; we know that $\Sum{x : X} A\prn{x} \to X$ is right orthogonal to such maps by virtue of
  our assumption that $A$ was covariant.

  To this end, we consider the canonical inclusions
  $\brc{0} \to \Int^n \times \prn{\Sum{i,j : \Int} i \le j \lor j \le i}$
  and $\brc{0} \to \Int^{n + 2}$.
  Using the 3-for-2 property available for the left class of maps, to show that
  $\Sum{x : X} A\prn{x} \to X$ is right orthogonal to 
  $\Int^n \times \prn{\Sum{i,j : \Int} i \le j \lor j \le i} \to \Int^{n + 2}$,
  it suffices to show that (1) it is orthogonal to $\brc{0} \to \Int^{n + 2}$ and (2)
  it is orthogonal to 
  $\brc{0} \to \Int^n \times \prn{\Sum{i,j : \Int} i \le j \lor j \le i}$.
  The first claim is immediate from our assumption that $A$ is covariant.

  For the second claim, we note that
  $\brc{0} \to \Int^n \times \prn{\Sum{i,j : \Int} i \le j \lor j \le i}$ is the pushout of the maps
  $\brc{0} \to \Int^n \times \Int$ and $\brc{0} \to \Int^n \times \Delta^2$. We note that the latter
  is a retract of $\brc{0} \to \Int^n \times \Int^2$ and so both of these maps are orthogonal to
  $\Sum{x : X} A\prn{x} \to X$, again by our assumption that $A$ is covariant. The conclusion follows
  by the closure of left classes of maps under colimits.
\end{proof}

\subsection{Amazing covariance}
\label{sec:covariance:amazing}

We now refine our search from a universe of groupoids to a universe of \emph{covariant fibrations}.
That is, we wish to define some universe $\Space$ such that a map $A \to \Space$ corresponds (in
some sense) to a covariant fibration over $A$. Let us leave this correspondence imprecise for now
and consider the behavior of $\Space$.

In light of \cref{lem:covariance:covariant-implies-groupoid}, the points of $\Space$ will be
covariant over $\ObjTerm{}$ \ie{} groupoids. However, elements $f : \Int \to \Space$ will
become richer: they are covariant fibrations $B \to \Int$, therefore consist not only of a pair of
groupoids $B_0,B_1$ over $0$ and $1$, but also include a transport function $B_0 \to B_1$
(\cref{lem:covariance:transport}). Phrased differently, a homomorphism $F : \Int \to \Space$
contains an ordinary function $F\prn{0} \to F\prn{1}$.

Clearly this is a step towards directed univalence over $\Sum{A : \Uni} \Con{isGroupoid}\,A$, but it
is far from obvious how to define such a type $\Space$. In particular, while we have sketched how
behavior ought to differ between elements of $\Space$ compared with functions $\Int \to \Space$ and
so on, we cannot really cleanly divide elements of $\Space$ from functions into $\Space$ within type
theory. Within dependent type theory, every element of $\Space$ is formed to some context $\Gamma$
and if that $\Gamma$ happens contains a variable $i : \Int$, then this term will induce a function
$\Int \to \Space$.

There is an even more straightforward way to see why this causes a problem. Suppose we attempt to
define another subtype of $\Uni$ to isolate this universe of covariant fibrations
$\Sum{A : \Uni} \Con{isCov}\prn{A}$. A cursory inspection reveals this to be nonsense: being
covariant is not a property of $A$, it is a property of a family of types $A : X \to \Uni$. So in
this `definition', what exactly is $A$ covariant over?

It is here that modalities are vital: $A$ should be covariant with respect to the entire ambient
context. This is not something that can be expressed in standard type theory, but with the amazing
right adjoint to $\Int \to -$ we are able to define such a subtype.

\paragraph{Types covariant over $\Gamma$} We define a predicate on types $\IsACov : \Uni \to \Prop$
which encodes whether a type is covariant over the entire context following \textcite{riley:2024}.
We note that this predicate is a refinement of \textcite{licata:2018} which capitalizes on the
existence of the amazing right adjoint as a proper modality. The construction of this predicate
proceeds in two steps:
\begin{enumerate}
  \item We begin by observing that $\IsCovFam_X$ has the type $\prn{X \to \Uni}
    \to \Prop$ for each type $X : \Uni$. In particular,
    $\IsCovFam_\Int : \prn{\Int \to \Uni} \to \Prop$
  \item As $\IsCovFam_\Int$, we may apply \cref{lem:ttt:amazing-transpose} to obtain a function
    $\Uni \to \Prop_\Int$.
  \item Finally, we post-compose with the dependent version of $(-)_\Int$ to construct a function
    $\Uni \to \Prop$.
\end{enumerate}

All told, we obtain a predicate $\IsACov : \Uni \to \Prop$ which encodes whether a given type is covariant
over the entire context.

\begin{definition}
  A type is said to be \emph{amazingly covariant} when it satisfies $\IsACov$.
\end{definition}

We begin by substantiating the claim that $\IsACov\prn{A}$ implies that $A$ is truly covariant over
all variables in the context.

\begin{lemma}
  \label{thm:covariance:amazing-implies-ordinary}
  Given $F : X \to \Sum{A : \Uni} \IsACov\prn{A}$, the type family $F_0 = \Proj[1] \circ F$ is
  covariant.
\end{lemma}
\begin{notation}
  We will write $\Uni[\ACov]$ for the subtype $\Sum{A : \Uni} \IsACov\prn{A}$.
\end{notation}

\begin{proof}
  We must show $\IsCovFam\prn{F_0}$. First, we note that since $X \to \Uni$ being covariant
  implies that the composite $Y \to X \to \Uni$ is covariant for all $Y \to X$, we may reduce to the generic case
  where $X = \Uni[\ACov]$ and, in particular, where $F$ is $\GM$-annotated.

  Next, note that $\Uni[\ACov]$ fits into the following pullback diagram:
  \[
    \DiagramSquare{
      width = 4cm,
      height = 2cm,
      nw = \Uni[\ACov],
      ne = {\ObjTerm{}}_\Int,
      se = \Prop_\Int,
      sw = \Uni,
      south = \Transpose{\IsCov},
    }
  \]
  We therefore note that $X \to \Uni[\ACov]$ is equivalent to asking for a pair of maps $F_0 : X \to
  \Uni$ and $F_1 : X \to {\ObjTerm{}}_\Int$ such that the induced maps $X \to \Prop_\Int$ agree.
  Since $X \to \Uni[\ACov]$ along with the above pullback diagram consists only of $\GM$-annotated
  objects, we may therefore transpose to conclude that the induced maps $X^\Int \to \Prop$ agree.
  Unfolding, these maps are given as follows:
  \begin{gather*}
    \lambda f : X^\Int.\, \IsCov\prn{F_0 \circ f}
    \\
    \lambda f : X^\Int.\, \ObjTerm{}
  \end{gather*}
  Consequently, that these two maps agree amounts to a proof that $F_0$ is covariant, as required.
\end{proof}

We emphasize that in the above $\Uni[\ACov]$ does not ``know about'' $X$. In particular, this is a
subtype of $\Uni$ such that any map into this subtype induces covariant families.

Finally, the additional burden of being covariant over the context does not apply when working under
$\Modify[\GM]{-}$, a reflection of the fact that $\Modify[\GM]{A}$ is ``a proof of $A$ not depending
on the context.''
\begin{lemma}
  \label{lem:covariance:global-amazingness-degenerates} If $\DeclVar{X}{\GM}{\Uni}$ and
  $\DeclVar{A}{\GM}{X \to \Uni}$ then $\Modify[\GM]{\prn{x : X} \to \IsACov\prn{A\prn{x}}} =
  \Modify[\GM]{\IsCovFam\prn{A}}$.
\end{lemma}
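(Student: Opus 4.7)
The plan is to prove both sides are equivalent as propositions by constructing maps in each direction; since both sides are $\Modify[\GM]$-wraps of propositions, they are themselves propositions and no further coherence is required.

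The forward direction is essentially a direct invocation of \cref{thm:covariance:amazing-implies-ordinary}. Unwrapping $\Modify[\GM]$ on the left-hand side gives us, crisply, a function $\prn{x : X} \to \IsACov\prn{A\prn{x}}$, and pairing this with $A$ itself presents $A$ as a crisp map $X \to \Uni[\ACov]$. Applying the theorem extracts $\IsCovFam\prn{A}$ crisply, which we then re-wrap under $\Modify[\GM]$.

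The backward direction is where the work is concentrated. The strategy is to use a dependent variant of the transposition across the $\PM \Adjoint \AM$ adjunction (\cref{lem:ttt:transpose}) to commute the inner $\Modify[\AM]$ past the $\prn{x : X}$ binder, turning the quantification over a variable $x : X$ into a quantification over a path $y : \prn{i : \Int} \to X^\eta \cdot i$ inside an outer $\Modify[\AM]$. Once the entire construction sits beneath $\Modify[\GM]$, the composite $\Modify[\GM]{\Modify[\AM]{-}}$ collapses to $\Modify[\GM]{-}$ via the mode equation $\GM \circ \AM = \GM$. Identifying $\prn{i : \Int} \to X^\eta \cdot i$ with $\Modify[\PM]{X^\eta}$ by \cref{lem:ttt:rho-is-path}, which behaves like the path type $\Int \to X$ in the crisp setting, the resulting statement is precisely the quantification over paths $\Int \to X$ demanded by $\IsCovFam\prn{A}$.

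The main obstacle is the modal bookkeeping around the $\eta$-shifts in the definition of $\IsACov$: one must carefully verify that the body of the transposed $\Pi$-type, after performing the appropriate substitutions, coincides with the claim that $A$ restricted along each path $y$ is covariant in the $\AM$-context. This is ultimately an unpacking of definitions combined with the general modal machinery of \MTT{}; the conceptual content is that, under $\PM \Adjoint \AM$ and the degeneracy $\GM \circ \AM = \GM$, requiring amazing covariance pointwise at each crisp $x : X$ is interchangeable with requiring genuine covariance of the family $A$ over all paths in $X$.
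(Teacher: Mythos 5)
Your argument is correct and is essentially the proof the paper leaves implicit: it mirrors the proof of \cref{thm:covariance:amazing-implies-ordinary}, with the non-invertible counit step $\Coe^{\epsilon}$ there replaced by the invertible collapse $\Modify[\GM]{\Modify[\AM]{-}} \Equiv \Modify[\GM]{-}$ coming from the mode-theory equation $\GM \circ \AM = \GM$, which is exactly what upgrades the one-way implication to an equality in the crisp setting. The only step worth spelling out is the final identification: after transposition the body is $\prn{y : \Int \to X} \to \IsCovFam\prn{A \circ y}$ rather than $\IsCovFam\prn{A}$ on the nose, and one closes this gap by noting that covariance of $A$ gives covariance of each restriction $A \circ y$ (precompose lifting problems with $y$), while the converse follows by evaluating at the identity reparametrization $\ArrId{} : \Int \to \Int$.
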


\begin{proof}
  Since both $\IsACov$ and $\IsCovFam$ are propositions, it suffices to construct a bi-implication.
  First, let us suppose that $\DeclVar{z}{\GM}{\prn{x : X} \to \IsACov\prn{A\prn{x}}}$ holds.
  Applying the introduction rule for $\Modify[\GM]{-}$, we wish to show that $\IsCovFam\prn{A}$
  holds. Let us note that $z$ implies that $A : X \to \Uni$ factors through $\Uni[\ACov]$. That is,
  we have a diagram of the following shape (consisting of $\GM$-types and functions):
  
  \[
    \begin{tikzpicture}[diagram]
      \SpliceDiagramSquare{
        width = 3cm,
        height = 1.5cm,
        nw = \Uni[\ACov],
        ne = {\ObjTerm{}}_\Int,
        se = \Prop_\Int,
        sw = \Uni,
        south = \Transpose{\IsCov},
      }
      \node (A) [left = 3cm of sw] {$X$};
      \path[->] (A) edge node [below] {$A$} (sw);
      \path[->,exists] (A) edge (nw);
    \end{tikzpicture}
  \]

  By the naturality of transposition, we conclude that
  $\IsCovFam \circ A_* = \lambda \_.\,\ObjTerm{}$ as functions from $X^\Int \to \Prop$.
  Consequently, $A$ is covariant as required.

  For the reverse direction, suppose that $\DeclVar{z}{\GM}{\IsCovFam\prn{A}}$. It suffices to
  construct a (necessarily unique) function $\DeclVar{A_0}{\GM}{X \to \Uni[\ACov]}$ with an
  identification $\Proj \circ A_0 = A$. By the universal property of pullbacks along with the
  identification $\ObjTerm{} = {\ObjTerm{}}_\Int$, it therefore suffices to construct an
  identification between $\IsCovFam_\Int \circ A$ and $\lambda \_.\,{\ObjTerm{}}_\Int$. After
  transposing, we therefore must show that the map $X^\Int \to \Prop$ sending $x : \Int \to X$ to
  $\IsCovFam\prn{A \circ x}$ is equal to the map sending $x : \Int \to X$ to $\ObjTerm{}$. This,
  finally, follows immediately from our assumption that $\IsCovFam\prn{A}$ holds.
\end{proof}

\subsection{Closure properties of amazing covariance}
\label{sec:covariance:closure}

Given the strength of $\IsACov$, the reader may wonder how one ever proves that $\IsACov\prn{A}$
for any element $A : \Uni$. In this section, we give a partial answer by building up a stock of
amazingly covariant types. We shall see in \cref{sec:space} that these results undergird the closure
properties of our directed univalent universe. Our main result is the following:
\begin{lemma}
  \label{thm:covariance:acov-closure}
  In what follows, let us assume that $A,A_0,A_1 : \Uni$ and $B : A \to \Uni$.
  \begin{enumerate}
  \item If $\DeclVar{X}{\GM}{\Uni}$ then $\IsACov\prn{\Modify[\GM]{X}}$.
  \item If $i : \Int$ then $\IsACov\prn{i = 1}$.
  \item If $\IsACov\prn{A}$ and $a,b : A$ then $\IsACov\prn{a = b}$.
  \item If $\IsACov\prn{A}$ and $\prn{a : A} \to \IsACov\prn{B\prn{a}}$ then
    $\IsACov\prn{\Sum{a : A} B\prn{a}}$.
  \item If $\IsACov\prn{A_0}$, $\IsACov\prn{A_1}$ and $f,g : A_0 \to A_1$ then
    $\IsACov\prn{\Coeq\prn{f,g}}$.\footnote{Here $\Coeq\prn{f,g}$ denotes the coequalizer of $f,g$
      realized as a higher-inductive type~\parencite{hottbook}.}
  \end{enumerate}
  Moreover, $\IsACov$ is closed under $\Pi$-types provided modalities are used to
  manage the variance swap:
  \begin{enumerate}[resume]
  \item If $\DeclVar{C}{\OM}{\Uni}$ and $D : \MFn[\OM]{A}{\Uni}$ such that $\Modify[\OM]{\IsACov\prn{C}}$
    and $\prn{\DeclVar{c}{\OM}{C}} \to \IsACov\prn{D\prn{c}}$ then
    $\IsACov\prn{\prn{\DeclVar{c}{\OM}{C}} \to D\prn{c}}$.
  \end{enumerate}
\end{lemma}
We record a useful special case of (6) which follows from the involutive property of $\Modify[\OM]{-}$:
\begin{corollary}
  If $\DeclVar{X}{\GM}{\Uni}$, $B : X \to \Uni$ such that $\Prod{x : X} \IsACov\prn{B\prn{x}}$
  then $\IsACov\prn{\Prod{x : X} B\prn{x}}$.
\end{corollary}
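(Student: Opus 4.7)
The plan is to deduce the corollary from part (5) of \cref{thm:covariance:acov-closure} by exhibiting the ordinary dependent product $\Prod{x:X} B(x)$ as a modal dependent product of the shape $\prn{\DeclVar{c}{\OM}{C}} \to D(c)$. The bridge between the two is the involutive equation $\OM \circ \OM = \ArrId{}$, which in particular yields $\Modify[\OM]{\Modify[\OM]{A}} \Equiv A$ for any $A$ formed under the appropriate lock.

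Concretely, since $X$ is crisp and $\GM \circ \OM = \GM$, the hypothesis $\DeclVar{X}{\GM}{\Uni}$ propagates through an additional $\OM$-lock, so $X$ remains accessible inside the $\OM$-context. I therefore set $C \defeq \Modify[\OM]{X}$, formed under $\OM$. Involutivity then supplies a canonical equivalence $\Modify[\OM]{C} \Equiv X$ in the ambient context, under which the modal product $\prn{\DeclVar{c}{\OM}{C}} \to D(c)$ corresponds to $\prn{x:X} \to D'(x)$ for the family $D'$ obtained by transposing $D$ along the involution. Choosing $D$ to be the transpose of $B$ makes this match $\prn{x:X} \to B(x)$ exactly.

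With $C$ and $D$ fixed, the two hypotheses of part (5) are easy to discharge. The fiberwise requirement $\prn{\DeclVar{c}{\OM}{C}} \to \IsACov\prn{D(c)}$ translates back across the involution to $\prn{x:X} \to \IsACov\prn{B(x)}$, which is our assumption. The remaining hypothesis $\Modify[\OM]{\IsACov\prn{C}}$ likewise transports along the involution to a statement about $X$ in the $\OM$-context, where $X$ is still crisp and one invokes part (1) of \cref{thm:covariance:acov-closure} (together with the identification of $\Modify[\OM]{X}$ with the crisp presentation of $X$ inside the $\OM$-lock) to conclude.

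The principal obstacle here is not mathematical but modal bookkeeping: one must verify that the annotations and locks line up so that the involution really does transport the hypotheses and the conclusion between the two forms of $\Pi$-type. The conceptual content is simply that crispness of $X$ combined with $\OM \circ \OM = \ArrId{}$ is enough to simulate the modal $\Pi$-type required by part (5) using an ordinary $\Pi$-type.
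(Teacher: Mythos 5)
Your overall strategy is the one the paper indicates: the paper's entire justification is that the corollary ``follows from the involutive property of $\Modify[\OM]{-}$,'' i.e.\ precisely your reduction of the ordinary $\Pi$-type over $X$ to the modal $\Pi$-type of \cref{thm:covariance:acov-closure} via $\OM \circ \OM = \ArrId{}$, taking $C \defeq \Modify[\OM]{X}$ (legitimately formed under the $\OM$-lock, since $\GM$-variables are accessible there). The transport of $B$ to $D$ and of the fiberwise hypothesis are fine.

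The gap is in your discharge of the remaining hypothesis $\Modify[\OM]{\IsACov\prn{C}}$. You appeal to part (1) of \cref{thm:covariance:acov-closure}, but that item establishes $\IsACov\prn{\Modify[\GM]{X}}$, not $\IsACov\prn{X}$: the type $\Modify[\GM]{X}$ is the discrete reflection of $X$ and coincides with $X$ only when $X$ is discrete, so it cannot be identified with ``the crisp presentation of $X$ inside the $\OM$-lock.'' What you actually need is $\Modify[\OM]{\IsACov\prn{\Modify[\OM]{X}}}$, and for a crisp type this is a genuine condition: by \cref{lem:covariance:global-amazingness-degenerates}, $\IsACov$ of a crisp type degenerates to $\IsCovFam$ of the corresponding constant family, and by \cref{lem:covariance:covariant-implies-groupoid} a constant family is covariant exactly when its fiber is a groupoid. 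So the hypothesis you must discharge amounts to ``$X$ is a groupoid,'' which does not follow from $\DeclVar{X}{\GM}{\Uni}$ alone --- it fails already for $X = \Int$ or $X = \Delta^n$, which are crisp but not groupoids. Your reduction therefore proves the corollary only under the additional assumption that $X$ is a (crisp) groupoid; for a general crisp $X$ the instantiation of the modal $\Pi$-closure is blocked, and one would need a separate argument (e.g.\ transposing along $\AM$ and showing directly that covariant families are closed under $X$-indexed products for constant $X$, which is an orthogonality/limit argument not routed through contravariance of the base).
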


We limit ourselves to proving three representative cases of the above theorem: (2), (4), (6). These
are indicative of the remaining cases (and those we have particular use for in
\cref{sec:space}).

\begin{lemma}
  If $i : \Int$ then $\IsACov\prn{i = 1}$.
\end{lemma}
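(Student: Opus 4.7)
The strategy is to unfold $\IsACov\prn{i = 1}$ to the $\AM$-modal statement
\[
  \Modify[\AM]{\IsCovFam\prn{\lambda k.\,\prn{i = 1}^\eta \cdot k}}
\]
and then, following the pattern of \cref{thm:covariance:amazing-implies-ordinary}, use the transpose across $\PM \Adjoint \AM$ to reinterpret this as a statement about the covariance of the family induced by the ambient interval variable $i$. Concretely, the plan is to reduce to showing that the family $\lambda k : \Int.\,\prn{k = 1}$ is covariant over $\Int$ in the ordinary sense of \cref{sec:covariance:covariant}; since $\IsCovFam$ is a proposition, we can freely push past the $\AM$ modality via $\Coe^{\epsilon}$ as in the proof of \cref{thm:covariance:amazing-implies-ordinary}.

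With this reduction in place, we must verify that for every $x : \Int \to \Int$ and every $a_0 : \prn{x\,0 = 1}$, the sum type
\[
  \sum_{a_1 : \prn{x\,1 = 1}} \prn{a_0 \to_x a_1}
\]
is contractible, where the morphism type unfolds to paths $\phi : \prn{k : \Int} \to \prn{x\,k = 1}$ matching $a_0$ and $a_1$ at the endpoints. The key tool is Phoa's principle (\cref{lem:ttt:phoa}): every such $x$ admits the normal form $x\prn{k} = x\prn{0} \lor \prn{k \land x\prn{1}}$. Given $a_0 : x\prn{0} = 1$, the substitution $x\prn{0} = 1$ collapses the normal form to $x\prn{k} = 1 \lor \prn{k \land x\prn{1}} = 1$, forcing $x$ to be constantly $1$. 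Consequently $x\,1 = 1$ is automatic, and the connecting path $\phi$ lives in the contractible type $\prn{k : \Int} \to \prn{1 = 1}$. All in all, the data we must supply is uniquely determined because $i = 1$ is a proposition (as $\Int$ is an \HSet).

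The main obstacle is bookkeeping, not mathematical content: correctly tracing how the substitution $\prn{-}^\eta \cdot k$ acts on the type $i = 1$ whose only free variable is $i : \Int$ from the ambient context, and confirming that the dependent transpose across $\PM \Adjoint \AM$ produces exactly the family $\lambda k : \Int.\,\prn{k = 1}$ (or a family equivalent to it). Once this identification is settled, the remainder of the argument is the direct application of Phoa's principle sketched above, together with the propositional character of identifications in a set. No other closure properties from \cref{thm:covariance:acov-closure} are required, so the argument is genuinely self-contained.
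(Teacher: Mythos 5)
Your verification half is exactly the paper's: both arguments bottom out in the observation that, by Phoa's principle (\cref{lem:ttt:phoa}), any $g : \Int \to \Int$ with $g\prn{0} = 1$ satisfies $g\prn{k} = g\prn{0} \lor \prn{k \land g\prn{1}} = 1$ for all $k$, and the remaining contractibility data is automatic because $\Int$ is an h-set. The gap is in the reduction step, which is where the actual content of \emph{amazing} covariance lives. You propose to pass from $\Modify[\AM]{\IsCovFam\prn{\dots}}$ to an un-modalized covariance statement ``via $\Coe^{\epsilon}$, since $\IsCovFam$ is a proposition.'' But $\Coe^{\epsilon} : \Modify[\PM \circ \AM]{X} \to X^\epsilon$ is the counit direction: it \emph{consumes} an element of the $\AM$-modal type, which is exactly what \cref{thm:covariance:amazing-implies-ordinary} does when deriving ordinary covariance \emph{from} amazing covariance. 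Here you need the opposite direction---you must \emph{construct} an element of $\Modify[\AM]{\IsCovFam\prn{\lambda j.\,\prn{i=1}^\eta \cdot j}}$---and propositionality of $\IsCovFam$ does not invert the counit. If it did, $\IsACov$ would collapse to ordinary covariance and the whole apparatus of \cref{sec:covariance:amazing} would be redundant.

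The missing idea is the paper's opening move: generalize the goal to $\Modify[\GM]{\prn{i : \Int} \to \IsACov\prn{i = 1}}$. Under $\Modify[\GM]{-}$ the ambient context is emptied out and $\GM \circ \AM = \GM$, so \cref{lem:ttt:transpose} legitimately converts the goal into $\Modify[\GM]{\prn{f : \Int \to \Int} \to \IsCovFam\prn{\lambda j.\,f\prn{j} = 1}}$: the ambient variable $i$ reappears as an arbitrary path $f$ of the context, and the family to be shown covariant is $\lambda j.\,f\prn{j} = 1$, not $\lambda k.\,k = 1$ on the nose. Your version is interderivable with this one---checking covariance of $\lambda k.\,k=1$ against every $x : \Int \to \Int$ yields the same endpoint implication applied to $f \circ x$---so your Phoa computation does discharge the right obligation once the reduction is repaired; but as written the modal step is the one that would fail.
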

\begin{proof}
  To prove this result, we shall switch to a more general goal,
  $\Modify[\GM]{\prn{i : \Int} \to \IsACov\prn{i = 1}}$, which can then be specialized to yield the
  original result.  Using \cref{lem:ttt:amazing-transpose}, it suffices to construct an element of
  $\Modify[\GM]{\prn{f : \Int \to \Int} \to \IsCovFam\prn{\lambda j.\,f\prn{j} = 1}}$

  Since we have no additional hypotheses in this proof, we may forget the $\Modify[\GM]{-}$ and assume
  $f : \Int \to \Int$. By \cref{ax:int}, $\Int$ is an h-set and so $\IsCovFam\prn{f\prn{j} = 1}$ is
  equivalent to showing that $f\prn{0} = 1$ implies that $f\prn{1} = 1$ \ie{} that $f$ is
  monotone. This is an immediate consequence of \cref{lem:ttt:phoa}.
\end{proof}

\begin{lemma}
  \label{lem:fibrations:sigma}
  If $\IsACov\prn{A}$ and $\prn{a : A} \to \IsACov\prn{B\prn{a}}$ then $\IsACov\prn{\Sum{a : A}
  B\prn{a}}$.
\end{lemma}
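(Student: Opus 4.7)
The plan is to unfold $\IsACov$ on both sides, transpose the dependent hypothesis under the $\AM$ modality using \cref{lem:ttt:transpose}, and then reduce to the standard closure of ordinary covariant families under $\Sigma$-types.

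By definition, the goal is to inhabit
\[
  \Modify[\AM]{\IsCovFam\prn{\lambda i.\,\Sum{a : A^\eta \cdot i} \prn{B^\eta \cdot i}\,a}}.
\]
The hypothesis $\IsACov\prn{A}$ directly provides $h_A : \Modify[\AM]{\IsCovFam\prn{\lambda i.\,A^\eta \cdot i}}$. For the dependent hypothesis, applying the dependent transpose across $\PM \Adjoint \AM$ from \cref{lem:ttt:transpose} packages $\prn{a : A} \to \IsACov\prn{B\prn{a}}$ into the single term
\[
  h_B : \Modify[\AM]{
    \prn{a : \prn{i : \Int} \to A^\eta \cdot i} \to
    \IsCovFam\prn{\lambda i.\,\prn{B^\eta \cdot i}\,\prn{a\,i}}
  }.
\]
Binding $h_A$ and $h_B$ together under a single $\AM$-binder reduces the goal to an internal claim in a context with a generic $\Int$-variable: writing $A' : \Int \to \Uni$ and $B' : \prn{i : \Int} \to A'\prn{i} \to \Uni$ for the resulting families, we have covariance of $A'$ and, for each path $a : \prn{i : \Int} \to A'\prn{i}$, covariance of $i \mapsto B'\prn{i,a\,i}$; and we must prove covariance of $i \mapsto \Sum{a : A'\prn{i}} B'\prn{i,a}$.

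This last step is the classical $\Sigma$-closure for covariant families. Fixing $x : \Int \to \Int$ and a pair $\prn{a_0,b_0}$ in the fiber over $x\,0$, covariance of $A'$ along $x$ yields a unique $\tilde{a} : \prn{i : \Int} \to A'\prn{x\,i}$ extending $a_0$. Because $h_B$ quantifies over arbitrary $A'$-paths, it expresses that $B'$ is covariant as a family over the total space $\Sum{i : \Int} A'\prn{i}$; applied to the $\Sigma$-path $\prn{x,\tilde{a}}$ this furnishes a unique lift $\tilde{b}$ of $b_0$. Pairing $\tilde{a}$ and $\tilde{b}$ produces the required unique extension.

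The main obstacle is the modal bookkeeping needed to obtain $h_B$: one must verify that the dependent transpose, together with the action of $-^\eta$ on a $\Pi$-type over $A$ in the outer context, produces precisely the type written above and that the ``path-indexed covariance'' it provides can indeed be repackaged as covariance of $B'$ over the total space of $A'$. Once this is settled, the remaining argument is the classical proof of $\Sigma$-closure for covariant families.
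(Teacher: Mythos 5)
Your proof follows essentially the same route as the paper's: unfold $\IsACov$, apply the dependent version of \cref{lem:ttt:transpose} to transpose the $\Pi$-typed hypothesis into a single $\AM$-modal statement, strip the $\Modify[\AM]{-}$ via its elimination and introduction rules, and reduce to the ordinary $\Sigma$-closure for covariant families. The only difference is that the paper discharges the final, non-modal step by citing \citet[Proposition 6.2.1]{buchholtz:2023}, whereas you sketch that lifting argument yourself (correctly flagging that the subtle part is exactly the repackaging of path-indexed covariance of $B$ as covariance over the total space, which is what the cited result settles).
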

\begin{proof}
  As before, we begin by generalizing slightly and instead proving the following:
  \[
    \Modify[\GM]{
      \prn{\prn{A,B} : \Sum{A : \Uni[\ACov]} \Uni[\ACov]^A} \to \IsACov\prn{\Sum{a : A} B\prn{a}}
    }
  \]
  Let us begin by applying (the dependent version of) \cref{lem:ttt:amazing-transpose} such that it
  suffices to show that the following holds instead:
  \[
    \Modify[\GM]{
      \prn{\prn{A,B} : \Sum{A : \Uni[\ACov]^\Int} \prn{i : \Int} \to A\prn{i} \to \Uni[\ACov]}
      \to \IsCov\prn{\lambda i.\,\Sum{a : A\,i} B\prn{i,a}}
    }
  \]
  Since we have no additional assumptions, we may drop the $\GM$ and assume we are given $A :
  \Uni[\ACov]^\Int$ and $B : \prn{i : \Int} \to A\,i \to \Uni[\ACov]$. In light of
  \cref{thm:covariance:amazing-implies-ordinary}, we note that $A$ is covariant and, moreover, so
  too is $\lambda i.\,B\,i\,\prn{a\,i}$ for any $a : \prn{i : \Int} \to A\,i$. 

  In total then, we are reduced to proving the following: if $A : \Int \to \Uni$ and $B : \prn{i :
  \Int} \to A\prn{i} \to \Uni$ such that $\IsCov\prn{A}$ and $\IsCov\prn{\lambda
  i.\,B\,i\,\prn{a\,i}}$ then $\IsCov\prn{\lambda i.\,\Sum{a : A\,i} B\,i\,a}$. This statement is
  proven by \textcite[Proposition 6.2.1]{buchholtz:2023}.
\end{proof}

For clarity, we show the proof of the non-dependent version of (6). This is all that is required in
the next section and illustrates the core idea with less noise from handling indices.
\begin{lemma}
  If $\DeclVar{C}{\OM}{\Uni[\ACov]}$ and $D : \Uni[\ACov]$ then $\IsACov\prn{\Modify[\OM]{C} \to D}$.
\end{lemma}
\begin{proof}
  Following the previous two arguments, we will begin by proving this in a $\GM$-context to deal
  with $\IsACov$. That is, we first prove the following:
  \[
    \Modify[\GM]{
      \prn{\DeclVar{C}{\OM}{\Uni[\ACov]}}\prn{D : \Uni[\ACov]} 
      \to \IsACov\prn{\Modify[\OM]{C} \to D}
    }
  \]

  Arguing as in \cref{lem:fibrations:sigma}, we may use
  \cref{thm:covariance:amazing-implies-ordinary} and \cref{lem:ttt:amazing-transpose} to assume that
  we are given $\DeclVar{C}{\OM}{\Int \to \Uni}$ and $D : \prn{i : \Int} \to \Uni$
  such that (1) $\Modify[\OM]{\IsCov\prn{C}}$ and (2) $\IsCov\prn{D}$ and such that we
  must show (3) that $\IsCov\prn{\lambda i.\,\Modify[\OM]{C\prn{\neg i}} \to D\prn{i}}$.

  We begin by noting that $\Modify[\OM]{\IsCov\prn{C}}$ is equivalent to the following assumption:
  \[
    \prn{c_1 : \Modify[\OM]{C\prn{0}}} \to
    \IsContr\prn{\Sum{c : \prn{i : \Int} \to \Modify[\OM]{C\prn{\neg i}}} c\prn{1} = c_1}
  \]
  Unfolding our obligation, we must show that given $f_0 : \Modify[\OM]{C\prn{1}} \to {D\prn{0}}$ that
  the following type is contractible:
  \[
    \Sum{f : \prn{i : \Int} \to \Modify[\OM]{C\prn{\neg i}} \to D\prn{i}} f\prn{0} = f_0
  \]

  We first informally describe how one produces a center of contraction. Given $i : \Int$ and $c :
  \Modify[\OM]{C\prn{\neg i}}$, we use the assumption that $C$ is $\OM$-covariant as formulated
  above to construct a (unique) function $\bar{c} : \prn{j : \Int} \to \Modify[\OM]{C\prn{\neg
  \prn{i \land j}}}$ such that $\bar{c}\prn{1} = c$. We then observe that $\bar{c}\prn{0} :
  \Modify[\OM]{C\prn{1}}$ and so $d_0 = f_0\prn{\bar{c}\prn{0}} : D\prn{0, \bar{c}\prn{0}}$.
  Extending this $d_0$ to a line along $D\prn{i \land -}$ gives the line $\bar{d} : \prn{j : \Int}
  \to D\prn{j \land i}$ and we choose $f\prn{i,c} = f_0\prn{\bar{d}\prn{1}} : \prn{c :
  \Modify[\OM]{C\prn{\neg i}}} \to D\prn{i,c}$. If $i = 0$, $\bar{c}$ and $\bar{d}$ are canonically
  equal to constant functions and these identifications combine to produce a path $p : f\prn{0} =
  f_0$.

  If we write $\widebar{\Coe}_{C}$ for the ``backwards'' coercion function induced by
  $\Modify[\OM]{\IsCov\prn{C}}$, we may describe $f$ symbolically as follows:
  \[
    f = \lambda i\,c.\Coe_{D\prn{i \land -}}\prn{f_0\prn{\widebar{\Coe}_{C\prn{\neg i \land -}}\prn{c}}}
  \]

  Suppose now we are given $g : \prn{i : \Int} \to \Modify[\OM]{C\prn{\neg i}} \to D\prn{i}$ along
  with $p : g\prn{0} = f_0$. We must then show that $\prn{f,p} = \prn{g,q}$. Let us consider the
  following $H$:
  \begin{align*}
    &H : \prn{i\,j : \Int} \to \Modify[\OM]{C\prn{\neg i}} \to D\,i
    \\
    &H =
    \lambda i\,j,c.\Coe_{D\prn{(i \land j) \lor - \land i}}
    \prn{g\,(i \land j)\,\prn{\widebar{\Coe}_{C\prn{\neg (i \land j) \land - \lor \neg i}}\prn{c}}}
  \end{align*}

  We may construct a function $r : \prn{j : \Int} \to H\prn{0,j} = f_0$ using $p$ and $q$. Moreover,
  we may identify $\prn{H\prn{-,0}, r\prn{0}}$ with $\prn{f,p}$ and $\prn{H\prn{-,1}, r\prn{1}}$
  with $\prn{g,q}$. Finally, since the type $\prn{i : \Int} \to \Modify[\OM]{C\prn{\neg i}} \to
  D\prn{i}$ is a groupoid~\parencite[Lemma 1.26]{rijke:2020} and groupoids are closed under
  $\Sigma$-types, $\lambda j.\,\prn{H\prn{-,j}, r\prn{j}}$ induces the required identification
  between $\prn{f,p}$ and $\prn{g,q}$ as required.
\end{proof}

These proofs exhibit proof strategies that are common when working with $\IsACov$ in \TTT{}:
either reducing to a generic global case where various modalities can be simplified or performing
several small modal manipulations and then applying standard and non-modal arguments. They are
also very similar to the construction of fibrancy structures in \textcite{weaver:2020}. The major
difference between the proofs given in \opcit{} stems from the fact that our constructions take
place in a univalent type theory. Consequently, our coercion operators have a simpler type, but we
must show that they are unique up to a contractible choice rather than merely having to construct
some inhabitant.

\section{The directed univalent universe}
\label{sec:space}

With our preliminary work on amazing covariance in place, we are now in a position to define our
directed univalent universe of groupoids $\Space$ and establish its core properties. We begin with
the (now short) definition of $\Space$:
\begin{definition}
  We define $\Space$ to be $\Sum{A : \Uni}{\IsACov\prn{A}}$.
\end{definition}

We note that $\Space$ can be fully characterized without reference to
$\ACov$ as a corollary of \cref{lem:covariance:global-amazingness-degenerates}:%
\footnote{Theoretically, every result about $\Space$ can be proven using this
characterization. We will not endeavor to do so and instead optimize for more readable proofs.}

\begin{lemma}\label{lem:space:space-characterization}
  If $\DeclVar{A}{\GM}{X \to \Uni}$ then $A$ factors
  through $\Space$ if and only if it is covariant. In other words, $\Space$ is the base of the
  universal covariant family of simplicial types.
\end{lemma}

\begin{corollary}
  \label{cor:space:flat-groupoid}
  $\DeclVar{A}{\GM}{\Uni}$ factors through $\Space$ if and only if it is a groupoid.
\end{corollary}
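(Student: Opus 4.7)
The plan is to derive the corollary directly from the preceding Lemma \ref{lem:space:space-characterization} by instantiating it at the base $X \defeq \Unit$, so that a map $X \to \Space$ is the same data as an element of $\Space$, and a family $\Unit \to \Uni[\Simp]$ is just a simplicial type.

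First I would unfold what covariance means for a constant family $F : \Unit \to \Uni$ with $F(\star) = A$. The only map $\Int \to \Unit$ is the constant one, so $\IsCovFam(F)$ becomes
\[
  \Prod{a_0 : A} \IsContr\prn{\Sum{a_1 : A} (a_0 \to a_1)},
\]
and the pair $(a_0, \ArrId{a_0})$ provides a canonical center. Identifying this center with an arbitrary $(a_1, f)$ gives exactly a path $p : a_0 = a_1$ together with $p_*(\ArrId{a_0}) = f$; that is, contractibility is equivalent to $(a = b) \to (a \to b)$ being an equivalence. So, for a constant family, covariance coincides on the nose with the definition of a groupoid.

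For the forward direction, if the crisp $A$ factors through $\Space$ then Lemma \ref{lem:space:space-characterization} (its first half, which needs no crispness) tells us the constant family $\lambda \star.\, A$ over $\Unit$ is covariant, so by the unfolding above $A$ is a groupoid. For the reverse direction, assume $A$ is a groupoid. Then $A$ is $\Int$-null by definition, so by \cref{ax:discrete-iff-crisp} (applicable because $A$ is crisp) $A$ is discrete, and hence by \cref{thm:ttt:global-disc-is-simplicial} it is simplicial. The constant family $\lambda \star.\, A$ is now a crisp simplicial family over $\Unit$ which is covariant, so the bi-implication in Lemma \ref{lem:space:space-characterization} produces the desired lift to $\Space$.

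There is no real obstacle here: every step is either a direct unfolding of definitions or a plug-in of an earlier axiom or lemma. The only subtlety worth flagging is that both the simplicity witness and the use of the bi-implication of Lemma \ref{lem:space:space-characterization} depend on $A$ being crisp, which is exactly the hypothesis of the corollary, so the argument goes through without further modal bookkeeping.
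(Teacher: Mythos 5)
Your proof is correct and is essentially the intended argument: the paper states this as an immediate corollary of \cref{lem:space:space-characterization}, obtained by specializing to the base $X = \Unit$ and identifying covariance of a constant family with the groupoid ($\Int$-null) condition via the fundamental theorem of identity types, exactly as you do. Your extra detour through \cref{ax:discrete-iff-crisp} and \cref{thm:ttt:global-disc-is-simplicial} to recover simpliciality of a crisp $\Int$-null type is a valid (and worthwhile) observation, since the paper's convention requires groupoids to be simplicial.
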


Applying \cref{lem:covariance:flat-covariance-implies-simplicial}, we conclude that $\Space \to
\Uni$ factors through $\Uni[\Simp]$. In other words, all amazingly covariant families of groupoids
are automatically simplicial. Moreover, by \cref{thm:covariance:acov-closure} along with the closure
results from \textcite{rijke:2020}, we conclude:

\begin{lemma}
  \label{thm:space:space}
  As a subtype of $\Uni$, $\Space$ is (1) univalent (2) contains only simplicial types (3) closed
  under dependent sums, equality, and $i = 1$ (4) closed under the two modalized forms of
  $\Pi$-types indicated by \cref{thm:covariance:acov-closure}.
\end{lemma}

Thus, we already have established that $\Space$ is a subuniverse of $\Uni$ spanned by
groupoids. What remains is to prove directed univalence \ie{}, to characterize $\Int \to \Space$. To
this end, we will first prove two important lemmas for constructing elements of $\Space$. With these
in place, we shall show that $\Space$ is not only closed under various connectives, but also
simplicial, Segal, Rezk, and directed univalent. Our main result can be summed up as follows
\begin{theorem}
  $\Space$ is a directed-univalent category.
\end{theorem}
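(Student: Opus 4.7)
The plan is to unpack the theorem into four claims---$\Space$ is simplicial, Segal, Rezk-complete, and directed univalent---and to dispatch Segal-ness and Rezk-completeness as consequences of simpliciality and directed univalence together with the ordinary univalence inherited from $\Uni$.

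For simpliciality, since $\Space = \Sum{A : \Uni[\Simp]}\IsACov\prn{A}$ is a subtype of $\Uni[\Simp]$ cut out by a proposition, I would first show that $\Uni[\Simp]$ is itself simplicial, exploiting that $\Simp$ is an idempotent lex monad so that the universe-level instance of $\IsSimp$ factors through $\Simp$; then observe that intersecting with the proposition $\IsACov$ preserves the defining equivalence.

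For directed univalence, the forward map $\Phi : \prn{\Int \to \Space} \to \Sum{A\,B : \Space}\prn{A \to B}$ sends $F$ to $\prn{F\prn{0}, F\prn{1}, \mathrm{tr}}$, where the transport is produced by combining \cref{lem:space:space-characterization} (viewing $F$ as a covariant family) with \cref{lem:covariance:transport}. For the inverse, given $A, B : \Space$ and $g : A \to B$, I would construct a family $F : \Int \to \Space$ realizing $g$ via a mapping-cylinder formula such as
\[
    F\prn{i} = \Sum{b : B} \prn{\prn{i = 0} \to \mathrm{fib}_g\prn{b}},
\]
so that $F\prn{0} \Equiv A$ and $F\prn{1} \Equiv B$, using \cref{ax:global-points} to kill $\prn{1 = 0}$. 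The principal check is $\IsACov\prn{F\prn{i}}$ for each $i$, which is assembled from \cref{thm:covariance:acov-closure} using closure under $\Sigma$, under identity types, and under $\prn{i = 0}$-indexed functions (obtainable by $\OM$-symmetry from \cref{ax:op-of-int-is-int} applied to item~(2)). The round-trip identities then reduce to the uniqueness portion of the contractibility clause of $\IsCovFam$, which pins down each $F\prn{i}$ from $F\prn{0}$ together with the transport.

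With directed univalence in hand, Segal-ness reduces to observing that maps $\Lambda^2_1 \to \Space$ correspond to pairs of composable functions and that extending to $\Delta^2 \to \Space$ amounts to ordinary function composition, bootstrapped along \cref{cor:ttt:delta-is-category}. Rezk-completeness follows because an isomorphism $A \to_\Space B$ must be an invertible function, hence an equivalence, which by univalence of $\Space$ (a consequence of \cref{thm:space:space}) corresponds to an identification. The main obstacle I expect is the inverse direction of directed univalence: both that the mapping-cylinder family is \emph{amazingly} rather than merely covariant, and that the round-trip $\prn{\Int \to \Space} \to \Sum{A\,B : \Space}\prn{A \to B} \to \prn{\Int \to \Space}$ is the identity, which depends on extracting a uniqueness theorem for amazingly covariant families over $\Int$ from the contractibility condition in $\IsCovFam$.
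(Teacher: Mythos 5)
Your overall decomposition (simplicial, Segal, Rezk, directed univalent) and your $\Glue$-style inverse $F\prn{i} = \Sum{b:B}\prn{i=0}\to\mathrm{fib}_g\prn{b}$ match the paper, but three of your steps have genuine gaps.

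First, simpliciality of $\Space$ does not follow from its being ``a subtype of $\Uni[\Simp]$ cut out by a proposition.'' A $\Sigma$-type over a simplicial base is simplicial only when the family is \emph{fiberwise} simplicial, and there is no reason for the proposition $\IsACov\prn{A}$ to be simplicial; the paper's remark after \cref{thm:ttt:global-disc-is-simplicial} exhibits exactly this failure mode ($\Delta^2 \Pushout{\Int} \Delta^2$ is fiberwise a proposition over the simplicial type $\Int\times\Int$ yet is not simplicial). The paper instead shows that $\eta : \Space \to \Simp\Space$ admits a retraction, which rests on the simplicial exchange lemma (\cref{lem:space:simplicial-exchange}) asserting that $\DepSimp\Proj : \Simp\Uni[\ACov] \to \Uni$ factors through $\Space$---a nontrivial orthogonality argument via \cref{ax:cubes-separate} and \cref{ax:simplicial-stability} that your proposal omits entirely.

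Second, the round trip $\Glue\circ\MorToFun = \ArrId{}$ is the crux, and ``the uniqueness portion of the contractibility clause of $\IsCovFam$'' does not deliver it: that clause pins down the transported \emph{element} $a_1$ lying over a path, not the \emph{fiber} $F\prn{i}$ as a type, so it cannot by itself show that a family $\Int \to \Space$ is determined by its fiber at $0$ together with the transport map. The paper's mechanism is \cref{lem:space:extension}: one constructs an explicit fiberwise map $\alpha : \prn{i:\Int}\to F\prn{i}\to\Glue\prn{\MorToFun\prn{F}}\,i$ and invokes the lemma that a natural transformation between $\Delta^\ell$-indexed families of spaces is invertible iff it is invertible at the vertices $\bar{k}$---itself a substantial argument using \cref{ax:cubes-separate} twice. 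Nothing in your proposal substitutes for this lemma, and you correctly flag it as the main obstacle without resolving it. Relatedly, Segal-ness is \emph{not} a formal consequence of directed univalence (the paper states this explicitly before \cref{thm:space:segal}): one must still characterize $\Delta^2\to\Space$, which requires a three-fold variant of $\Glue$ and \cref{lem:space:extension} at $\ell=2$, not merely the observation that $\Lambda^2_1\to\Space$ classifies pairs of composable functions. Your Rezk argument and the forward map $\MorToFun$ are fine and agree with the paper.
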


\begin{remark}
  While it is not helpful for establishing the above theorem, the following observation is
  helpful for seeing that the construction of $\Space$ using cubical spaces must yield the same
  results as a more standard argument within simplicial spaces and, in fact, must produce the
  standard $\infty$-category of $\infty$-groupoids.

  Suppose that $\Space'$ was constructed as the base of the universal covariant fibration among
  \emph{simplicial} types, we can then argue that $\Space'$ is also necessarily equivalent to the
  base of the universal covariant fibration among cubical spaces. First, observe that the universal
  fibration $\Space'_\bullet \to \Space'$ is covariant when viewed as a map of cubical spaces:
  covariance is equivalent to the map $\prn{\Space'_\bullet}^\Int \to \Space'_\bullet
  \times_{\Space'} \prn{\Space'}^\Int$ being an equivalence and simplicial spaces are a full
  subcategory of cubical spaces closed under limits and exponentials. Next, we see that this entails
  the existence of a classifying map $\Space' \to \Space$ and it is routine to calculate that this
  map induces an equivalence $\Hom{\Int^n}{\Space'} \Equiv \Hom{\Int^n}{\Space}$ since $\Int^n$ is
  simplicial. Consequently, $\Space \Equiv \Space'$. In fact, a nearly identical argument shows that
  $\Space'$ must in turn agree with the base of a  covariant family universal among covariant
  families of complete Segal spaces (\ie{}, $\infty$-categories) as well---again assuming such a
  thing exists.

  Consequently, if one assumes the ordinary statement of straightening--unstraightening for the
  $\infty$-category of $\infty$-groupoids, then the base of the universal covariant family in
  cubical spaces must agree with the standard $\infty$-category of $\infty$-groupoids. While not a
  satisfactory method of constructing $\Space$, this does show that our construction must yield the
  expected result.
\end{remark}

\subsection{The two key lemmas}

Before we can prove that $\Space$ is directed univalent, we require a better understanding of when
two maps $\Int \to \Space$ are equivalent. In particular, suppose we are given
$f,g : \Int \to \Space$. We already know that $\Space$ is univalent as a subtype of $\Uni[\Simp]$
and so $f$ and $g$ are equal when there is an equivalence $\alpha : \prn{i : \Int} \to f\prn{i} \to
g\prn{i}$. Accordingly, it suffices to find conditions to establish that $\alpha\prn{i}$ is an
equivalence for each $i : \Int$. Our first result shows that this holds everywhere if it holds at
$0$ and $1$. In other words, to check that a natural transformation $\alpha$ is an equivalence, it
suffices to check that it is an equivalence at each object. We prove a slight generalization of this
result which applies to any $\Delta^\ell$ rather than just $\Delta^1$.

\begin{notation}
  We denote $\prn{1, \dots, 1, 0, \dots 0} : \Delta^\ell$ with $k$ copies of $1$ followed by
  $\ell - k$ copies of $0$ by $\bar{k}$.
\end{notation}

\begin{lemma}
  \label{lem:space:extension}
  Fix $\DeclVar{\ell}{\GM}{\Nat}$ and suppose that $\Mor[f,g]{\Delta^\ell}{\Space}$ and
  $\alpha : \prn{\delta : \Delta^\ell} \to f\,\delta \to g\,\delta$ then $\alpha$ is invertible if
  and only if $\Mor[\alpha\,\bar{k}]{f\,\bar{k}}{g\,\bar{k}}$ is invertible for all $k \le \ell$.
\end{lemma}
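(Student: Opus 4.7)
The plan is to prove the forward direction by noting that evaluation at each $\bar{k}$ preserves invertibility, and tackle the converse by induction on $\ell$. The base case $\ell = 0$ is immediate, since $\Delta^0$ consists solely of $\bar{0}$. The real content lies in the inductive step, which I would split into two parts: first establish the case $\ell = 1$ as an auxiliary lemma, then bootstrap to general $\ell$ by reducing to invertibility along edges.

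For the case $\ell = 1$ (where $\Delta^1 = \Int$), I would exploit the covariant-fibration structure of $f, g : \Int \to \Space$. By \cref{lem:space:space-characterization}, both $f$ and $g$ are covariant families, so they are classified up to equivalence by transport maps $f_! : f(0) \to f(1)$ and $g_! : g(0) \to g(1)$ (\cref{lem:covariance:transport}), and $\alpha$ yields a commuting naturality square $\alpha(1) \circ f_! = g_! \circ \alpha(0)$. Given the hypothesized invertibility at endpoints, I would construct a fiberwise inverse $\beta : \prn{i : \Int} \to g\,i \to f\,i$ with $\beta(0) = \alpha(0)^{-1}$ and $\beta(1) = \alpha(1)^{-1}$, extending through the generic point using the amazing covariance of $g$ together with \cref{lem:ttt:transpose} (which transposes the data across the $\prn{\Int \to -} \Adjoint \Modify[\AM]{-}$ adjunction). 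The naturality condition needed of $\beta$ follows by inverting the naturality square for $\alpha$ at the endpoints, which in turn forces $\beta$ to be a fiberwise inverse throughout $\Int$ by universality of the covariant-fibration structure.

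For the inductive step, I would reduce invertibility on $\Delta^\ell$ to invertibility along the $\Int$-parameterized edges connecting adjacent vertices, then apply the $\ell = 1$ case to each such edge. Every point $\delta : \Delta^\ell$ is a composite of edges between vertices $\bar{k}$ by the Segal structure of $\Delta^\ell$ (\cref{cor:ttt:delta-is-category}), so invertibility propagates from the vertices to all of $\Delta^\ell$. This argument can be streamlined using Phoa's principle and its generalization (\cref{lem:ttt:phoa}, \cref{lem:ttt:generalized-phoa}) to parameterize arbitrary points of $\Delta^\ell$ via monotone chains in $\brk{0 \le \dots \le \ell}$.

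The main obstacle will be the $\ell = 1$ case: while the classical intuition---\emph{a map of left fibrations over $\Int$ is an equivalence iff it is so on endpoint fibers}---is transparent, rigorously constructing $\beta$ within \TTT{} demands a careful synthesis of the covariance of $f, g$ with the closure properties of $\Space$ (\cref{thm:space:space}). Extending the endpoint inverses $\alpha(0)^{-1}, \alpha(1)^{-1}$ to a coherent global $\beta$ hinges on using amazing covariance to produce the correct transposed data; once this step is in hand, the inductive step becomes a routine traversal along edges.
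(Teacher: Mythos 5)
There is a genuine gap, and it sits exactly where you locate "the main obstacle": the $\ell=1$ case. Internally, a generic point $i:\Int$ is \emph{not} equal to $0$ or $1$, so knowing $\alpha(0)$ and $\alpha(1)$ are invertible says nothing directly about $\alpha(i)$ for generic $i$. Covariance only supplies transport in the forward direction along $0\to i$; it does not let you "extend through the generic point" to define $\beta(i):g\,i\to f\,i$, and \cref{lem:ttt:transpose} merely transposes data across $\prn{\Int\to-}\Adjoint\Modify[\AM]{-}$ — it does not manufacture the inverse. Saying that the naturality square at the endpoints "forces $\beta$ to be a fiberwise inverse throughout $\Int$ by universality of the covariant-fibration structure" is assuming precisely the statement to be proved. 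The missing engine is the combination of \cref{ax:cubes-separate} with \cref{ax:global-points}: the paper packages the data into global types $X$ (vertex-invertibility) and $Y$ (everywhere-invertibility), tests the forgetful map $Y\to X$ on global cube-cells $\Int^m\to\Delta^\ell$, uses the orthogonality coming from covariance to identify $\Modify[\GM]{\Int^m\to\tilde F}$ with the fiber of $F$ over the basepoint $\theta(\vec 0)$ of the cell, and then observes that $\theta(\vec 0)$, being a \emph{global} point of $\Delta^\ell$, must literally equal some vertex $\bar k$ — which is where the hypothesis applies. Without some such externalization step, no amount of naturality bookkeeping reaches the generic fibers.

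The induction on $\ell$ is also not needed and, as sketched, does not work: a generic $\delta:\Delta^\ell$ does not lie on the $1$-skeleton, so invertibility does not "propagate from the vertices along edges" via the Segal structure of \cref{cor:ttt:delta-is-category}; that composition structure concerns morphisms of the category $\Delta^\ell$, not the points of the type $\Delta^\ell$ at which $\alpha$ must be inverted. The paper's argument handles all $\ell$ uniformly in one pass, because the global-points step lands on a vertex $\bar k$ regardless of $\ell$. Your forward direction and base case are fine, but the substance of the lemma is exactly the part your proposal defers.
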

\begin{proof}
  We begin by generalizing to apply \cref{ax:cubes-separate}. To this end, fix the following global
  types:
  \begin{align*}
    &X =
    \Sum{F\,G : \Delta^\ell \to \Space}\Sum{\alpha : \prn{\delta:\Delta^\ell} \to F\,\delta \to G\,\delta}
    \prn{k : \Nat_{\le \ell}} \to \IsEquiv\prn{\alpha\,\bar{k}}
    \\
    &Y =
    \Sum{F\,G : \Delta^\ell \to \Space}\Sum{\alpha : \prn{\delta:\Delta^\ell} \to F\,\delta \to G\,\delta}
    \Prod{\delta : \Delta^{\ell}}\IsEquiv\prn{\alpha\prn{\delta}}
  \end{align*}
  It suffices to show that the forgetful map $Y \to X$ is an equivalence and so, by
  \cref{ax:cubes-separate}, we must show that for each $\DeclVar{n}{\GM}{\Nat}$ the map
  $\Modify[\GM]{\Int^n \to Y} \to \Modify[\GM]{\Int^n \to X}$ is an equivalence. For clarity, we write
  $\Gamma = \Int^n$ and $\Gamma' = \Int^n \times \Delta^{\ell}$ in what follows.

  We now unfold this slightly. Fix $\DeclVar{F,G}{\GM}{\Gamma' \to \Space}$ along with
  $\DeclVar{\alpha}{\GM}{\prn{\prn{v,\delta} : \Gamma'} \to F\prn{v,\delta} \to G\prn{v,\delta}}$ and
  $\DeclVar{e}{\GM}{\prn{v : \Gamma}\prn{k : \Nat_{\le \ell}} \to
    \IsEquiv\prn{\alpha\prn{v,\bar{k}}}}$. We must show the following:
  \[
    \Modify[\GM]{\prn{\prn{v,\delta} : \Gamma'} \to \IsEquiv\prn{\alpha\prn{v,\delta}}}
  \]

  We can reorient $F,G$ as global families $\Mor[\pi_F,\pi_G]{\tilde{F},\tilde{G}}{\Gamma'}$. That
  both $F,G$ factor through $\Space$ implies that both projections are both covariant fibrations
  and, therefore, orthogonal to the maps $\brc{0} \to \Int^m$ for any
  $\DeclVar{m}{\GM}{\Nat}$. Note, too, that from this viewpoint, $\alpha$ is a map
  $\tilde{\alpha} : \tilde{F} \to \tilde{G}$ over $\Gamma'$ such that pulling back along
  $\prn{\ArrId{},\bar{k}} : \Gamma \to \Gamma'$ induces an equivalence. We must show that
  $\tilde{\alpha}$ is an equivalence.

  By another application \cref{ax:cubes-separate}, to show that $\tilde{\alpha}$ is an equivalence
  we must show it induces an equivalence
  $\Modify[\GM]{\Int^m \to \tilde{F}} \Equiv \Modify[\GM]{\Int^m \to \tilde{G}}$. By orthogonality, we
  note that
  $\Modify[\GM]{\Int^m\to\tilde{F}} \Equiv \Modify[\GM]{\tilde{F} \times_{\Gamma'} \prn{\Int^m \to
      \Gamma'}}$.  Consequently, it suffices to show that the following map is an equivalence:
  \[
    \Modify[\GM]{\tilde{F} \times_{\Gamma'} \prn{\Int^m \to \Gamma'}} \to
    \Modify[\GM]{\tilde{G} \times_{\Gamma'} \prn{\Int^m \to \Gamma'}}
  \]
  We may refactor this using the various properties of $\Modify[\GM]{-}$ to obtain the following
  equivalent map:
  \[
    \Sum{\DeclVar{v}{\GM}{\Int^m \to \Gamma}}
    \Sum{\DeclVar{\theta}{\GM}{\Int^m \to \Delta^{\ell}}}
    \Modify[\GM]{F\prn{v\prn{\vec{0}},\theta\prn{\vec{0}}}}
    \to
    \Sum{\DeclVar{v}{\GM}{\Int^m \to \Gamma}}
    \Sum{\DeclVar{\theta}{\GM}{\Int^m \to \Delta^{\ell}}}
    \Modify[\GM]{F\prn{v\prn{\vec{0}},\theta\prn{\vec{0}}}}
  \]
  Finally, $\theta\prn{\vec{0}}$ is an element of $\Modify[\GM]{\Delta^{\ell}}$ and is therefore
  equal to $\bar{k}$ for some $k$ by \cref{ax:global-points}. For any $k$, the map is an equivalence
  as it is derived from $\alpha$ and our conclusion follows.
\end{proof}
\begin{remark}
  \textcite{weaver:2020} axiomatized their cobar modality to formulate and postulate a special
  case of this lemma (their \emph{equivalence axiom}). In our case, no such steps are required as
  this result follows from \cref{ax:cubes-separate}. This is not to say that the cobar construction
  plays no role in our setting: it is used in \cref{sec:model} implicitly as
  \textcite{shulman:2019} uses it to characterize the injective fibrations we use to model types.
\end{remark}

To ensure that elements of $\Space$ are indeed groupoids, $A : \Uni$ lands in the subtype $\Space$
only when it is simplicial in addition to being amazingly covariant. Often, it is easiest to do
this by proving that $A$ is amazingly covariant and then applying $\Simp$ to $A$ to obtain a
simplicial type. In order for this to be possible, however, we must know that applying $\Simp$ to an
amazingly covariant type results in an amazingly covariant type. The next lemma proves (a
generalization of) this fact.

Let us note that the canonical maps $\pi : \Space \to \Uni$ and
$\DepSimp : \Simp \Uni \to \Uni$ induce a map $\DepSimp \Proj : \Simp \Space \to \Uni$. Showing
that $\Simp A$ is amazingly covariant if $A$ is amazingly covariant corresponds to showing that
$\DepSimp \Proj \circ \eta$ factors through $\Space$. We prove this by proving the following
stronger result:
\begin{lemma}[Simplicial exchange]
  \label{lem:space:simplicial-exchange}
  $\DepSimp \Proj : \Simp \Space \to \Uni$ factors through $\Space$.
\end{lemma}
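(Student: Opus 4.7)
The target $\Space = \Sum{A : \Uni[\Simp]}{\IsACov\prn{A}}$ is carved out of $\Uni$ by two propositional conditions, so factoring $\DepSimp\Proj$ through $\Space$ reduces to verifying both $\IsSimp$ and $\IsACov$ at every value. Simpliciality is immediate: by construction $\DepSimp\Proj$ factors through the canonical map $\Simp\Uni \to \Uni[\Simp] \hookrightarrow \Uni$ (built using that the subuniverse of simplicial types absorbs $\Simp$), so its image consists of $\Simp$-modal types, all of which satisfy $\IsSimp$. The content of the lemma is therefore amazing covariance of every value.

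The predicate $\IsACov\prn{\DepSimp\Proj\prn{A}}$ is propositional, and propositions are simplicial (both sides of the $\IsSimp$-equivalence are propositions and so the map is automatically an equivalence). Hence the universal property of $\Simp$ lets us perform $\Simp$-induction on $A : \Simp\Uni[\ACov]$ and reduce to the case $A = \eta B$ for $B : \Uni[\ACov]$. Since $\DepSimp\Proj\prn{\eta B} = \Simp B$, the real task becomes: if $\IsACov\prn{B}$ then $\IsACov\prn{\Simp B}$. Unfolding, and using that the modal substitutions $\prn{-}^\eta$ and $\prn{-} \cdot i$ commute with $\Simp$ by naturality, the goal $\Modify[\AM]{\IsCovFam\prn{\lambda i.\,\prn{\Simp B}^\eta \cdot i}}$ is equivalent to $\Modify[\AM]{\IsCovFam\prn{\lambda i.\,\Simp\prn{B^\eta \cdot i}}}$. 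Applying $\Modify[\AM]{-}$-elimination to the hypothesis $\IsACov\prn{B} = \Modify[\AM]{\IsCovFam\prn{\lambda i.\,B^\eta \cdot i}}$ reduces the problem to the following core claim: if $C : \Int \to \Uni$ is covariant, then $\lambda i.\,\Simp\prn{C\prn{i}}$ is also covariant.

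Preservation of covariance by $\Simp$ is the technical heart and the main obstacle. Covariance of $C$ unfolds, for each $x : \Int \to \Int$ and $a_0 : C\prn{x\,0}$, to contractibility of $\Sum{a_1 : C\prn{x\,1}}{a_0 \to_x a_1}$. For $\Simp \circ C$, given $a_0 : \Simp\prn{C\prn{x\,0}}$, I would $\Simp$-induct on $a_0$ (valid since the target type is a proposition) to reduce to $a_0 = \eta a_0'$ for some $a_0' : C\prn{x\,0}$. Existence of a filler for $\Simp \circ C$ is then immediate from covariance of $C$ combined with functoriality of $\eta$. Uniqueness is the delicate half: an arbitrary element $\prn{a_1, f}$ of $\Sum{a_1 : \Simp C\prn{x\,1}}{\eta a_0' \to_x a_1}$ must be shown equal to the image under $\eta$ of the unique $\prn{a_1', f'}$ supplied by covariance of $C$. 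This requires iterated $\Simp$-induction on $a_1$ and then on the homomorphism $f$, which itself inhabits an $\Int$-indexed type family; the reduction goes through because $\Simp$ is lex (commuting with $\Sigma$-types and identity types) and because $\Int$ is already simplicial, so the $\prn{i : \Int} \to -$ dependence in a homomorphism can be pushed past $\Simp$.

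The principal difficulty throughout is the modal bookkeeping required by this last step, in particular managing the interplay of $\Simp$ with $\Int$-indexed dependencies. An alternative presentation would phrase covariance directly as a right-orthogonality condition on the total-space projection and verify it using lexness of $\Simp$ and simpliciality of $\Int$; in either case the governing principle is that the $\Sigma$- and identity-type content of the covariance condition is exactly what a lex reflection such as $\Simp$ is designed to preserve.
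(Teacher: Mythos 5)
Your proposal diverges from the paper's proof and, more importantly, contains a false step that the paper explicitly warns against. You justify $\Simp$-induction on $A : \Simp\Uni[\ACov]$ by claiming that $\IsACov\prn{\DepSimp\Proj\prn{A}}$ is simplicial because it is a proposition and ``propositions are simplicial.'' This is not true: a map of propositions $P \to \prn{i\le j \lor j\le i \to P}$ is not automatically an equivalence, since the converse direction requires producing $P$ from the hypothesis $i\le j \lor j\le i \to P$, which is not available in general. The paper makes exactly this point in the remark following \cref{thm:ttt:global-disc-is-simplicial}, using the fiberwise proposition $\lambda i\,j.\,i\le j \lor j\le i$ (whose total space is $\Delta^2 \Pushout{\Int} \Delta^2$) as a counterexample to the claim that pointwise $\Int$-null propositions are simplicial. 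So your opening reduction via $\Simp$-induction is not licensed.

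Even setting that aside, your ``core claim''---that if $C : \Int \to \Uni$ is covariant then so is $\lambda i.\,\Simp\prn{C\,i}$---is where the real content lies, and your sketch of it does not close. You propose iterated $\Simp$-induction on $a_1$ and then on a homomorphism $f$ whose first component lives in $\prn{i:\Int} \to \Simp\prn{C\prn{x\,i}}$. But this type is $\Int$-indexed, not of the form $\Simp\prn{-}$, so $\Simp$-induction does not directly apply, and the canonical comparison $\Simp\prn{\Int \to A} \to \prn{\Int \to \Simp A}$ is not an equivalence by general modality nonsense; relating maps into $A$ with maps into $\Simp A$ is exactly what \cref{ax:simplicial-stability} is for, and your argument never invokes it. This is not merely ``modal bookkeeping.''

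The paper's route is structured precisely to avoid all pointwise reasoning about $\Simp$: it first uses \cref{lem:covariance:global-amazingness-degenerates} to replace the amazing-covariance goal by ordinary covariance of the closed family $\DepSimp\Proj$ (since the family is global, the extra ``over the context'' content of $\IsACov$ degenerates), then rephrases covariance as an orthogonality condition, tests it globally via \cref{ax:cubes-separate}, decomposes $\Delta^n \times \Int$ into simplices $\Delta^{m}$, and only then uses \cref{ax:simplicial-stability} to strip the $\Simp$ and reduce to the already-known covariance of the universal family over $\Uni[\ACov]$ from \cref{thm:covariance:amazing-implies-ordinary}. I would recommend aligning with this strategy; it correctly localizes the use of \cref{ax:simplicial-stability} to the one place it is needed, and it sidesteps the incorrect simpliciality-of-propositions step entirely.
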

\begin{proof}
  Given that the composite $\DepSimp \Proj : \Simp \Space \to \Uni$ mentions no free variables, it
  suffices by \cref{lem:covariance:global-amazingness-degenerates} to show that $\DepSimp \Proj$ is
  covariant.

  For concision, we write $X = \Simp \Space$ and $\tilde{X}$ for $\Sum{A : X} \DepSimp A$. We
  must show that the map given by evaluating at $0$ induces an equivalence between
  $\Int \to \tilde{X}$ and $\prn{\Int \to X} \times_{X} \tilde{X}$. Using \cref{ax:cubes-separate}
  along with the observation that these types are all simplicial, it suffices to show that the
  following map is an equivalence for all $\DeclVar{n}{\GM}{\Nat}$:
  \[
    \Modify[\GM]{\Delta^n \times \Int \to \tilde{X}}
    \to
    \Modify[\GM]{
      \Delta^n
      \to
      \prn{\Int \to X} \times_{X} \tilde{X}
    }
  \]
  In other words, we must show that $\Delta^n \times \brc{0} \to \Delta^n \times \Int$ is
  \emph{globally} orthogonal to $\tilde{X} \to X$.

  We now argue that to prove this, it suffices to show that $\tilde{X} \to X$ is globally orthogonal
  to $\brc{0} \to \Delta^k$ for all $k$. Let us assume that this condition holds for the moment and
  show that it suffices to establish orthogonality with respect to
  $\Delta^n \times \brc{0} \to \Delta^n \times \Int$. Considering the sequence of maps 
  $\brc{0} \to \Delta^n \times \brc{0} \to \Delta^n \times \Int$, a standard 3-for-2 argument shows
  that it suffices to show that $\brc{0} \to \Delta^n \times \Int$ is orthogonal to
  $\tilde{X} \to X$.

  To this end, notice that $\Delta^n \times \Int$ is a
  subtype of $\Int^{n + 1}$ spanned by tuples $\prn{v_1, \dots, v_n, w}$ where
  $v_1 \ge \dots \ge v_n$. Observing that both $\tilde{X}$ and $X$ are simplicial, it follows that
  a map $\Delta^n \times \Int \to X$ or $\Delta^n \times \Int \to \tilde{X}$ is the same as a map
  from the following subtype of $\Int^{n + 1}$:
  \begin{equation*}
    \Compr{
      \prn{v_1, \dots, v_n, w}
    }{
      \prn{w \ge v_1 \ge \dots \ge v_n}
      \lor
      \prn{v_1 \ge w \ge \dots \ge v_n}
      \lor
      \dots
      \lor
      \prn{v_1 \ge \dots \ge v_n \ge w}
    }
  \end{equation*}

  Let us denote the condition $v_1 \ge \dots \ge v_k \ge w \ge v_{k + 1} \ge v_n$ by $\Phi_{n,k}$
  with $\Phi_{n,0}$ being $w \ge v_1 \ge \dots \ge v_n$. We wish to show that for any element
  $\DeclVar{\tilde{x}}{\GM}{\tilde{X}}$ and map $\DeclVar{f}{\GM}{\Delta^n \times \Int \to X}$ along
  with a path $\DeclVar{p}{\GM}{\pi\,\tilde{x} = f\prn{0, \dots, 0}}$ that there exists a unique
  extension of $f$ to $\tilde{f}$. We may phrase this as constructing an element of the following type:
  \[
    \IsContr\prn{
      \prn{t : \Delta^n \times \Int} \to
      \Sum{\tilde{y} : \tilde{X}} \Sum{q : \pi\prn{\tilde{y}} = f\prn{t}}
      t = 0 \to \Sum{r : \tilde{x} = \tilde{y}} p = \pi\prn{r} \bullet q
    }
  \]

  Since we are therefore constructing a map out of a series of disjunctions (and constructing an
  identification of such maps) it suffices to show that this proposition is inhabited for each
  subtype $\Compr{\prn{v_1, \dots, v_n, w}}{\Phi_{n,k_0} \land \dots \land \Phi_{n,k_i}}$ for each
  sequence $i$ and $k_0 \le k_1 \dots \le k_i$. Calculation reveals that the intersection
  $\Phi_{n,k_0} \land \dots \land \Phi_{n,k_i}$ is necessarily a sub-simplex of $\Delta^n \times
  \Int$. Consequently, unfolding definitions, constructing such a map on this subtype is precisely
  the same as showing that $\tilde{X} \to X$ is globally orthogonal $\brc{0} \to \Delta^{k}$ for
  some $k$, our assumption.

  All told then, it suffices to show the following canonical map is an equivalence:
  \[
    \Modify[\GM]{\Delta^m \to \tilde{X}}
    \to
    \Modify[\GM]{
      \prn{\Delta^m \to X} \times_{X} \tilde{X}
    }
  \]

  By \cref{ax:simplicial-stability}, we may ``remove the $\Simp$'' from $X$ and $\tilde{X}$ and so
  this type is equivalent to the following:
  \[
    \Modify[\GM]{\Delta^m \to \Sum{A : \Space} A}
    \to
    \Modify[\GM]{
      \prn{\Delta^m \to \Space} \times_{\Space} \prn{\Sum{A : \Space} A}
    }
  \]
  This, finally, is an equivalence because $\prn{\Sum{A : \Space} A} \to \Space$ is
  covariant (\cref{thm:covariance:amazing-implies-ordinary}).
\end{proof}
\begin{corollary}
  \label{cor:space:coeq}
  $\Space$ is closed under coequalizers in $\Uni[\Simp]$.
\end{corollary}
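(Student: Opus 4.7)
The plan is to assemble the corollary from two ingredients already developed in the paper: the closure of $\IsACov$ under coequalizers, and the simplicial exchange lemma. The key observation is that the coequalizer of $f,g : A_0 \to A_1$ in $\Uni[\Simp]$ is computed by first forming the higher-inductive coequalizer $Q = \Coeq\prn{f,g}$ in $\Uni$ and then reflecting it into simplicial types via $\Simp$, since $\Simp$ is an idempotent monad whose essential image is $\Uni[\Simp]$.

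The first step is to apply Theorem~\ref{thm:covariance:acov-closure}(5) to the coequalizer diagram: since $A_0$ and $A_1$ lie in $\Space$, they are in particular amazingly covariant, and the theorem gives $\IsACov\prn{Q}$. Hence $Q$ lifts to an element of $\Uni[\ACov]$. The second step is to pass through the simplicial reflection, forming $\eta\prn{Q} : \Simp\,\Uni[\ACov]$ and then applying Lemma~\ref{lem:space:simplicial-exchange}, which tells us that $\DepSimp\,\Proj$ carries any element of $\Simp\,\Uni[\ACov]$ into $\Space$. The image of $\eta\prn{Q}$ under this composite is precisely $\Simp\,Q$, so $\Simp\,Q$ inherits a lift to $\Space$ as required.

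The only genuine subtlety is the identification $\DepSimp\prn{\eta\prn{Q}} \simeq \Simp\,Q$ on underlying types, which amounts to naturality of $\eta : \Uni[\ACov] \to \Simp\,\Uni[\ACov]$ with respect to the forgetful map $\Uni[\ACov] \to \Uni$ and the universal property of the simplicial reflection at the universe level. This is routine rather than a real obstacle, so once it is in hand the corollary follows immediately from stringing Theorem~\ref{thm:covariance:acov-closure}(5) and Lemma~\ref{lem:space:simplicial-exchange} together.
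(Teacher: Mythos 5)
Your proof is correct and matches the paper's argument: it combines Theorem~\ref{thm:covariance:acov-closure}(5) to show $\IsACov(\Coeq(f,g))$ with Lemma~\ref{lem:space:simplicial-exchange} to conclude that $\Simp\Coeq(f,g)$ lands in $\Space$, then identifies $\Simp\Coeq(f,g)$ as the coequalizer in $\Uni[\Simp]$ via the reflective subuniverse machinery from \citet{rijke:2020}. The only difference is that you unpack the routing through $\eta$ and $\DepSimp$ explicitly, whereas the paper states this in one line and defers to \citet{rijke:2020}.
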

\begin{proof}
  By \cref{thm:covariance:acov-closure}, $\Uni[\ACov]$ is closed under coequalizers $\Coeq\prn{f,g}$
  and so \cref{lem:space:simplicial-exchange} ensures the $\Simp \Coeq\prn{f,g}$ lands in $\Space$
  as well. By \textcite{rijke:2020}, this is the coequalizer in $\Uni[\Simp]$.
\end{proof}

\subsection{\texorpdfstring{$\Space$}{S} is directed univalent, Segal, Rezk, and simplicial}

We are now able to show that $\Space$ satisfies all the desired properties for a universe of
groupoids. We begin by showing that we have, at last, constructed a directed univalent universe.

First, we note that \cref{def:intro:dua} merely states that there is some isomorphism between two
types. We are already in a position to construct one of these two maps:
\begin{lemma}
  There is a function $\MorToFun$ from $\Int \to \Space$ to $\Sum{A\,B : \Space} A \to B$.
\end{lemma}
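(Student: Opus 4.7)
The plan is to define $\MorToFun(F) \defeq (F\,0, F\,1, F(\iota))$, where $\iota : 0 \to_\Int 1$ is the canonical arrow given by the identity function $\lambda i.\,i : \Int \to \Int$ paired with the two reflexivity proofs at the endpoints, and $F(\iota) : F\,0 \to F\,1$ is the induced transport map.

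First, evaluating $F : \Int \to \Space$ at $0$ and $1$ immediately yields $F\,0, F\,1 : \Space$, so the only substantive task is constructing the function $F\,0 \to F\,1$. To this end, I would compose $F$ with the forgetful map $\Space \to \Uni$ to obtain an ordinary type family $F_0 : \Int \to \Uni$. Since $F$ factors through $\Sum{A : \Uni}\IsACov(A)$, \cref{thm:covariance:amazing-implies-ordinary} (or equivalently the first half of \cref{lem:space:space-characterization}) tells us that $F_0$ is covariant. Applying \cref{lem:covariance:transport} to the arrow $\iota$ then produces the transport function $F_0(\iota) : F_0\,0 \to F_0\,1$, which is precisely the map we need.

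No step here presents a genuine obstacle: the entire content of the lemma is that amazing covariance descends to ordinary covariance, which in turn supplies transport along the generic arrow $\iota$. The real work---establishing that $\MorToFun$ is actually an \emph{equivalence}---is the subject of directed univalence proper and will rest on \cref{lem:space:extension}, \cref{lem:space:simplicial-exchange}, and the characterization of amazingly covariant families rather than on this construction.
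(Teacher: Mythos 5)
Your proposal is correct and matches the paper's proof essentially verbatim: the paper also composes with the forgetful map, invokes \cref{thm:covariance:amazing-implies-ordinary} to obtain a covariant family $F_0$, and defines $\MorToFun\prn{F} = \prn{F_0\,0, F_0\,1, F_0\,\ArrId{}}$ with the last component given by the transport map of \cref{lem:covariance:transport} along the identity arrow. Your $\iota$ is exactly the paper's $\ArrId{}$, so there is nothing further to add.
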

\begin{proof}
  Given $F : \Int \to \Space$, by \cref{thm:covariance:amazing-implies-ordinary} this induces a
  covariant family $F_0 : \Int \to \Uni$. We then define
  $\MorToFun\prn{F} \defeq \prn{F_0\,0,F_0\,1,\Coe_F}$ where the last component is induced by
  \cref{lem:covariance:transport}.
\end{proof}

\begin{theorem}[Directed univalence]
  \label{thm:space:dua}
  The function $\MorToFun$ is an equivalence.
\end{theorem}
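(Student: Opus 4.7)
The plan is to build an explicit inverse to $\MorToFun$ by sending $(A, B, f)$ to a \emph{cograph} family $\Con{Cograph}(A,B,f) : \Int \to \Space$ whose fibers at $0$ and $1$ are $A$ and $B$ and whose forward transport is $f$, then verifying both compositions are identities using the endpoint-testing lemma \cref{lem:space:extension}.

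Concretely, for $(A, B, f)$, I define
\[
  \Con{Cograph}(A,B,f)(i) \defeq A \sqcup_{A \times (i = 1)} (B \times (i = 1)),
\]
the pushout whose left leg is first projection and whose right leg sends $(a, p)$ to $(f(a), p)$. This family lands in $\Space$ because the proposition $i = 1$ lies in $\Space$, $\Bool$ lies in $\Space$ (being discrete and global, hence simplicial by \cref{thm:ttt:global-disc-is-simplicial} and amazingly covariant by \cref{thm:covariance:acov-closure}), $\Space$ is closed under $\Sigma$-types, and it admits coequalizers by \cref{cor:space:coeq}; these suffice to construct coproducts and pushouts internally as higher inductive types. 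At $i = 0$ the proposition $i = 1$ is empty by \cref{ax:global-points}, so the pushout collapses to $A$; at $i = 1$ it becomes $A \sqcup_A B$ along the identity and $f$, which evaluates to $B$. A direct pushout calculation then shows that the transport of this family along $\ArrId{} : 0 \to_\Int 1$ is precisely $f$, and univalence upgrades this to $\MorToFun \circ \Con{Cograph}$ being the identity on $\Sum{A, B : \Space} A \to B$.

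For the other composite, given $F : \Int \to \Space$, I construct a natural transformation
\[
  \alpha_i : \Con{Cograph}(F(0), F(1), F(\ArrId{}))(i) \to F(i)
\]
via the universal property of the pushout, using the transport $F(0) \to F(i)$ from the covariance of $F$ along $0 \le i$ together with the identity $F(1) \to F(i)$ on the region $i = 1$; these agree on the overlap because $F(\ArrId{})$ is itself the transport of $F$ across the full interval. To show $\alpha$ is a pointwise equivalence I apply \cref{lem:space:extension} with $\ell = 1$: it suffices to check invertibility at the endpoints $i = 0, 1$, where $\alpha$ reduces to the identity by the pushout computations above. Univalence then converts this pointwise equivalence into the desired equality of families $\Int \to \Space$.

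The main obstacle will be the internal construction of pushouts in $\Space$: the available closure results directly give $\Sigma$-types and coequalizers, while coproducts require the observation that $\Bool \in \Space$, and pushouts then require a standard-but-fiddly HIT encoding. Once the cograph is in place, the rest is routine manipulation with its universal property and the extension lemma.
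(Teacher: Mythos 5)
Your proof is correct but takes a genuinely different route from the paper's. The paper defines the inverse as a $\Sigma$-based glue, $\Glue\prn{A,B,f}\,i \defeq \Sum{b : B}\, i = 0 \to f^{-1}\prn{b}$, and shows it lands in $\Space$ via closure under $\Sigma$, identity types, $i = 1$, and the $\Modify[\OM]{-}$-modalized $\Pi$-closure combined with \cref{ax:op-of-int-is-int} to handle the contravariant domain $i = 0$. Your $\Con{Cograph}\prn{A,B,f}\prn{i} = A \sqcup_{A \times \prn{i=1}} \prn{B \times \prn{i=1}}$ is instead a pushout/mapping-cylinder, reaching $\Space$ through $\Bool$-indexed coproducts, $\Sigma$-closure, and the $\Simp$-reflected coequalizer of \cref{cor:space:coeq}. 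The two families are not pointwise equal, but both interpolate $A$ at $0$ to $B$ at $1$ with transport $f$, which is all that matters; and both handle the harder composite by reducing to endpoint-checking via \cref{lem:space:extension} with $\ell = 1$, so the key technology is shared. Your version trades the $\Modify[\OM]{-}$-modality juggling needed for the $\Pi$-type over $i = 0$ for a more elaborate internal cocompleteness argument. One detail to flag: the pushout must be taken as the $\Simp$-reflection of the HIT, as \cref{cor:space:coeq} only yields coequalizers in $\Uni[\Simp]$; this is harmless for your endpoint and transport calculations because $A$, $B$ are already simplicial, and the universal property you invoke for the second composite persists for maps into the simplicial type $F\prn{i}$.
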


Prior to proving this result, we will construct a putative inverse to $\MorToFun$.

\begin{definition}
  Given $A,B : \Space$ and $f : A \to B$, $\Glue\prn{A,B,f} : \Int \to \Space$ is
  $\lambda i.\,\Sum{b : B} i = 0 \to f^{-1}\prn{b}$.
\end{definition}

$\Glue$ is the directed version of the glue type from cubical type
theory~\parencite{cohen:2017,sattler:2017} and is inspired directly from the construction used in
\textcite{weaver:2020} in their construction of a directed univalent universe.\footnote{Not to be
confused with \emph{Artin gluing} from categorical logic.} In our case, we have no need to add it as
a primitive in our setting: this was necessary in (bi)cubical type theory to achieve certain
definitional equalities, but we are pervasively working up to equivalence. We note that
$\Glue\prn{f}$ factors through $\Space$ by virtue of (2--4,6) of \cref{thm:covariance:acov-closure}
along \cref{ax:op-of-int-is-int} which ensures that $\Modify[\OM]{\neg j = 1} = \prn{j = 0}$.
We also record a few elementary fact about $\Glue$ below:

\begin{lemma}
  \label{lem:space:glue-facts}
  Given $A,B,f$ as above, $\Glue\prn{A,B,f}\,0 = A$, $\Glue\prn{A,B,f}\,1 = B$, and
  $\Coe_{\Glue\prn{A,B,f}} = f$.
\end{lemma}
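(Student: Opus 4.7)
The plan is to unfold $\Glue(A,B,f)$ at $0$, at $1$, and along the identity homomorphism on $\Int$, and to simplify each according to the behavior of the proposition $i = 0$.

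For the first claim, at $i = 0$ the type $0 = 0$ is contractible, so $(0 = 0 \to f^{-1}(b)) \simeq f^{-1}(b)$, and thus $\Glue(A,B,f)\,0 \simeq \Sum{b : B} f^{-1}(b) \simeq A$ by the standard equivalence identifying the total space of the fibers of $f$ with its domain. For the second claim, \cref{ax:global-points} ensures that $\Bool \to \Int$ is injective, so $1 = 0$ is empty; hence $(1 = 0 \to f^{-1}(b))$ is contractible, giving $\Glue(A,B,f)\,1 \simeq \Sum{b : B} \Unit \simeq B$.

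The main work is the third equality. Because $\Glue(A,B,f) : \Int \to \Space$, the family is covariant by \cref{lem:space:space-characterization}, so \cref{lem:covariance:transport} produces a transport map along $\ArrId{} : 0 \to 1$ which, under the equivalences above, corresponds to some function $A \to B$; the task is to identify it with $f$. I would invoke the uniqueness clause of \cref{lem:covariance:transport}: for each $a : A$, the term $\lambda i.\,(f(a),\,\lambda q.\,(a,\mathsf{refl}))$ defines a section of $\Glue(A,B,f)$ over $\Int$ whose value at $i = 0$ corresponds to $a$ under the first equivalence (it projects to $(a,\mathsf{refl})$ when applied to $\mathsf{refl}$) and whose value at $i = 1$ corresponds to $f(a)$ under the second (its first component is $f(a)$). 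Hence the transport acts by $a \mapsto f(a)$, as required.

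The only real subtlety is that the candidate section is indeed well-typed in the Glue family: for each $i : \Int$, the inner function must send a hypothetical proof $q : i = 0$ to an element of $f^{-1}(f(a))$, but since the body ignores $q$ this is automatic. After that, matching the endpoints with the equivalences from the first two parts is a routine computation that I would leave unwritten.
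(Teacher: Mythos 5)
Your proof is correct. The paper states \cref{lem:space:glue-facts} without proof, and your argument supplies exactly the expected details: the endpoint computations use that $\Int$ is an h-set (so $0=0$ is contractible) and that $0\neq 1$ (\cref{ax:global-points}), and the identification of the transport with $f$ follows from the uniqueness clause in \cref{lem:covariance:transport} applied to your explicit section $\lambda i.\,(f(a),\lambda q.\,(a,\mathsf{refl}))$, which is well-typed and has the right endpoints up to the equivalences from the first two parts.
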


We now return to the proof of \cref{thm:space:dua}.
\begin{proof}[Proof of \cref{thm:space:dua}]
  We will prove that $\Glue$ forms a quasi-inverse to $\MorToFun$ and thereby conclude that
  $\MorToFun$ is an equivalence. We must therefore prove (1) $\MorToFun \circ \Glue = \ArrId{}$ and
  (2) $\Glue \circ \MorToFun = \ArrId{}$. (1) follows from direct calculation and
  \cref{lem:space:glue-facts}, so we will detail only (2).

  Suppose we are given $F : \Int \to \Space$. We must show that $F = \Glue\prn{\MorToFun\prn{F}}$ or
  equivalently, using the fact that $\Space$ is univalent, that there is an equivalence
  $\alpha : \prn{i : \Int} \to F\prn{i} \Equiv \Glue\prn{\MorToFun\prn{F}}\,i$. To prove this, we
  will begin by constructing $\alpha$ and then use \cref{lem:space:extension} to reduce to checking
  that $\alpha$ is an equivalence at $0$ and $1$.  It is helpful to do this in stages and so we
  begin by supposing $i : \Int$ and $z : F\prn{i}$ and define $\alpha$ as follows for some $X$ and
  $Y$ to be determined:
  \[
    \alpha\,i\,z = \prn{X : F\prn{1}, Y : i = 0 \to \Coe_{F}^{-1}\prn{X}}
  \]
  We will construct $X$ and $Y$ separately.

  We can substantiate $X$ immediately: $\Coe_{F\prn{- \lor i}} : F\prn{i} \to F\prn{1}$ and so we
  choose $X \defeq F\prn{- \lor i}\,z$. This refines the type of $Y$ to $i = 0 \to
  \Coe_F^{-1}\prn{\Coe_{F\prn{- \lor i}}\,z}$. Assume $\phi : i = 0$ so that it suffices to define
  $Y.1 : F\prn{0}$ and $Y.2 : \Coe_F\,Y.1 = \Coe_{F\prn{- \lor i}}\,z$. Using $\phi$, we may suppose
  that $z : F\prn{0}$ and that the type of $Y.2$ is $\Coe_F\,Y.1 = \Coe_F\,z$
  (since $0 \lor - = \ArrId{}$). After this, $Y.1 \defeq z$ and $Y.2 \defeq \Refl$ suffices.

  Finally, it is now straightforward to check that $\alpha\,0$ and $\alpha\,1$ are equivalences
  using \cref{lem:space:glue-facts}.
\end{proof}

The proof that $\Space$ is Segal is very similar to the proof of directed univalence, though not
quite a consequence of it. Since the proof is similar to \cref{thm:space:dua}, we provide only a
sketch.
\begin{lemma}
  \label{thm:space:segal}
  $\Space$ is Segal.
\end{lemma}
\begin{proof}[Proof sketch.]
  We must show that $\prn{\Delta^2 \to \Space} \to \prn{\Lambda^2_1 \to \Space}$ is an
  equivalence. We begin by noting that the codomain can be rewritten with \cref{thm:space:dua} as
  $T = \Sum{A\,B\,C : \Space} A \to B \times B \to C$. We only need to show that the
  forgetful map from $\prn{\Delta^2 \to \Space} \to T$ is an equivalence.

  This argument proceeds along the same lines as \cref{thm:space:dua} where we replace
  $\Int$ with $\Delta^2$: we introduce a variant of $\Glue$ which glues together three spaces along
  two maps and show that this procedure induces a quasi-inverse to the forgetful map
  $\prn{\Delta^2 \to \Space} \to T$. It is here that we require \cref{lem:space:extension} with
  $\ell = 2$ rather than $\ell = 1$.
\end{proof}
\begin{corollary}
  \label{cor:space:comp}
  Composition of the morphisms in $\Space$ is realized by ordinary function composition.
\end{corollary}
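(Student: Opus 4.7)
The plan is to construct an explicit 2-simplex $\tilde{H} : \Delta^2 \to \Space$ realizing the horn determined by $f$ and $g$ and whose long edge, after applying \cref{thm:space:dua}, is exactly $g \circ f$. To begin, recall that composition in a Segal type is defined via the unique filler of a horn. Given $f : A \to B$ and $g : B \to C$ in $\Space$, directed univalence turns these into arrows $F = \Glue(A,B,f)$ and $G = \Glue(B,C,g) : \Int \to \Space$. The Segal property (\cref{thm:space:segal}) extends $(F,G) : \Lambda^2_1 \to \Space$ uniquely to a 2-simplex $H : \Delta^2 \to \Space$ with $G \circ_\Space F \defeq \lambda i.\, H(i,i)$, and by uniqueness it suffices to exhibit \emph{any} filler $\tilde{H}$ whose long edge is $\Glue(A,C,g \circ f)$.

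I would take
\[
  \tilde{H}(i,j) \defeq \Sum{c : C}\,\bigl(j = 0 \to \Sum{b : g^{-1}(c)}\,(i = 0 \to f^{-1}(b))\bigr).
\]
This iterated $\Glue$-style construction lands in $\Space$ by the same reasoning used for $\Glue$ in \cref{sec:space}: the closure properties of \cref{thm:covariance:acov-closure} apply, together with \cref{ax:op-of-int-is-int} to handle the propositional fibers. The horn restrictions reduce by direct computation: for $\tilde{H}(i,0)$, the outer $j = 0 \to (-)$ collapses and contracting the singleton $\Sum{c:C} g(b) = c$ yields $\Sum{b : B}(i = 0 \to f^{-1}(b)) \simeq F(i)$; for $\tilde{H}(1,j)$, the identification $1 = 0 \simeq \False$ (from \cref{ax:global-points}) reduces the inner clause to $\Unit$, leaving $\Sum{c:C}(j = 0 \to g^{-1}(c)) \simeq G(j)$. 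For the long edge, the two copies of $(i = 0 \to -)$ collapse to one (since $i = 0$ is propositional), and contracting $b : g^{-1}(c)$ against $f^{-1}(b)$ identifies $\tilde{H}(i,i)$ with $\Sum{c : C}(i = 0 \to (g \circ f)^{-1}(c)) = \Glue(A,C,g\circ f)(i)$.

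With $\tilde{H}$ in hand, \cref{thm:space:segal} forces $\tilde{H} = H$, and \cref{thm:space:dua} paired with \cref{lem:space:glue-facts} then gives $\MorToFun(\lambda i.\, H(i,i)) = (A, C, g \circ f)$, as required. The main subtlety will be the bookkeeping to make the three restrictions of $\tilde{H}$ agree with the prescribed arrows on the nose rather than up to the evident equivalences listed above; since $\Space$ is univalent, however, these equivalences can be transported into identities before invoking Segal uniqueness, so this is an organizational rather than conceptual hurdle.
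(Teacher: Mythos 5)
Your proposal is correct and matches the paper's (implicit) argument: the paper derives \cref{cor:space:comp} from the proof of \cref{thm:space:segal}, whose quasi-inverse to the forgetful map $\prn{\Delta^2 \to \Space} \to \Sum{A\,B\,C : \Space}\prn{A \to B} \times \prn{B \to C}$ is exactly the two-stage $\Glue$ you write down, with the long edge computed to be $\Glue\prn{A,C,g \circ f}$ just as in your calculation. The edge computations and the appeal to uniqueness of Segal fillers are all as intended.
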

In particular, an invertible morphism corresponds via \cref{thm:space:dua} to an
equivalence. Combining this with ordinary univalence, we obtain:
\begin{corollary}
  $\Space$ is Rezk.
\end{corollary}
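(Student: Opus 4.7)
The plan is to combine the three main results already in hand: $\Space$ is Segal (\cref{thm:space:segal}), composition of morphisms in $\Space$ is ordinary function composition (\cref{cor:space:comp}), and $\Space$ is directed univalent (\cref{thm:space:dua}). Given that $\Space$ is Segal, to show it is Rezk it suffices to prove that for every $A, B : \Space$ the canonical map $\IdToIso : (A = B) \to (A \cong_\Space B)$ is an equivalence.

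First I would rewrite the codomain $A \cong_\Space B$. Since $A \cong_\Space B$ is the subtype of $A \to_\Space B$ consisting of invertible morphisms, directed univalence \cref{thm:space:dua} gives an equivalence $A \to_\Space B \Equiv (A \to B)$ that matches identity morphisms with the identity function. Moreover, by \cref{cor:space:comp} composition of morphisms in $\Space$ corresponds under this equivalence to ordinary function composition. Consequently a morphism is invertible in $\Space$ if and only if the corresponding function is a two-sided inverse, \ie{} an equivalence of types. Thus $A \cong_\Space B \Equiv (A \Equiv B)$, and this equivalence sends the identity morphism to the identity equivalence.

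Next I would apply ordinary Voevodsky univalence (\cref{ax:univalence}), which holds for $\Space$ as a subtype of $\Uni$ closed under the operations in question (\cref{thm:space:space}): it provides an equivalence $\Con{ua} : (A = B) \Equiv (A \Equiv B)$ sending $\Refl$ to the identity equivalence. Composing, we obtain an equivalence $(A = B) \Equiv (A \Equiv B) \Equiv (A \cong_\Space B)$ which maps $\Refl_A$ to the identity morphism on $A$.

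The only remaining task is to verify that this composite equals $\IdToIso$. Since $\IdToIso$ is defined by path induction and is completely determined by its value on $\Refl$, and our composite also sends $\Refl_A$ to $\ArrId{A}$, a one-line path induction identifies the two maps. I expect this final coherence check to be the most delicate step: one must track the way directed univalence (\cref{thm:space:dua}) and the Segal structure (\cref{thm:space:segal}) interact so that the notions of ``identity morphism'' coincide across the equivalences. Fortunately, the proof of \cref{thm:space:dua} was explicit enough via $\Glue$ and \cref{lem:space:glue-facts} that the identity case unfolds immediately, so no new ideas are required beyond the three results just cited.
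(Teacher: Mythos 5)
Your proposal is correct and follows the paper's own (very terse) argument exactly: the paper likewise observes that via \cref{thm:space:dua} and \cref{cor:space:comp} invertible morphisms in $\Space$ correspond to type equivalences, and then invokes ordinary univalence. Your additional care about checking that the composite equivalence agrees with $\IdToIso$ on $\Refl$ is a detail the paper leaves implicit, but it is the right (and standard) way to close the argument.
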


Our final result is that $\Space$ lands in the subuniverse of simplicial types.
\begin{lemma}
  $\Space$ is simplicial.
\end{lemma}
\begin{proof}
  By \textcite[Lemma 1.20]{rijke:2020}, it suffices to show that $\eta : \Space \to \Simp \Space$ has a
  retraction. By univalence, the composite of $\eta : \Uni[\Simp] \to \Simp \Uni[\Simp]$ followed by
  $\DepSimp : \Simp \Uni[\Simp] \to \Uni[\Simp]$ is the identity and so it suffices to show that
  both these maps restrict to $\Space$. That is, it suffices to show that
  $\DepSimp \circ \Proj : \Simp \Space \to \Uni$ factors through $\Space$. This is an immediate
  consequence of \cref{lem:space:simplicial-exchange}.
\end{proof}

We conclude by noting a few of the categorical properties $\Space$ enjoys:
\begin{lemma}
  $\Space$ is finitely complete and finitely cocomplete and satisfies \emph{descent}~\parencite[Chapter
  2]{rijke:phd}.
\end{lemma}
\begin{proof}[Proof Sketch]
  Finite completeness and cocompleteness are an immediate consequence of
  \cref{thm:space:space,cor:space:coeq} along with \cref{thm:space:dua} which implies that a \eg{},
  categorical limit in $\Space$ is an ordinary \HOTT{} limit of groupoids. To prove the descent
  properties, we must show that various limits and colimits commute appropriately. However, by
  \cref{thm:space:dua} once more, this is an immediate consequence of the fact that limits and
  colimits in \HOTT{} enjoy descent~\parencite{rijke:phd}.
\end{proof}

\section{Consequences of a directed univalent universe}
\label{sec:applications}

We now reap the rewards of our efforts in constructing $\Space$ and give a brief tour of the
consequences of this type. We show how directed univalence may be used to prove free theorems and
substantiate the structure homomorphism principle. We also use it to construct various
foundational example categories and lay the groundwork for the development of \emph{higher algebra}
within \TTT{}.

\subsection{Free theorems from naturality}

Directed univalence allows us to make a precise link between familiar parametricity
arguments~\parencite{wadler:1989} with the categorical naturality arguments that helped motivate
them. In particular, directed univalence implies that a function
$\alpha : \prn{A : \Space} \to F\prn{A} \to G\prn{A}$ is natural:
\begin{lemma}
  \label{thm:consequences:natural}
  If $F_0,F_1 : \Space \to \Space$ and $\alpha : \prn{A : \Space} \to F_0\prn{A} \to F_1\prn{A}$ then
  $\alpha\prn{B} \circ F_0(f) = F_1(f) \circ \alpha\prn{A}$ for any $f : A \to B$.
\end{lemma}
\begin{proof}
  Fix $A,B : \Space$ along with $f : A \to B$ and denote the corresponding morphism
  $G : \Int \to \Space$. Note that $\alpha \circ G$ is then a function
  $\prn{i : \Int} \to F_0\prn{i} \to F_1\prn{i}$. Applying \cref{thm:space:dua} once more, we note
  that $\alpha\prn{G\prn{i}} : F_0\prn{i} \to F_1\prn{i}$ is a morphism in $\Space$ for every
  $i$. Accordingly, $\alpha \circ G$ is equivalent to some $s : \prn{i\,j : \Int} \to H\,i\,j$ for
  some $H$ where $H\,i\,0 = F_0\,i$ and $H\,i\,1 = F\,i$. We visualize $H$ as:
  \[
    \DiagramSquare{
      nw = F_0\,0,
      sw = F_1\,0,
      ne = F_0\,1,
      se = F_1\,1,
      north = F_0,
      south = F_1,
      west = \alpha\prn{G\,0},
      east = \alpha\prn{G\,1},
      width = 4cm,
    }
  \]
  This commuting square is equivalently an equality between the composites $F_1$ and
  $\alpha\prn{G\,0}$ and $\alpha\prn{G\,1}$ and $F_0$. The conclusion then follows from
  \cref{cor:space:comp}.
\end{proof}

\polyid
\begin{proof}
  Fix $A : \Space$ and suppose we are given $a : A$. Applying \cref{thm:consequences:natural} to $f$
  and $\lambda \_.\,a$, we conclude that $f\,A\,\prn{a\,\star} = a\prn{f\,\ObjTerm{}\,\star}$. Since
  $f\,\ObjTerm{}\,\star = \star$ by the $\eta$ principle of $\ObjTerm{}$, $f = \lambda A\,a.\,a$.
\end{proof}

Nothing limits us to considering only operations $\Space \to \Space$. The same techniques scale to
multi-argument operations such as $\Space \times \Space \to \Space$ or even mixed-variance
operations such as $\Modify[\OM]{\Space} \times \Space \to \Space$:
\begin{lemma}
  If $\alpha : \prn{A\,B : \Space} \to A \times B \to A$ then $\alpha = \Proj[1]$.
\end{lemma}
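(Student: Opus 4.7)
The plan is to reduce this statement to a two-variable generalization of \cref{thm:consequences:natural}, mirroring the strategy used in the proof of \texttt{polyid} but with an extra argument. The key observation is that the first projection $\pi_1 : \Space \times \Space \to \Space$ (sending $(A,B)$ to $A$) is a perfectly good binary functor, so $\alpha$ can be viewed as a transformation between two binary functors on $\Space$, namely $(A,B) \mapsto A \times B$ and $(A,B) \mapsto A$.

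First I would extend \cref{thm:consequences:natural} to a two-argument version: given $F_0, F_1 : \Space \times \Space \to \Space$ and $\alpha : (A\,B : \Space) \to F_0(A,B) \to F_1(A,B)$, and morphisms $f : A \to A'$, $g : B \to B'$ in $\Space$, one has $F_1(f,g) \circ \alpha(A,B) = \alpha(A',B') \circ F_0(f,g)$. This is not a new theorem so much as an iteration: factor $(f,g) = (f, \ArrId{B'}) \circ (\ArrId{A}, g)$ and apply the one-variable naturality statement twice, once fixing the left coordinate at $A$ (using that $F_i(A,-)$ is a functor on $\Space$ via its lift $\Int \to \Space$) and once fixing the right coordinate at $B'$. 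Each application invokes \cref{thm:consequences:natural} and \cref{cor:space:comp} to recognize composition of morphisms in $\Space$ as ordinary function composition; stitching the two squares yields the desired naturality square.

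Next I would apply this with $F_0(A,B) = A \times B$ (amazingly covariant and simplicial by \cref{thm:covariance:acov-closure,thm:space:space}, hence in $\Space$) and $F_1(A,B) = A$. Specialize to $A = B = \ObjTerm{}$, and take arbitrary $A', B' : \Space$ with any $a : A'$, $b : B'$, choosing $f = \lambda \_.\,a : \ObjTerm{} \to A'$ and $g = \lambda \_.\,b : \ObjTerm{} \to B'$. Naturality then gives
\[
  \alpha(A', B', (a,b)) \;=\; \alpha(A',B') \circ (f \times g)\,(\star,\star) \;=\; f\bigl(\alpha(\ObjTerm{},\ObjTerm{},(\star,\star))\bigr) \;=\; a,
\]
where the middle equality is the naturality square evaluated at $(\star,\star)$ and the last equality uses the $\eta$-principle for $\ObjTerm{}$ to identify $\alpha(\ObjTerm{},\ObjTerm{},(\star,\star))$ with $\star$, so that $f(\star) = a$. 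By function extensionality this gives $\alpha = \Proj[1]$.

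I expect the only mildly subtle point to be the two-argument naturality lemma itself, specifically checking that projections and product functors genuinely give maps $\Space \times \Space \to \Space$ compatible with \cref{thm:space:dua}; but this is essentially bookkeeping, since directed univalence together with \cref{cor:space:comp} lets us read off both the action on morphisms and the compatibility with composition from the underlying ordinary functions. Everything else is routine once this generalized naturality is in hand.
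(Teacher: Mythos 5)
The paper gives no proof of this lemma, only the remark that ``the same techniques scale'' to multi-argument operations, so there is nothing explicit to compare against; but your argument does precisely what that remark advertises, and it is correct. The reduction of two-variable naturality to two applications of \cref{thm:consequences:natural} via the factorization $(f,g) = (f,\ArrId{}) \circ (\ArrId{},g)$ is sound (it also uses, implicitly, that $F_i(f,g) = F_i(f,\ArrId{}) \circ F_i(\ArrId{},g)$, which holds since $\Space$ is Segal and the functorial action factors through the interchange $\Int\times\Int \to \Space$), and the endgame at $A=B=\ObjTerm{}$ using $\eta$ for $\ObjTerm{}$ exactly mirrors the proof of \texttt{polyid}. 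One small simplification worth noting: you can avoid proving two-variable naturality at all by fixing $B:\Space$ and applying \cref{thm:consequences:natural} just once to the single-variable transformation $\alpha(-,B) : (A:\Space) \to A\times B \to A$ between $\lambda A.\,A\times B$ and $\lambda A.\,A$; taking $A=\ObjTerm{}$, $f = \lambda\_.\,a$, and evaluating at $(\star,b)$ gives $\alpha(A',B)(a,b) = f(\star) = a$ directly. Either route is fine; yours matches the paper's framing more closely, while the one-variable version keeps the bookkeeping to a minimum.
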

\begin{lemma}
  If $\DeclVar{A,B}{\GM}{\Space}$ and
  $\alpha : \prn{\DeclVar{C}{\OM}{\Space}} \to A^{\Modify[\OM]{C}} \to B^{\Modify[\OM]{C}}$ then
  $\alpha = \lambda\_g.f \circ g$ for some $f : A \to B$.
\end{lemma}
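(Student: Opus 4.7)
The plan is to prove a modal Yoneda-style theorem. First, I define a candidate function $f : A \to B$ by instantiating $\alpha$ at a specific choice of $C$. Since $A$ is crisp ($A : \GM$) and $\GM \circ \OM = \GM$, the type $\Modify[\OM]{A}$ is formable in context $\Gamma/\OM$. Setting $C \defeq \Modify[\OM]{A}$ inside $\alpha$ and using $\OM \circ \OM = \ArrId{}$ to rewrite $\Modify[\OM]{\Modify[\OM]{A}} = A$ gives $\alpha\,\Modify[\OM]{A} : \prn{A \to A} \to \prn{A \to B}$, and I define $f \defeq \alpha\,\Modify[\OM]{A}\,\ArrId{A}$.

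For arbitrary $\DeclVar{C}{\OM}{\Space}$ and $g : \Modify[\OM]{C} \to A$, I will show $\alpha\,C\,g = f \circ g$ via a contravariant naturality principle for $\alpha$. The idea is that $g$ corresponds to a morphism $\tilde{g}$ from $\Modify[\OM]{A}$ to $C$ in $\Modify[\OM]{\Space}$, i.e., a path $\Int \to \Modify[\OM]{\Space}$ with endpoints $\Modify[\OM]{A}$ and $C$. Such a path is obtained via the chain of equivalences $\prn{\Int \to \Modify[\OM]{\Space}} \Equiv \Modify[\OM]{\Modify[\OM]{\Int} \to \Space} \Equiv \Modify[\OM]{\Int \to \Space}$, using the transposition principle (\cref{lem:ttt:transpose}) and \cref{ax:op-of-int-is-int} (whose $\neg$-involution is the reason the source and target get swapped), combined with directed univalence (\cref{thm:space:dua}) to identify $\Int \to \Space$ with $\Sum{A\,B : \Space} A \to B$.

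The naturality square now arises by applying $\alpha$ pointwise along $\tilde{g}$: the composite $\alpha \circ \tilde{g}$ has type $\prn{i : \Int} \to F_0\prn{\tilde{g}\,i} \to F_1\prn{\tilde{g}\,i}$, where $F_0\prn{C} \defeq A^{\Modify[\OM]{C}}$ and $F_1\prn{C} \defeq B^{\Modify[\OM]{C}}$. The action of $F_0$ on $\tilde{g}$ is precisely postcomposition by $g$, sending $h : A \to A$ to $h \circ g : \Modify[\OM]{C} \to A$, because unfolding the transposition chain recovers $g$ from $\tilde{g}$; similarly for $F_1$. Following the strategy of \cref{thm:consequences:natural}, this yields a commuting square in $\Space$, and chasing $\ArrId{A}$ around it gives $\alpha\,C\,g = f \circ g$ with the help of \cref{cor:space:comp} to interpret composites.

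The main obstacle is rigorously deriving the contravariant naturality of $\alpha$ in the modal variable $C$, since \cref{thm:consequences:natural} is proven only for non-modal families $F_0, F_1 : \Space \to \Space$. Adapting that proof requires careful bookkeeping with the $\OM$-modality: one transposes across the $\OM \Adjoint \OM$ adjunction to convert $\alpha$ into a single morphism between $\Space$-valued families over $\Int \to \Modify[\OM]{\Space}$ and then replays the argument of \cref{thm:consequences:natural}. Once the naturality square is established, the remaining Yoneda-style diagram chase is identical to the classical $1$-categorical argument.
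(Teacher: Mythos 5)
Your proof is correct and follows exactly what the paper intends: the paper leaves this lemma unproved, stating only that the "same techniques scale" to mixed-variance operations, and your plan of applying the Yoneda trick ($f \defeq \alpha\,\ArrId{A}$ at $C$ corresponding to $A$) and then establishing contravariant naturality by composing $\alpha$ with the path $\tilde{g} : \Int \to \Modify[\OM]{\Space}$ is precisely that scaling. Your identification of the equivalence $\prn{\Int \to \Modify[\OM]{\Space}} \Equiv \Modify[\OM]{\Int \to \Space}$ via \cref{lem:ttt:transpose} and \cref{ax:op-of-int-is-int}, and the observation that $\neg$ is responsible for the endpoint swap, are the right key steps. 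One small remark: the "main obstacle" you flag is lighter than you suggest. The proof of \cref{thm:consequences:natural} never really uses that the domain of $F_0, F_1$ is $\Space$ rather than $\Modify[\OM]{\Space}$ — it only needs $\alpha \circ \tilde{g}$ to have the shape $\prn{i : \Int} \to F_0\prn{\tilde g\,i} \to F_1\prn{\tilde g\,i}$ and then applies \cref{thm:space:dua} in the \emph{codomain}. Once $g$ has been transposed to $\tilde{g} : \Int \to \Modify[\OM]{\Space}$, the square-building argument and the appeal to \cref{cor:space:comp} go through verbatim; the contravariance is absorbed entirely into the construction of $\tilde{g}$ rather than requiring a genuinely new naturality lemma.
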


This methodology highlights the \emph{limitations} of naturality as a facsimile for parametricity:
for operations whose parameters are not used strictly co- or contravariantly, directed univalence
does not provide any free theorems. We leave it to future work to consider alternative universes of
correspondences~\parencite{ayala:2020} and what parametricity arguments they might provide.

\subsection{Full subcategories of \texorpdfstring{$\Space$}{S}}

A large number of important categories can be described as a \emph{full subcategories} of $\Space$.
To do this, we must first show how to obtain full subcategories inside of \TTT{}. Recall that a full
subcategory of $C_0$ of a category $\DeclVar{C}{\GM}{\Uni}$ is a category $C_0$ where objects
are a subset of those in $C$ but the morphisms and all the higher cells agree. In other words, a
full subcategory is described by a predicate $\MFn[\GM]{C}{\Prop_\Simp}$ which picks out those
objects which land in $C_0$.
\begin{definition}
  Given $\DeclVar{\phi}{\SM}{\MFn[\GM]{C}{\Prop_\Simp}}$, the resulting full subcategory
  $C_\phi$ is $\Sum{c : C} \Modify[\SM]{\phi\prn{c^\eta}}$.\footnote{In practice, $\phi$ will be
  $\GM$-annotated.}
\end{definition}

Here we for the first time have occasion to explicitly use the right adjoint $\SM$ to $\GM$. Let us
note that $C_\phi$ is a category because (1) categories are closed under dependent sums and (2)
$\Modify[\SM]{\phi\prn{c^\eta}}$ is a groupoid. Furthermore, we can prove that $C_\phi$ is
actually a full subcategory:

\begin{lemma}
  \label{thm:consequences:fullness} Given $C$ and $\phi$ as above, if $a,b : C_\phi$ then
  $\Hom[C_\phi]{a}{b} \Equiv \Hom[C]{\Proj[1]\,a}{\Proj[1]\,b}$.
\end{lemma}
\begin{proof}
  Unfolding definitions, it suffices to show the following two propositions are equivalent for all
  $f : \Int \to C$:
  \begin{gather*}
    \Modify[\SM]{\phi\prn{f^\eta\prn{0}}} \times \Modify[\SM]{\phi\prn{f^\eta\prn{1}}}
    \\
    \prn{i : \Int} \to \Modify[\SM]{\phi\prn{f^\eta\prn{i^\eta}}}
  \end{gather*}

  Since $\GM \Adjoint \SM$, we may use \cref{lem:ttt:transpose} to replace the second proposition
  with 
  $\Modify[\SM]{\prn{\DeclVar{i}{\GM}{\Int}} \to \phi\prn{f^\eta\prn{i}}}$. Finally,
  \cref{ax:global-points} tells us that $\Modify[\GM]{\Int}$ is equivalent to $\Bool$ and we may
  replace $\prn{\DeclVar{i}{\GM}{\Int}} \to \phi\prn{f^\eta\prn{i}}$ with
  $\prn{b : \Bool} \to \phi\prn{f^\eta\prn{\Con{if}\ b\ \Con{then}\ 0\ \Con{else}\ 1}}$ and
  conclusion follows.
\end{proof}

\begin{lemma}
  Given a category $\DeclVar{C}{\GM}{\Uni}$ and $\DeclVar{\phi}{\GM}{\MFn[\GM]{C}{\Prop_\Simp}}$
  then $\DeclVar{a}{\GM}{C}$ is an element of $C_\phi$ if and only if $\phi\prn{a}$ holds.
\end{lemma}

By choosing different predicates on $\Space$ we obtain a number of familiar
categories. For instance:
\begin{definition}
  The category of $n$-truncated groupoids $\Space_{\le n}$ is given by
  $\Space_{\Con{hasHLevel}\,(n+2)}$.\footnote{The correction $+2$ ensures that $\Space_{\le n}$
    comports with the standard indexing in homotopy theory which begins at $-2$, not $0$.} In
  particular, the category of \emph{propositions} is given by $\Space_{\le -1}$, and the category of
  \emph{sets} is given by $\Space_{\le 0}$.
\end{definition}

\NewDocumentCommand{\FinSet}{}{\mathcal{F}}
\begin{definition}
  The category of finite sets $\FinSet$ is given by $\Space_\phi$ where
  $\phi\prn{X} = \Sum{n : \Nat} \prn{X = \Nat_{\le n}}$.
\end{definition}
\noindent
Note that $\FinSet$ is quite different than $\Sum{A : \Space} \exists n.\,\Nat_{\le n} = A$, which
has only invertible morphisms. The definition of $C_\phi$ is necessary to ensure that $\phi$ is
applied only to the objects of $C$, not its higher cells.

\cref{thm:consequences:fullness} implies that these examples inherit directed univalence from
$\Space$, the first instance of the \emph{structure homomorphism principle
  (SHP)}~\parencite{weaver:2020}: homomorphisms in structured types coincide with their standard
analytic formulations and, consequently, all terms and types are functorial for these analytic
morphisms. For instance, a morphism in $\FinSet$ corresponds to an ordinary function and,
consequently, a family $F : \FinSet \to \Space$ has an action $F\prn{A} \to F\prn{B}$ for any
ordinary function $A \to B$.

\subsection{The structure homomorphism principle}
Not only full subcategories of $\Space$ enjoy SHP, in this section we survey other
categories which satisfy it as well. As a prototypical example, we consider pointed spaces,
$\Space_* = \Sum{A : \Space} A$:
\begin{lemma}
  \label{lem:consequences:pointed-spaces}
  Homomorphisms $\Hom[\Space_*]{\prn{A,a}}{\prn{B,b}}$ are equal to pointed functions
  $\Sum{f : A \to B} f\prn{a} = b$.
\end{lemma}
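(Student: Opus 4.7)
The plan is to unpack a homomorphism in $\Space_*$ as a morphism of $\Space$ together with a section, and then apply directed univalence and covariance in sequence. Unfolding $\Space_* = \Sum{X : \Space} X$, a homomorphism $(A,a) \to (B,b)$ amounts to a map $H : \Int \to \Space_*$ with the prescribed endpoints; writing $H(i) = (G(i), p(i))$ splits the data into a morphism $G : \Int \to \Space$ from $A$ to $B$ and a section $p : \prn{i : \Int} \to G(i)$ constrained at $0$ and $1$.

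By directed univalence (\cref{thm:space:dua}), the data of $G$ is equivalent to a function $f : A \to B$; using $\Glue(A,B,f)$ as a chosen representative, we may arrange $G(0) = A$, $G(1) = B$, and $G(\ArrId{}) = f$, so that the endpoint conditions on $p$ become $p(0) = a$ in $A$ and $p(1) = b$ in $B$. Unfolding the definition of a dependent homomorphism, such a section $p$ with prescribed endpoints is exactly an element of $\hom^G_{\ArrId{\Int}}(a, b)$.

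Since $G$ lands in $\Space$, \cref{lem:space:space-characterization} tells us that $G$ is a covariant family, so applying $\IsCovFam(G)$ at $x = \ArrId{\Int}$ and $a_0 = a$, the type $\Sum{a_1 : G(1)} (a \to_{\ArrId{\Int}} a_1)$ is contractible, with centre supplied by the transport $f(a)$ (via \cref{lem:covariance:transport}). Consequently a section $p$ with $p(0) = a$ and $p(1) = b$ is equivalent to an identification $f(a) = b$, and the equivalences chain together into $\Hom[\Space_*]{(A,a)}{(B,b)} \Equiv \Sum{f : A \to B} f(a) = b$.

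The main subtlety is verifying that the covariant transport along $\ArrId{\Int}$ in $G = \Glue(A,B,f)$ coincides with $f$ itself, so that the centre of the contractibility type is precisely $(f(a), -)$; this is a direct unwinding of $\Glue$ together with \cref{lem:space:glue-facts}, but it is the one computation on which the whole reduction hinges.
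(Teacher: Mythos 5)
Your proposal is correct and takes essentially the same route as the paper: the paper's proof likewise rests on the covariance of the projection $\Space_* \to \Space$ (which you access by pulling back along $G : \Int \to \Space$), combined with directed univalence to identify the base homomorphism with a function $f$ and \cref{lem:space:glue-facts} to see that transport along it is $f$ itself.
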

\begin{proof}
  By \cref{lem:space:space-characterization}, the projection map $\Space_* \to \Space$ is
  covariant, giving, for any pair of pointed spaces $(A,a_0)$ and $(B,b_0)$, an equivalence between
  homomorphisms from $a_0$ to $b_0$ lying over a homomorphism $f : A \to B$ and identifications
  $f(a_0) = b_0$.
\end{proof}

This same methodology can be applied to more general algebraic structures to yield categories of
\eg{}, monoids, groups, rings, \etc{} which we conjecture all enjoy SHP. Rather than dealing with this
generality, we will focus on monoids to complete the example given in \cref{sec:introduction}. We
recall the type of monoids:
\[
  \Con{Monoid} =
  \Sum{A : \Space_{\le 0}}
  \Sum{\epsilon : A}
  \Sum{\cdot : A \times A \to A}
  \Con{isAssociative}\prn{\cdot}
  \times \Con{isUnit}\prn{\cdot,\epsilon}
\]
By repeated application of the closure of categories under dependent sums, functions, and
equalities, we already conclude that $\Con{Monoid}$ is a category. Moreover, we can
characterize its homomorphisms.
\begin{lemma}
  \label{thm:consequences:monoid-dsip}
  A homomorphism
  $\Hom{\prn{A,\epsilon_A,\cdot_A,\alpha_A,\mu_A}}{\prn{B,\epsilon_B,\cdot_B,\alpha_B,\mu_B}}$ is
  precisely a standard monoid homomorphism \eg{} a function $A \to B$ commuting with multiplication
  and the unit.
\end{lemma}

\begin{proof}
To show that the type $\Con{Monoid}$ is a category, we proceed in several steps. Let us denote by $\Space_{\le 0,*}$ the category of pointed sets $\Sum{A : \Space_{\leq 0}} A$. We obtain define the type $\Con{Mag} \defeq \Sum{A : \Space_{\le 0,*}}{A \times A \to A}$ of \emph{magmas} via the pullback, showing that it is a category:
 \[
\begin{tikzpicture}[diagram]

  \node[pullback] (Magpt) {$\Con{Mag}$};
  \node[right=3cm of Magpt] (Setarr) {$\Space_{\le 0}^\Int$};

  \node[below=of Magpt] (Unit) {$\Unit$};
  \node[right=3cm of Unit] (Setpairs) {$\Space_{\ge 0} \times \Space_{\ge 0}$};

  \path[->] (Setarr) edge node[right] {$\prn{s,t}$} (Setpairs);
  \path[->] (Magpt) edge (Setarr);
  \path[->] (Magpt) edge (Unit);
  \path[->] (Unit) edge node[above] {$\prn{A \times A,A}$} (Setpairs);

\end{tikzpicture}
\]
 We get a natural projection from the type $\Con{Mag}_*$ of \emph{pointed magmas} which is also a covariant fibration as evinced by the following diagram:
 \[
\begin{tikzpicture}[diagram]

  \node[pullback] (Magpt) {$\Con{Mag}_*$};
  \node[pullback, right=3cm of Magpt] (Setpt) {$\Space_{\le 0,*}$};
  \node[right=3cm of Setpt] (Spacept) {$\Space_*$};

  \node[below=of Magpt] (Mag) {$\Con{Mag}$};
  \node[right=3cm of Mag] (Set) {$\Space_{\ge 0}$};
  \node[right=3cm of Set] (Space) {$\Space$};

  \path[->] (Magpt) edge (Mag);
  \path[->] (Setpt) edge (Set);
  \path[->] (Spacept) edge (Space);

  \path[->] (Magpt) edge (Setpt);
  \path[>->] (Setpt) edge (Spacept);

  \path[->] (Mag) edge (Set);
  \path[>->] (Set) edge (Space);

\end{tikzpicture}
\]
The type $\Con{Monoid}$ is then a subtype of $\Con{Mag}_*$, so $\Con{Monoid} \simeq \Con{Mag} \times_{\Con{Prop}} \Unit$, and hence $\Con{Monoid}$ is a category.

We compute the (free) morphisms in $\Con{Monoid}$ as follows:
\begin{align*}
  \Int \to \Con{Monoid} & \simeq \Int \to \Sum{A : \Space_{\leq 0}} \Sum{\varepsilon_A : A} \Sum{\mu_A : A \times A \to A} \Con{isAssociative}\prn{\mu_A}
  \times \Con{isUnit}\prn{\mu_A,\epsilon_A} \\
  \simeq  & \prn{E : \Int \to \Space_{\leq 0}} \times \prn{\epsilon : \Prod{i:\Int} E(i)} \times \prn{\mu : \Prod{i : \Int} E(i) \times E(i) \to E(i)} \\
  \times
  & \prn{\Con{isAssociative}\prn{\mu(i)}} \times \Con{isUnit}\prn{\mu(i),\epsilon(i)} \\
  \stackrel{\text{\Cref{thm:space:dua}}}{\simeq}  & \prn{E : \Int \to \Space_{\leq 0}} \times \prn{\epsilon : \Prod{i:\Int} E(i)} \times \prn{\mu : \Prod{i : \Int} \hom_{\Space_{\leq 0}}\prn{E(i) \times E(i),E(i)}} \\
  \times
  & \prn{\Con{isAssociative}\prn{\mu(i)}} \times \Con{isUnit}\prn{\mu(i),\epsilon(i)} \\
   \simeq  & \prn{E : \Int \to \Space_{\leq 0}} \times \prn{\epsilon : \Prod{i:\Int} E(i)} \times \prn{\mu : \Prod{i,j : \Int \times \Int \to \Space_{\leq 0}} \hom_{\Space_{\leq 0}}\prn{E(i) \times E(i),E(i)}} \\
  \times
  & \Prod{i:\Int} \prn{\Con{isAssociative}\prn{\lambda j.\mu(i,j)}} \times \Con{isUnit}\prn{\lambda j.\mu(i,j),\epsilon(i)} \\
  & \stackrel{\substack{\text{\Cref{thm:space:dua}} \\ \text{\Cref{lem:consequences:pointed-spaces}}}}{\simeq} \Sum{\substack{A,B : \Space_{\leq 0} \\ f : A \to B}} \Sum{\substack{\epsilon_A : A \\ \epsilon_B : B}} \prn{f(a) = b} \times \Sum{\substack{\mu_A : A \times A \to A \\ \mu_B : B \times B \to B}} f \circ \mu_A = \mu_B \circ (f \times f) \\ 
  & \times \Con{isAssociative}\prn{\mu_A} \times \Con{isAssociative} \prn{\mu_B} \times \Con{isUnit}\prn{\varepsilon_A} \times \Con{isUnit}\prn{\varepsilon_B}
\end{align*}
In the final step, we have additionally used the characterization of squares $\Int \times \Int \to A$ in a category $A$ as (homotopy) commutative squares of morphisms. For illustration, we can represent the square $\mu$ in terms of the family $E$ as follows:
\begin{center}
  \begin{tikzpicture}[diagram]

  \node (A) {$A$};
  \node[right=3cm of A] (B) {$B$};

  \node[below=2cm of A] (AA) {$A \times A$};
  \node[right=3cm of AA] (BB) {$B \times B$};

  \path[->] (A) edge node[above] {$E$} (B);

  \path[->] (AA) edge node[below] {$E \times E$} (BB);

  \path[->] (AA) edge node[left] {$\mu(0,-)$} (A);
  \path[->] (BB) edge node[right] {$\mu(1,-)$} (B);

\end{tikzpicture}
\end{center}
Overall, the claim follows by taking fibers of $\Con{Monoid}^\Int \to \Con{Monoid} \times \Con{Monoid}$, using the above characterization.

\end{proof}

Substituting \cref{thm:consequences:monoid-dsip} within \cref{thm:consequences:natural}, we obtain
the promised result:
\monoidnat

To complete our goal of proving $\Con{sum}$ natural automatically, it remains only to define
$\Con{List}$ as an endomap of monoids where $\Con{List}\,A$ has pointwise
multiplication. Remarkably, this is straightforward consequence of our results. One need only write
down the definition of this monoid in the ordinary way and conclude that it lifts to a functor
because the carrier ($\Con{List} = \Sum{n : \Nat} -^n$) is already known to be a functor
$\Space_{\le 0} \to \Space_{\le 0}$ using the closure under $\Sigma$ and $\Nat$; no
special argument is required.

We can also apply directed univalence to non-algebraic structures using our ability to define
$n$-presheaf categories $\PSH[n]{C} = \Modify[\OM]{C} \to \Space_{\le n}$. We consider the
representative example of partial orders, which we isolate as a full subcategory of a presheaf
category. In particular, we begin with the category of reflexive graphs:
$\RGraph = \PSH[0]{\Delta_{\le 1}}$ where $\Delta_{\le 1}$ is the ``walking fork'' given by the
pushout $\Delta^2 \Pushout{\Int} \Delta^2$ adjoining a pair of retractions $\partial_0,\partial_1$
to a single arrow $r : 1 \to 0$. While we have not ensured $\Delta_{\le 1}$ is a category, this does
not matter as $\RGraph$ is a category regardless.

We use directed univalence to characterize this category's \emph{objects} as well as its higher
structure:
\begin{lemma}
  The category $\RGraph$ is equivalent to
  $\Sum{G_0 : \Space_{\leq 0}} \Sum{G_1 : G_0 \times G_0 \to \Space_{\leq 0}} \Prod{x:G_0} G_1(x,x).$
\end{lemma}
\begin{proof}
  Using the universal property of a pushout,
  $\RGraph = \Space_{\le 0}^{\Delta^2} \times_{\Space^\Int} \Space_{\le 0}^{\Delta^2}$ and so
  repeated application of \cref{thm:space:dua,thm:space:segal} proves
  $\RGraph = \Sum{G_0\, G_1 : \Space_{\leq 0}} \Sum{s\,t : G_1 \to G_0}\Sum{r : G_0 \to G_1} sr = st$
  and the conclusion now follows from a standard argument.
\end{proof}

We isolate $\Pos \subseteq \RGraph$ as a full subcategory spanned by objects where $G_1$ is a
partial order:

\begin{definition}
  $\Pos = \RGraph_\phi$ where
  $\phi\prn{G} \defeq
  \Con{isASym}\prn{G_1}
  \times
  \Con{isTrans}\prn{G_1}
  \times
  \Prod{x,y : G_0} \IsProp\prn{G_1\prn{x,y}}$
\end{definition}
\cref{thm:consequences:fullness} now proves that homomorphisms in $\Pos$ are precisely monotone maps:
\begin{lemma}
  If $P,Q : \Pos$ then
  $\Hom[\Pos]{P}{Q} \Equiv \Sum{f : P_0 \to Q_0}{\prod_{x,y:P_0}P_1(x,y) \to Q_1(fx,fy)}$.
\end{lemma}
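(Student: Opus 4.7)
The plan is to reduce the problem to a computation in $\RGraph$ and then exploit the proposition-valuedness of the partial order relation. The first step is to apply \cref{thm:consequences:fullness} to the inclusion $\Pos \hookrightarrow \RGraph$, so that $\Hom[\Pos]{P}{Q} \Equiv \Hom[\RGraph]{\iota P}{\iota Q}$ where $\iota$ forgets the propositions encoding antisymmetry, transitivity, and thinness. The problem thus reduces to characterizing homomorphisms of reflexive graphs in a way that specializes nicely when the target is a partial order.

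Next I would characterize $\Hom[\RGraph]{G}{H}$ using the same techniques already deployed to describe the objects of $\RGraph$. Since $\RGraph$ was obtained as the pullback $\Space_{\le 0}^{\Delta^2} \times_{\Space^\Int} \Space_{\le 0}^{\Delta^2}$, its hom-spaces are computed by the corresponding pullback (limits in $\Space$ being computed as in $\HOTT{}$ by \cref{thm:space:space}). Repeated application of directed univalence (\cref{thm:space:dua}) and the Segal condition (\cref{thm:space:segal,cor:space:comp}) for $\Space_{\le 0}$ then identifies $\Hom[\RGraph]{G}{H}$ with the type of pairs $(f_0 : G_0 \to H_0,\, f_1 : G_1 \to H_1)$ together with proofs that $f_1$ commutes with source, target, and the reflexivity section. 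This is a purely formal unfolding once directed univalence is in hand.

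Now I would specialize to $G = \iota P$ and $H = \iota Q$, where $P_1$ and $Q_1$ are fiberwise propositional. Here the data of $f_1 : P_1 \to Q_1$ commuting with $s$ and $t$ is equivalent, by currying along the classifying map $P_1 \simeq \sum_{x,y:P_0} P_1(x,y)$, to a family $\prod_{x,y:P_0} P_1(x,y) \to Q_1(f_0 x, f_0 y)$, which is exactly the monotonicity data. The last thing to check is that the compatibility with the reflexivity section $r$ is automatic: for each $x : P_0$ one must identify the images of $r(x)$ in $Q_1$, but this lands in $Q_1(f_0 x, f_0 x)$ which is a proposition, so the equation is forced.

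The only potentially nontrivial step is the middle one, where I unfold $\Hom[\RGraph]{G}{H}$ into commuting data using directed univalence and Segal-ness; this requires a slightly careful bookkeeping of the pullback description of $\RGraph$ and the fact that hom-spaces of a limit of categories are the limit of the hom-spaces, but both are direct consequences of the results already established in \cref{sec:space}. Everything afterwards collapses because we are in the proposition-valued setting of posets.
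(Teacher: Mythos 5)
Your proposal is correct and takes essentially the same route as the paper: apply the fullness result (\cref{thm:consequences:fullness}) to pass from $\Pos$ to $\RGraph$, then unwind $\RGraph$-homomorphisms via the pullback/directed-univalence description and collapse the reflexivity-compatibility condition because $Q_1$ is fiberwise a proposition. The paper treats this lemma as an immediate consequence of \cref{thm:consequences:fullness} and the $\RGraph$ characterization and gives no further proof; your account simply fills in the intended bookkeeping, and the observation that the reflexivity constraint is forced by prop-valuedness is exactly the step needed to land on the stated $\Sigma$-type without an extra coherence component.
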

Finally, for the next subsection we isolate a category which is foundational to $\infty$-category
theory: the simplex category $\SIMP$ is the full subcategory $\Pos_\phi$ where
$\phi\prn{P} = \Sum{n : \Nat} P = \Delta^n$.

\subsection{First steps in synthetic higher algebra}

As homotopy (type) theorists like to quip: homotopy types are modern sets. Higher algebra seeks to
take this slogan a step further by studying groups, rings, modules, \etc{} in a world where homotopy
types have replaced sets. While higher algebra has numerous applications to algebraic topology,
algebraic K-theory, and algebraic geometry, it is also a notoriously technical: even the simplest
higher algebraic structure must account for an infinite tower of coherences for each imposed
equation. For our final application of $\Space$, we initiate the study of \emph{higher
  algebra}~\parencite{lurie:2017,gepner:2020} in \TTT{} by defining some of the central objects of
study.  We begin by defining the category of (homotopy-coherent and untruncated) monoids following
\textcite{segal:1974}.

\begin{definition}
  The \emph{category of coherent monoids} $\Con{Monoid}_\infty$ as the full subcategory of $\PSH{\Delta}$
  carved out by the following predicate (the Segal condition):
  \begin{align*}
    &\phi\prn{\DeclVar{X}{\GM}{\Modify[\OM]{\Delta}} \to \Space} =
    \\
    &\quad
    \IsContr\prn{(X(\Delta^0))}
    \times 
    \Prod{n : \Nat_{\ge 1}}
    \IsEquiv\prn{\gl{X\prn{\iota_k}_{k < n}} : X\prn{\Delta^{n}} \to X\prn{\Delta^{1}}^n}
  \end{align*}
  In the above, $\iota_k : \Delta^{1} \to \Delta^{n}$ is $\lambda i.\,\prn{1, \dots, 1, i, 0, \dots}$
  picking out $k$ copies of $1$.
\end{definition}

In other words, a coherent monoid is a functor $X : \Modify[\OM]{\Delta} \to \Space$ such that
$X\prn{\Delta^{n}}$ is the $n$-fold product of $X\prn{\Delta^{1}}$. While somewhat indirect, these
conditions encode all the necessary structure \eg{}, the \emph{unit} is given by the center of contraction $\varepsilon_X : X(\Delta^0)$, and \emph{multiplication} is
given by the composite map $\mu_X : X(\Delta^1)^2 \Equiv X(\Delta^2) \to X(\Delta^1)$.

As a small example of manipulating this definition, we prove the following:
\begin{lemma}
  The functor $\Con{Monoid}_\infty \to \Space$ induced by evaluation at $\Delta^{1}$ reflects
  isomorphisms.
\end{lemma}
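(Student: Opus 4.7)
The plan is to reduce invertibility in $\Con{Monoid}_\infty$ to pointwise invertibility in $\Space$ and then exploit the Segal condition. By \cref{thm:consequences:fullness}, the full subcategory inclusion $\Con{Monoid}_\infty \hookrightarrow \PSH{\Delta}$ preserves and reflects invertibility of morphisms, so it suffices to show that a morphism $\alpha : X \to Y$ of coherent monoids in $\PSH{\Delta}$ is invertible in $\PSH{\Delta}$ as soon as $\alpha_{\Delta^1}$ is an equivalence in $\Space$.

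The type $\PSH{\Delta}$ is itself a category by the closure results of \cref{thm:covariance:acov-closure,thm:space:space}, and its morphisms admit a pointwise description. Commuting the two abstractions in $\Int \to ((\DeclVar{c}{\OM}{\Delta}) \to \Space)$ (which is legitimate because the hypotheses live in different modes) presents a morphism $\alpha$ as a family, indexed by $\DeclVar{c}{\OM}{\Delta}$, of arrows $\Int \to \Space$. Applying \cref{thm:space:dua} converts this family into a family of functions $\alpha_c : X(c) \to Y(c)$. Since composition in $\PSH{\Delta}$ is pointwise function composition by \cref{cor:space:comp}, $\alpha$ is invertible iff each $\alpha_c$ is an equivalence in $\Space$, with the componentwise inverses assembling into a morphism of $\PSH{\Delta}$ by a standard naturality argument.

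Every object of $\Modify[\OM]{\Delta}$ has the form $\Delta^n$, so it remains to show that $\alpha_{\Delta^n}$ is an equivalence whenever $\alpha_{\Delta^1}$ is. The Segal condition defining $\Con{Monoid}_\infty$ asserts that the maps induced by the inclusions $\iota_0,\dots,\iota_{n-1} : \Delta^1 \to \Delta^n$ are equivalences $X(\Delta^n) \Equiv X(\Delta^1)^n$ and $Y(\Delta^n) \Equiv Y(\Delta^1)^n$. Naturality of $\alpha$ at each $\iota_k$ then identifies $\alpha_{\Delta^n}$, under these equivalences, with the $n$-fold product $\alpha_{\Delta^1}^n$, which is an equivalence whenever $\alpha_{\Delta^1}$ is.

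The main obstacle I anticipate is the careful modal bookkeeping in the middle paragraph: one must justify commuting $\Int \to -$ past the $\OM$-modal dependent product over $\Delta$ and then apply directed univalence ``pointwise'' under that modality, and verify that composition truly is pointwise in this modal exponential. Once this pointwise characterization is secured, the remainder is a routine combination of the Segal condition with the closure of equivalences under finite products.
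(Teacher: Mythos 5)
Your proposal is correct and follows essentially the same route as the paper: reduce via \cref{thm:consequences:fullness} to pointwise invertibility over $\Delta$, then use the Segal condition and naturality to identify $\alpha_{\Delta^n}$ with the $n$-fold product of $\alpha_{\Delta^1}$. The only difference is that your middle paragraph rederives the pointwise detection of isomorphisms in functor categories from directed univalence and \cref{cor:space:comp}, whereas the paper simply cites the corresponding result of Riehl--Shulman for natural transformations valued in a Segal type, which sidesteps the modal bookkeeping you flag.
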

\begin{proof}
  Given $f : X \to Y$, by \textcite{riehl:2017} and \cref{thm:consequences:fullness}, it
  suffices to show that if $f\prn{\Delta^{1}}$ is an isomorphism so is $f\prn{\Delta^{n}}$ for any $n$. By
  the Segal condition and naturality, $f\prn{\Delta^{n}}$ is equivalent to
  $\prn{f\prn{\Delta^{1}}}_{i \le n}$ which is invertible if $f\prn{\Delta^{1}}$ is an
  isomorphism.
\end{proof}

Once again, directed univalence yields that the morphisms in $\Con{Monoid}_\infty$ preserve the relevant structure.

\begin{proposition}
  \label{prop:consequences:cohmon}
  Let $\DeclVar{X,Y}{\GM}{\Con{Monoid}_\infty}$. Then $\Hom[\Con{Monoid}_\infty]{X}{Y}$ is equivalent to the type of natural transformations $X \to Y$. In particular, for $\DeclVar{F}{\GM}{\Hom[\Con{Monoid}_\infty]{X}{Y}}$ we have homotopies $F(\Delta^0)(\epsilon_X) = \epsilon_Y$ in $Y(\Delta^0)$, and in the category of spaces $\Space$ the following homotopy-commutative diagram:
  \[
    \DiagramSquare{
      nw = X(\Delta^1)^2 \simeq X(\Delta^2),
      sw = X(\Delta^1),
      ne =  Y(\Delta^2) \simeq Y(\Delta^1)^2,
      se = Y(\Delta^1),
      north = F(\Delta^2),
      south = F(\Delta^1),
      west = \mu_X,
      east = \mu_Y,
      width = 4.5cm
    }
  \]
\end{proposition}

\begin{proof}
  Analogously to \cite[Proposition 6.6]{riehl:2017}, $F$ is a natural transformation. Then preservation of the unit follows by contractibility, and preservation of multiplication follows by directed univalence and naturality.
\end{proof}

We can also define the category of coherent \emph{groups}:
\begin{definition}
  The \emph{category of coherent groups} $\Con{Grp}_\infty$ is the full subcategory of
  $\Con{Monoid}_\infty$ carved out by the predicate
  $\phi{\prn{\DeclVar{X}{\GM}{\Con{Monoid}_\infty}}} = \IsEquiv\prn{\lambda x\,y.\prn{x,\mu\prn{x,y}} : X\prn{\Delta^{1}}^2 \to X\prn{\Delta^{1}}^2}$.
\end{definition}

These concepts and many others can be unified through the formalism of ($\infty$-)operads but we
leave it to future work to develop this apparatus in \TTT{}. An application of such a
formalism would be the ability to develop higher algebra not just in $\Space$, but in
\emph{spectra}, another fundamental category in modern homotopy theory. We conclude this section by
constructing this category.

Suppose $C$ is a pointed category with pullbacks, \ie{}, $C$ has pullbacks and comes with an element
$0 : C$ which is simultaneously initial terminal and initial. Within $C$, we define the \emph{loop
  functor} $\Omega : C \to C$ by $\Omega \defeq \lambda x. 0 \times_x 0$. We have already
encountered such a pointed category: $\Space_*$.
\begin{definition}
  The \emph{category of spectra} $\Sp$ is defined as
  $\Lim_{n : \Nat}
  \prn{\dots \stackrel{\Omega}{\to} \Space_* \stackrel{\Omega}{\to} \Space_* \stackrel{\Omega}{\to} \Space_*}$.
\end{definition}
Here $\lim$ refers to the ordinary definition of a limit from \HOTT{} and we note that as the limit
of categories, $\Sp$ is itself automatically a category. Using directed univalence, we can easily
show that objects of $\Sp$ are infinite deloopings of a groupoid as
expected~\parencite{shulman:2013,vandoorn:phd}.

\section{Conclusions and related work}
\label{sec:conclusions}

We have introduced \TTT{}, an enhancement of simplicial type theory featuring modalities and a
relaxed interval type. We have used \TTT{} as a framework to construct a directed univalent universe
of groupoids $\Space$ which we have further proven to be a well-behaved category. Finally, we have
used $\Space$ as a jumping off point to construct numerous examples of categories and categorical
reasoning in \TTT{} relevant both to $\infty$-category theory and mechanized verification. In order
to do so, we have shown how our same modal operators can be used to \eg{}, construct full
subcategories.

\subsection{Related work}
While directed type theory generally and simplicial type theory specifically are relatively new
areas, there is already substantial work exploring the impact of a ``type theory where types are
categories.'' Much of this work focuses on either constructing such type theories%
~\parencite{%
  licata:2011,%
  warren:2013,%
  nuyts:2015,%
  riehl:2017,%
  north:2018,%
  kavvos:directed:2019,%
  nuyts:2020,%
  ahrens:2023,%
  neumann:2024,%
  neumann:2025%
}
or studying ``formal'' category theory within them%
~\parencite{%
  riehl:2017,%
  weinberger:twosided:2024,%
  weinberger:sums:2024,%
  buchholtz:2023,%
  bardomiano:2024,
  bardomiano:2025%
}
\ie{}, statements which do not use particular closed non-trivial categories but instead quantify
over arbitrary categories. This is distinct from our focus, which has been to combine essentially
off-the-shelf type theories~\parencite{riehl:2017,gratzer:mtt-journal:2021} and to use this combination
to prove facts about the concrete type $\Space$ and types derived thereof. Closely related to this
is the work by Cavallo, Riehl, and Sattler~\parencite{riehl:2018} and \textcite{weaver:2020}, who both
study directed univalence, in respectively simplicial and \emph{bicubical type theory} (\BCTT{}).

\paragraph{Alternative constructions of $\Space$}
Cavallo, Riehl, and Sattler give an alternative construction of $\Space$ in the intended model of
\STT{}, similar to the classical proof due to \textcite{cisinski:2019}. They have argued externally
that this subuniverse satisfies directed univalence and a version of
\cref{lem:space:space-characterization}. However, their work is strictly external and does not
consider how one might integrate $\Space$ within \STT{}. Given that both our universe and theirs
satisfy \cref{lem:space:space-characterization}, they are weakly equivalent and so our results
further show that their universe is \eg{}, a finitely (co)complete category and closed under various
connectives.

\paragraph{Bicubical type theory}
Most closely related to our work is the paper of \textcite{weaver:2020}. Here,
\citeauthor{weaver:2020} consider a variant of \STT{} based on two layers of cubical type theory and
construct a directed-univalent universe in this setting. Their system, \BCTT{}, uses two distinct
interval types: one to account for homotopy type theory and a further layer for the directed
interval. Bicubical type theory is therefore to \TTT{} as cubical type theory is to \HOTT. Moreover,
the approach used by \opcit{} to construct their universe directly inspired our own approach. In
particular, the definition of amazing covariance and our directed glue type are derived from closely
related constructions in \BCTT{}. Moreover, our \emph{directed homomorphism principle} is
elaborating on an idea proposed by \textcite{weaver:2020}.

The two systems, \TTT{} and \BCTT{}, differ in a number of ways. Most importantly, it is conjectured
that \BCTT{} can be formally presented\footnote{\textcite{weaver:2020} do not give a definition of
  \BCTT{} but instead describe the intended model for any such situation. Their model is, however,
constructive and so it is conjectured that such a definition would satisfy canonicity.} so as to
enjoy canonicity and normalization. On the other hand, \TTT{} certainly does not satisfy canonicity.
Thus, \BCTT{} is likely better suited for ``programming'' with directed univalence. However,
\BCTT{}'s categories and groupoids are not expected to be adequate for ordinary $\infty$-categories
or $\infty$-groupoids and so it is not obvious that it can be used for developing synthetic
$\infty$-category theory.

More fundamentally, while they also work within an internal language and we draw on their overall
strategy in \cref{sec:space}, theirs is the internal \emph{extensional} type theory of
$\PSH{\CUBE_{\mathsf{undirected}} \times \CUBE_{\mathsf{directed}}}$ and so they must not only
construct $\Space$ but also the model of base \HOTT{} around it. This substantially complicates some
of their constructions; their versions of \eg{}, covariance, $\Glue$ and so on include details that
are automatically handled when working pervasively with \HOTT.  This model falsifies
\cref{ax:cubes-separate} and so they must introduce an additional set of axioms (the cobar modality)
work around this. We believe both approaches to directed type theory warrant further consideration
to (1) study our results on top of base cubical type theory rather than \HOTT{} and (2) to translate
our new results to their setting. In particular, \opcit{} proves only that $\Space$ is directed
univalent and does not prove \eg{} \cref{thm:space:segal} but we believe our proof, along with those
results in \cref{sec:applications}, can be translated.

\paragraph{Other closely related type theories}
While not about directed type theory, \textcite{myers:2023} also consider a \HOTT{} for simplicial
spaces. We drew inspiration for some of our axioms (\eg{} \cref{ax:cubes-separate}) from them and
expect their other principles will prove useful to \STT{}. Furthermore, \textcite{cherubini:2023}
formulated a version of \cref{ax:sqc} to study synthetic algebraic geometry which led us to its
inclusion in \TTT{}. Finally, \textcite{riley:2024} presents a type theory with a single amazing right
adjoint whose syntax is well-adapted for this situation. We hope that \opcit{} can be generalized
for \TTT{} to yield more usable syntax.

\subsection{Future work}

We isolate three key directions for future work. First, we wish to extend the experimental proof
assistant \RZK{}~\parencite{kudasov:23} with the minimum level of modal reasoning (\eg{}, at least
$\Modify[\GM]{-}$, $\Modify[\SM]{-}$ and $\Modify[\OM]{-}$) to properly axiomatize and work with
$\Space$ as constructed in this paper. We hope to then use this to mechanize
\cref{sec:applications}. Related to this, we hope to give a constructive model of \TTT{} to give a
computational justification of our axioms. We expect this to contribute to a version of \TTT{} with
canonicity and normalization~\parencite{aagaard:2022,gratzer:normalization:2022}.

Second, in forthcoming work~\parencite{gratzer:2025a} we have generalized our construction of $\Space$ to construct the category of (small)
categories $\Cat$ and proven that it is suitably directed univalent~\parencite{cisinski:2022}. While
modalities were required to construct $\Space$, they are required to \emph{state} the properties
of $\Cat$; directed univalence becomes
$\Modify[\GM]{\Int \to \Cat} \Equiv \Modify[\GM]{\Sum{A : \Cat} \Sum{B : \Cat} A \to B}$ because
homomorphisms from $A$ to $B$ must be the \emph{groupoid} of the category of functors $A \to B$, not
the category. Aside from this, our results scale to this more general setting.

Finally, while we discussed presheaf categories in \cref{sec:applications}, we avoided describing
the Yoneda embedding $C \times \Modify[\OM]{C} \to \Space$. While it is possible to construct this
operation, it requires one additional modality (the twisted arrow construction) and, for reasons of
space, we have regretfully chosen to omit it in the present work. In subsequent
work~\parencite{gratzer:2025}\footnote{The chronology of this work is somewhat confused. The
  construction of $\Space$ was completed prior to the cited work on the Yoneda embedding. However,
  publication timing has meant that this second paper was published first.} we have detailed this
additional modality along with the resulting definition of the Yoneda embedding. Using this in
conjunction with our work on full subcategories, we are able to prove various important results
\eg{} that $\Space$ is cocomplete.

In additional forthcoming work~\parencite{gratzer:2025b} we are discussing cocompleteness and show the category of spectra to be stable.

\printbibliography

\end{document}